\newcommand{\esp}[2][\mathbb E] {#1\left[#2\right]}
\numberwithin{equation}{section}
\newcommand{\condespf}[2][\F_t]       {\mathbb E\left.\left[#2\right|#1\right]}
\newcommand{\condespg}[2][\G_t]       {\mathbb E\left.\left[#2\right|#1\right]}
\newcommand{\ud}{\mathrm d}
\newcommand{\norm}[1]		{\left\| #1 \right \|}
\newcommand{\R}{\R}
\newtheorem{theorem}{Theorem}[section]
\newtheorem{lemma}[theorem]{Lemma}
\newtheorem{defn}[theorem]{Definition}
\newtheorem{prop}[theorem]{Proposition}
\newtheorem{rem}[theorem]{Remark}
\def \H {{\mathcal H}}
\def \Hb {{\mathbb H}}
\def \G {{\mathcal G}}
\def \Gb {{\mathbb G}}
\def \F {{\mathcal F}}
\def \Fb {{\mathbb F}}
\def \P {{\mathbb P}}
\def \R {{\mathbb R}}
\def \N {{\mathbb N}}
\def \S {{\mathbb S}}
\def \I {{\mathbf{1}}}
\newcommand{\Ind}[1]{\mathbf{1}_{\left\{ #1 \right\}}}
\def \Pol {{\text{Pol}_n (E)}}
\newcommand{\mail}[1]{\href{mailto:#1}{\texttt{#1}}}
\title{Polynomial Diffusion Models for Life Insurance Liabilities}
\author{Francesca Biagini\footnote{Main affiliation: Department of Mathematics, LMU, Theresienstra{\ss}e, 39, 80333 Munich, Germany, Email: \mail{biagini@math.lmu.de}} \footnote{Secondary affiliation: Department of Mathematics, University of Oslo, Box 1053, Blindern, 0316, Oslo, Norway.}
\and Yinglin Zhang\footnote{Department of Mathematics, LMU, Theresienstra{\ss}e, 39, 80333 Munich, Germany, Email: \mail{zhang@math.lmu.de}}} 
\date{03 August 2016}
\begin{document}
\maketitle 

\begin{abstract}
	In this paper we study the pricing and hedging problem of a portfolio of life insurance products under the benchmark approach, where the reference market is modelled as driven by a state variable following a polynomial diffusion on a compact state space. Such a model can be used to guarantee not only the positivity of the OIS short rate and the mortality intensity, but also the possibility of approximating both pricing formula and hedging strategy of a large class of life insurance products by explicit formulas.\\ \ \\
	\textbf{JEL Classification:} C02, G10, G19\\ \ \\
	\textbf{Key words:} life insurance liability, polynomial diffusion, benchmark approach, stochastic mortality intensity, benchmarked risk-minimization.
\end{abstract}

\




\section{Introduction}

The goal of this paper is to analyse the problem of pricing and hedging portfolios of life insurance liabilities under the new approach of combining the \emph{benchmark methodology} and the existence of a \emph{polynomial diffusion} state variable which moves on a compact state space and drives the reference market. We consider on the market OIS bonds as well as longevity bonds, both modelled as function of the state variable representing the underlying risk factors, possibly including macro-economic variables, environmental and social indicators. In this way we also introduce a dependence structure between OIS short rate and mortality intensity. Recent studies, see e.g. \cite{Liu}, \cite{Dac} and \cite{Dee-Gra}, confirm the important role of this dependence structure when pricing life insurance liabilities.

The biggest advantage of polynomial diffusions is that they give explicit and tractable formula for conditional expectation of polynomial functions of state variable, see \cite{Fil-Lar}. In our model, we focus in particular on the case when the state variable takes value in a compact state space, which is studied in detail in \cite{Lar-Pul}. Under our model assumptions, the compactness of the state space guarantees the positivity of both OIS short rate and mortality intensity, as well as the possibility of using polynomial approximation for pricing and hedging life insurance liabilities. Nevertheless, the compactness of the state variable implies also the boundedness of OIS short rate and mortality intensity. However, the assumption of bounded short rate is not new to the literature (see e.g. \cite{Sch-Som}, \cite{Fle-Hug}, \cite{Del-Shi}, \cite{Ant-Spe}, \cite{Lim-Pri} etc.), as well as the assumption of bounded mortality intensity (see e.g. \cite{Li-Szi}). This last assumption is also supported by recent statistical studies of mortality rates (see e.g. \cite{Dem} and \cite{Jar-Kry}) and can be understood in terms of confidence region (see \cite{Li-Szi}).

Unlike \cite{Fil-Lar}, we do not assume the existence of a state price density which is equivalent to the existence of a martingale measure. 
We work under the historical or real world probability measure $\P$ and assume only the existence of a benchmark or numéraire portfolio, as proposed by E. Platen (see e.g. \cite{Pla}, \cite{Pla2}, \cite{Pla3}). As pointed out by \cite{Hul-Sch}, this assumption equals the no unbounded profit with bounded risk condition, which is weaker than the classic risk-neutral condition of no free-lunch with vanishing risk (see e.g. \cite{Del-Sch}) equivalent to the existence of a martingale measure. The hybrid market composed by both financial and insurance market is intrinsically incomplete due to the presence of additional orthogonal sources of randomness given by the mortality risk. The securitization under the benchmark approach can be resolved by the so called benchmarked risk-minimizing method, see e.g. \cite{Pla4}, \cite{Pla3}, \cite{Du-Pla}  and \cite{Bia-Cre} for a single payoff.  Here we extend the results in \cite{Bia-Cre} of hedging in incomplete markets under the benchmark approach to the case of assets with dividend payments and analyse its relation with the real world pricing formula.

The paper is structured as follows.
Section \ref{sec: state variable} provides a brief introduction of polynomial diffusions. Section \ref{sec: model assumptions} gives the basic model structure assumptions for a portfolio of life insurance policies under the intensity-based approach. A benchmark portfolio is assumed to exist. Benchmark portfolio, OIS bond and longevity bond are then modelled explicitly in terms of a possibly multi-dimensional state variable which follows a polynomial diffusion on a compact state space. 
The real world pricing formula is provided explicitly in Section \ref{sec: pricing and hedging insurance products} for all three building blocks of insurance products (pure endowment, term insurance, annuity) in the case of payoff given by polynomial functions of the state variable. We then show how these explicit results can be used to approximate more general forms of payment process. For the sake of simplicity, we derive explicitly the benchmarked risk-minimizing strategy and its polynomial approximation only for pure endowment contracts, since the other two cases are similar.
In Section \ref{sec: simulation} we use a 2-dimensional state variable and calibrate our model to MSCI and LLMA index under linear specification of the inverse of benchmark and the longevity index. We show how this parsimonious specification can already produce a good fit to market data.
In Appendix \ref{app: benchmarked risk-min payment process} we extend the real world pricing formula and the benchmarked risk-minimizing method to dividend payments and discuss their relationship.

\section{Polynomial diffusion}\label{sec: state variable}

In this section, we give a synthetic overview of the most important results for polynomial diffusions which will be used to model our market. For details about these processes, see \cite{Fil-Lar}. 

Let $E \subset \R^d$ be a compact set with non-empty internal part, called state space. Let $\S^d$ denote the space of real symmetric $d \times d$ matrices, and $\S^d_+$ the convex cone of positive semidefinite symmetric matrices. For a given $n \in \mathbb{N}$, we denote furthermore by $\Pol$ the following finite-dimensional vector space
\[
	\Pol := \{  \text{polynomials on $E$ of degree $\leqslant n$} \},
\]
and by $N_n$ the dimension of $\Pol$. We consider a fixed time horizon $[0,T]$ and an $E$-valued process $Z := (Z_t)_{t \in [0, T]}$, with the following dynamics
\begin{equation}\label{eq: dynamics Z}
	\ud Z_t = b (Z_t) \ud t + \sigma (Z_t) \ud W_t,
\end{equation}
realized on a filtered probability space $(\Omega, \F, \Fb, \P)$, with $\Fb := (\F_t)_{t \in [0,T]}$. $W:=(W_t)_{t \in [0,T]}$ is a $d$-dimensional $\Fb$-Brownian motion, $\sigma : \R^d \rightarrow \R^{d \times d}$ is a continuous function, ${a} := {\sigma}{\sigma}^\top $, ${a}$ and ${b}$ are two fixed maps
\[
	{a}: \R^d \rightarrow \S^d, \ \ \ \ {b}: \R^d \rightarrow \R^d,
\]
such that
\begin{equation}\label{eq: condition a, b}
	\left. {a}_{ij} \right|_E \in \text{Pol}_2(E), \ \ \ \ \left. {b}_{i} \right|_E  \in \text{Pol}_1(E), 
\end{equation}
for all $i, j = 1, ... , d$. The initial value $Z_0$ is assumed to be constant and belonging to $E$.

We consider the following operator $\mathbf{G}$ associated to the process $Z$ 
\[
	\mathbf{G}: \mathcal{C}^2(\R^d) \longrightarrow \R,
\]
defined by
\begin{equation}\label{operatore g}
	\mathbf{G} f(z) := \frac{1}{2} \text{Tr}({a}(z) \nabla^2 f(z)) + {b}(z)^\top \nabla f(z), \ \ \ \ z \in \R^d.
\end{equation}

\begin{defn}
	An $E$-valued process $Z$ satisfying (\ref{eq: dynamics Z}) is called a \emph{polynomial diffusion on $E$} if 
	\[
		\mathbf{G} \Pol \subseteq \Pol , \ \ \ \ \text{for all } \ \ n \in \N.
	\]
\end{defn}

Lemma 2.2 of \cite{Fil-Lar} shows that $Z$ is a polynomial diffusion on $E$ if and only if (\ref{eq: condition a, b}) holds. In such case, for every fixed $n \in \N$, its associated operator $\mathbf{G}$ admits a unique matrix representation $G_n \in \R^{N_n \times N_n}$ restricted to $\Pol$. That is, for each $p \in \Pol$ with coordinate representation
\begin{equation}\label{eq: coordinate representation p}
	p(z) = H_n(z)^\top \vec{p}, \ \ \ \ z \in \R^d,
\end{equation}
where $H_n(z)$ is a fixed basis vector of $\Pol$ and $\vec{p} \in \R^{N_n}$, we have
\[
	\mathbf{G} p(z) = H_n(z)^\top G_n \vec{p}, \ \ \ \ z \in \R^d.
\]

The following proposition gives one of the most important results for polynomial diffusions: the $ \Fb$-martingale generated by a polynomial function of $Z_T$ is again given by a polynomial function of the state variable with deterministic time-dependent coefficients.
\begin{prop}[Polynomial conditional expectation]\label{prop pol cond exp}
	Let $n \in \N$, $p \in \Pol$ with coordinate representation (\ref{eq: coordinate representation p}) and $0\leqslant t \leqslant T$. If $Z$ is a $E$-solution to (\ref{eq: dynamics Z}), then
	\begin{equation}\label{eq: pol cond exp}
		\mathbb{ E} \left [ \left.  p(Z_T) \right|  \F_t \right ] = H_n(Z_t)^\top e^{(T-t)G_n} \vec{p}. 
	\end{equation}
\end{prop}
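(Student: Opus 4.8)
The plan is to show that the process $M_t := H_n(Z_t)^\top e^{(T-t)G_n}\vec p$ is an $\Fb$-martingale on $[0,T]$ and then to identify it at time $t=T$ with $p(Z_T)$, so that $M_t = \esp{p(Z_T) \,|\, \F_t}$ by the martingale property and the tower rule. Since $M_T = H_n(Z_T)^\top e^{0}\vec p = H_n(Z_T)^\top \vec p = p(Z_T)$ by the coordinate representation (\ref{eq: coordinate representation p}), the terminal identification is immediate; the real content is the martingale property.

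First I would apply It\^o's formula componentwise to the vector-valued process $t \mapsto H_n(Z_t)$. Each component of $H_n$ is a polynomial, hence $\mathcal C^2$, so this is legitimate, and the drift term of the $j$-th component is $(\mathbf{G}H_{n,j})(Z_t)$ while the martingale part is $\nabla H_{n,j}(Z_t)^\top \sigma(Z_t)\,\ud W_t$. Because $Z$ is a polynomial diffusion, $\mathbf{G}H_{n,j} \in \Pol$, and by the matrix representation of $\mathbf{G}$ restricted to $\Pol$ we get the compact identity $\ud H_n(Z_t) = G_n^\top H_n(Z_t)\,\ud t + \ud N_t$, where $N$ is a local martingale. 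Next I would combine this with the (deterministic, smooth) matrix factor $e^{(T-t)G_n}$ via the product rule: the $\ud t$ contributions are $-G_n e^{(T-t)G_n}$ acting from It\^o plus $e^{(T-t)G_n}G_n^\top$ transported through the transpose; writing $M_t = H_n(Z_t)^\top e^{(T-t)G_n}\vec p$ one checks that the finite-variation part is
\[
 \Bigl( H_n(Z_t)^\top G_n\, e^{(T-t)G_n} - H_n(Z_t)^\top G_n\, e^{(T-t)G_n} \Bigr)\vec p \,\ud t = 0,
\]
using that $G_n$ commutes with $e^{(T-t)G_n}$. Hence $M$ is a local martingale.

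To upgrade from local martingale to true martingale I would invoke the compactness of the state space $E$: $Z_t \in E$ for all $t$, $H_n$ is continuous, hence $H_n(Z_t)$ is uniformly bounded on $[0,T]$, and $e^{(T-t)G_n}$ is bounded in $t$ as a continuous function on a compact interval; therefore $M$ is a bounded process and a bounded local martingale is a true martingale. Taking conditional expectations then yields $\esp{p(Z_T)\,|\,\F_t} = \esp{M_T\,|\,\F_t} = M_t = H_n(Z_t)^\top e^{(T-t)G_n}\vec p$, which is (\ref{eq: pol cond exp}).

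The main obstacle I anticipate is the bookkeeping in the product-rule computation: one must carefully track transposes so that the action of $G_n$ on $\vec p$ and the action of $G_n^\top$ coming from the $H_n$ dynamics cancel, which ultimately rests on $[G_n, e^{(T-t)G_n}] = 0$. A secondary point worth stating cleanly is the justification that the stochastic integral term is genuinely a (local) martingale and that the boundedness argument indeed closes the gap; given compactness of $E$ this is routine, but it is the step that uses the standing assumption on the state space in an essential way, so I would not leave it implicit.
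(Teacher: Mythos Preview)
Your argument is correct and is essentially the standard proof; the paper itself does not prove this statement but simply cites Theorem~3.1 of \cite{Fil-Lar}, whose proof proceeds exactly along the lines you outline (It\^o's formula on $H_n(Z_t)^\top e^{(T-t)G_n}\vec p$, cancellation of the drift via the matrix representation of $\mathbf G$, then upgrading the local martingale to a true martingale). The only difference is that in the general setting of \cite{Fil-Lar} the state space need not be compact, so the local-to-true step is handled via polynomial moment bounds rather than outright boundedness; your use of compactness is a legitimate simplification given the standing assumption of this paper.
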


\begin{proof}
	See Theorem 3.1 of \cite{Fil-Lar}.
\end{proof}

	For the sake of simplicity, we introduce the notation
	\[
		\hat p_{(t,T)} (Z_t) := H_n(Z_t)^\top e^{(T-t)G_n} \vec{p}, \ \ \ \ t \in [0,T].
	\]

The following two theorems give some sufficient conditions on the state space $E$, under which an $E$-valued state variable $Z$ admits weak uniqueness and existence.

\begin{theorem}\label{theorem weak uniqueness}
	Let $Z$ be an $E$-valued solution to (\ref{eq: dynamics Z}). If $E$ is compact, then $Z$ is weakly unique, i.e. any other $E$-valued solution to (\ref{eq: dynamics Z}) with initial value $Z_0$ has the same law as $Z$.
\end{theorem}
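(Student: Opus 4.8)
The plan is to prove weak uniqueness via the classical duality/moment-determinacy argument, exploiting the special structure provided by Proposition \ref{prop pol cond exp}. The key observation is that on a compact state space $E$, a probability measure is uniquely determined by the expectations of polynomial functions (the Stone--Weierstrass theorem: polynomials are dense in $\mathcal C(E)$), so it suffices to show that all finite-dimensional distributions of $Z$ have prescribed polynomial moments, independent of the choice of solution.

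First I would fix an arbitrary $E$-valued solution $Z$ to (\ref{eq: dynamics Z}) with $Z_0 \in E$ constant, and show by induction on the number of time points that, for any $0 \leqslant t_1 < \dots < t_k \leqslant T$ and any polynomials $p_1, \dots, p_k$, the joint moment $\esp{\prod_{j=1}^k p_j(Z_{t_j})}$ is a quantity depending only on the data $(a, b, Z_0)$ and not on the particular solution. The base case $k=1$ is exactly Proposition \ref{prop pol cond exp}: $\esp{p(Z_{t_1})} = H_n(Z_0)^\top e^{t_1 G_n}\vec p$, which is fully explicit. For the inductive step, I would condition on $\F_{t_{k-1}}$, apply the Markov-type identity (\ref{eq: pol cond exp}) to rewrite $\condesp{\F_{t_{k-1}}}{p_k(Z_{t_k})} = \hat p_{k,(t_{k-1},t_k)}(Z_{t_{k-1}})$, which is again a polynomial in $Z_{t_{k-1}}$; absorbing it into $p_{k-1}$ reduces the $k$-point moment to a $(k-1)$-point moment, to which the induction hypothesis applies. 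One technical point to address: solutions to (\ref{eq: dynamics Z}) need not be Markov a priori, but the tower property together with (\ref{eq: pol cond exp}) — which is stated as an identity of conditional expectations with respect to $\F_t$ — is exactly what makes this telescoping work without assuming the Markov property.

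Having shown that all finite-dimensional polynomial moments are determined, I would conclude that for each fixed $(t_1,\dots,t_k)$ the law of $(Z_{t_1},\dots,Z_{t_k})$ on the compact set $E^k \subset \R^{dk}$ is uniquely determined, since its moments of all orders are pinned down and a compactly supported measure is moment-determinate (again Stone--Weierstrass, or equivalently the fact that a bounded sequence of moments determines a unique measure with compact support). Since the finite-dimensional distributions of a càdlàg (here continuous) process determine its law on path space, weak uniqueness follows.

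The main obstacle I anticipate is the bookkeeping around the degree of the polynomials: when I replace $p_k(Z_{t_k})$ by its conditional expectation $\hat p_{k,(t_{k-1},t_k)}(Z_{t_{k-1}})$, the resulting polynomial lies in $\Pol$ for the \emph{same} $n$ (since $e^{(T-t)G_n}$ maps $\R^{N_n}$ to itself), so the product $p_{k-1} \cdot \hat p_{k,(t_{k-1},t_k)}$ has degree at most $2n$ and I must pass to $\mathrm{Pol}_{2n}(E)$ before reapplying Proposition \ref{prop pol cond exp}. This is harmless — $n$ was arbitrary and each step only finitely increases the degree — but it should be stated carefully so the induction is clean. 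Everything else is a routine application of the tower property and density of polynomials on a compact set.
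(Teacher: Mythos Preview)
Your proposal is correct and follows precisely the argument behind Theorem~5.1 of \cite{Fil-Lar}, which is all the paper invokes as its proof: one uses Proposition~\ref{prop pol cond exp} and the tower property to pin down all mixed polynomial moments of the finite-dimensional distributions, and then Stone--Weierstrass on the compact set $E^k$ yields moment-determinacy and hence weak uniqueness. Your remarks on the degree bookkeeping and on the fact that (\ref{eq: pol cond exp}) is an $\F_t$-conditional identity (so no a~priori Markov property is needed) are exactly the two points one has to be careful about, and you handle them correctly.
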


\begin{proof}
	See Theorem 5.1 of \cite{Fil-Lar}.
\end{proof}

\begin{rem}
	For $d=1$, we have also strong uniqueness (or pathwise uniqueness) for a $E$-valued solution to (\ref{eq: dynamics Z}), see \cite{Rog-Wil}. Further discussion about the strong uniqueness for a  generic dimension is provided in \cite{Lar-Pul}.
\end{rem}

\begin{theorem}\label{theorem existence}
	Let $\mathcal{P}$ be a family of polynomials on $\R^d$. If the boundary of the state space $E$ is defined by these polynomials, i.e.
	\[
		E = \{ x \in \R^d : p(x) \leqslant 0 \text{ for all } p \in \mathcal{P} \},
	\]
	then the following conditions on the parameters $a$ and $b$ guarantee the existence of an $E$-valued solution to (\ref{eq: dynamics Z}):
	\begin{enumerate}
		\item $ a \in \S^d_+ $;
		\item $ a \nabla  p = 0$ on $\{  p = 0 \}$ for each $ p \in \mathcal P$;
		\item $\mathbf{G}  p > 0$ on $E \cap \{ p = 0 \}$ for each $\ p \in \mathcal P$.
	\end{enumerate}
\end{theorem}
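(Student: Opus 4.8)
The goal is a \emph{weak} $E$-valued solution; by Theorem \ref{theorem weak uniqueness} this suffices, and it is also insensitive to the particular continuous square root $\sigma$ of $a$, since any two choices produce the same law. I see two complementary routes and would build the argument around the first.

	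\emph{Route 1: the martingale problem on a compact set.} By \eqref{eq: condition a, b} the operator $\mathbf{G}$ maps $\Pol$ into itself for every $n$, so it acts on a point-separating algebra of continuous functions on the compact space $E$ (containing the constants, with $\mathbf{G}1=0$). I would then invoke the general existence theorem for the martingale problem associated with $(\mathbf{G},E)$: a solution started at $Z_0$ exists as soon as $\mathbf{G}$ satisfies the \emph{positive maximum principle} on $E$, namely $\mathbf{G}f(x_0)\leqslant 0$ whenever $f\in\mathcal C^2(\R^d)$ attains $\max_E f$ at a point $x_0\in E$. By the Stroock--Varadhan representation any such solution is a weak solution of \eqref{eq: dynamics Z}, and it is $E$-valued by construction; so everything reduces to the positive maximum principle, and this is exactly where Conditions 1--3 enter. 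At an interior maximum it is immediate from $a\in\S^d_+$ (Condition 1) together with $\nabla^2 f(x_0)\preceq 0$. At a boundary maximum lying on the active constraints $\{p_1=0\},\dots,\{p_k=0\}$, a Karush--Kuhn--Tucker argument gives $\nabla f(x_0)=\sum_i\mu_i\nabla p_i(x_0)$ with $\mu_i\geqslant 0$; Condition 2 then forces $a(x_0)\nabla p_i(x_0)=0$, so the degenerate diffusion at $x_0$ only probes directions tangent to the active boundary, along which $f$ has a constrained maximum, while Condition 3 controls the leftover drift contribution $\sum_i\mu_i\,b(x_0)^\top\nabla p_i(x_0)$ through the values $\mathbf{G}p_i(x_0)$. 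Assembling these yields $\mathbf{G}f(x_0)\leqslant 0$.

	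\emph{Route 2: invariance of $E$.} Extend $a$ and $b$ to bounded continuous maps on $\R^d$ that agree with the originals on the compact set $E$, keeping $a$ positive semidefinite; Skorokhod's existence theorem then gives a weak solution $Z$ on $[0,T]$ with $Z_0\in E$, and since on $E$ the coefficients are unchanged it suffices to show $Z_t\in E$ for all $t$ a.s. Fixing $p\in\mathcal P$, It\^o's formula gives $p(Z_t)=p(Z_0)+\int_0^t\mathbf{G}p(Z_s)\,\ud s+\int_0^t\nabla p(Z_s)^\top\sigma(Z_s)\,\ud W_s$, whose martingale part has quadratic variation $\int_0^t(\nabla p^\top a\nabla p)(Z_s)\,\ud s$. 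By Condition 1 the integrand is $\geqslant 0$, and by Condition 2 it vanishes on $\{p=0\}$; a routine but slightly technical algebraic lemma on the compact set $E$ (of Positivstellensatz type, cf.\ \cite{Fil-Lar}) upgrades this to $\nabla p^\top a\nabla p\leqslant C\,|p|$ near $\partial E$, so the diffusion coefficient of $p(Z)$ degenerates at least linearly. Combined with the control of $\mathbf{G}p$ on $\{p=0\}$ from Condition 3, a stopping-time and comparison argument in the spirit of the Feller boundary analysis for CIR-type equations shows $p(Z_t)\leqslant 0$ for all $t$; running this for every $p\in\mathcal P$ and intersecting gives $Z_t\in E$.

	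\emph{Main obstacle.} In both routes the crux is the degeneracy of $a$ on $\partial E$, which is precisely Condition 2: it rules out any use of nondegenerate SDE theory and forces a careful boundary analysis of how the degenerate diffusion and the drift interact --- the KKT/tangency computation in Route 1, or the algebraic estimate on $\nabla p^\top a\nabla p$ plus a delicate stochastic comparison in Route 2. The subtlest point is the behaviour at corners of $E$, where several of the constraints $p_i$ vanish simultaneously.
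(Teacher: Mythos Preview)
The paper does not prove this result: its entire proof is the observation that the statement is a simplified version of Theorem~5.3 in \cite{Fil-Lar}. Your proposal goes well beyond this, and your Route~1 --- existence of a solution to the martingale problem on the compact set $E$ via the positive maximum principle, followed by passage to a weak SDE solution --- is precisely the strategy used in \cite{Fil-Lar} to establish the cited theorem, so you have in effect reconstructed what lies behind the reference; the KKT/tangency analysis at boundary points that you single out as the crux is exactly where the technical work in that paper is concentrated. Route~2 (global weak existence plus stochastic invariance of $E$) is a legitimate alternative and a variant of it also appears in \cite{Fil-Lar}, though as you note the algebraic bound on $\nabla p^\top a\nabla p$ near $\partial E$ that it requires is not elementary.

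One point you pass over too quickly: as transcribed here, Condition~3 carries the wrong sign. With $E=\{p\leqslant 0\ \text{for all}\ p\in\mathcal P\}$, the requirement $\mathbf{G}p>0$ on $E\cap\{p=0\}$ means the drift of $p(Z)$ points \emph{out} of $E$ at the boundary, and both of your routes would then fail to deliver an $E$-valued solution. The convention in \cite{Fil-Lar} is $E=\{p\geqslant 0\}$, for which $\mathbf{G}p>0$ is inward-pointing; under the present sign convention the inward-pointing condition reads $\mathbf{G}p<0$ on $E\cap\{p=0\}$, and your sketches are correct only once this is substituted.
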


\begin{proof}
	This theorem is a simplified version of Theorem 5.3 of \cite{Fil-Lar}.
\end{proof}

\begin{rem}
	We stress that every $E$-solution to (\ref{eq: dynamics Z}) is a strong Markov process thanks to the compactness of the state space and the weak uniqueness given by Theorem \ref{theorem weak uniqueness} (see e.g. Theorem 4.6 of \cite{Dur}). The operator defined in (\ref{operatore g}) is thus the extended Markov generator of $Z$.
\end{rem}


\section{The Setting}\label{sec: model assumptions}

In this paper, we follow the intensity-based reduced-form approach (e.g. \cite{Bie-Rut}) in order to describe life insurance derivatives linked to a portfolio of life insurance contracts.
	
	We fix a finite time horizon $[0,T]$ with $0 < T < \infty$ and a filtered probability space $(\Omega, \G, \Gb , \P)$, with $\Gb := (\G_t)_{t \in [0, T]}$, $\G=\G_{T},  \G_0 \ \text{trivial}$.
	The filtration $\Gb$ represents all the information flow available to the insurance company.
	$\P$ is interpreted as the historical or real world probability measure.	
	We consider a life insurance portfolio with $n$ policyholders ($n \in \N$). For $i = 1, ..., n$, the decease time of $i$-th insured person is assumed to be a $\G$-measurable random time $\tau^i: (\Omega, \G, \P) \rightarrow \R_+$, , such that $\tau^i > 0$ a.s., $\mathbb{P}(\tau^i = 0) = 0$ and $\P(\tau^i > t ) > 0$ for any $t \in [0,T]$. These last conditions mean that the decease of a policyholder does not occur in the first instant and may never occur during the considered time horizon $[0, T]$. We denote the death counting process by $N := (N_t)_{t \in [0,T]}$,
	\begin{equation}\label{eq: death counting process}
		N_t = \sum_{i = 1}^n \Ind{\tau^i \leqslant t}, \ \ \ \ t \in [0,T].
	\end{equation}
	Furthermore, we assume $\Gb = \Fb \vee \Hb^1 \vee ... \vee \Hb^n$ where for $i = 1,...,n$, $\Hb^i := (\H^i_t)_{t \in [0,T]}$, $\Fb := (\F_t)_{t \in [0,T]}$, $\F_0 = \H^i_0 = \G_0  = \{ \emptyset, \Omega \}$, $\G = \G_{T} = \F_{T} \vee \H_T^1 \vee ... \vee \H_T^n$. The filtration $\Fb$ is called reference filtration. For each $i = 1,...,n$, the filtration $\Hb^i$ is generated by the jump process of $\tau^i$ defined by $H^i:= (H^i_t)_{t \in [0, T]}$ with $H^i_t = \mathbf{1} _{ \{\tau^i \leqslant t\} }$ for every $t \in [0,T]$, i.e. the filtration $\Hb^i$ represents the information through time relative to $i$-th policyholder's life status described by $\tau^i$. 
	All the filtrations are assumed to be complete and right-continuous. Every random variable $\tau^i$ for $i = 1,...,n$ is clearly a $\Gb$-stopping time, but not necessary an $\Fb$-stopping time.
 	We assume also that for each $i, j = 1, ... , n$, $\mathbb{P}\left( \tau^i = \tau^j \right) = 0$ and that $\tau^i$ and $\tau^j$ are $ \Fb$-conditionally independent, i.e. for any $t \in [0,T]$ and $s, r \in [0,t]$,
 	\[
 		\P \left. \left( \tau^i > s, \tau^j > r \right| \F_t \right) = \P \left. \left( \tau^i > s \right| \F_t \right) \P \left. \left( \tau^j > r \right| \F_t \right).
 	\]
 	This assumption can be interpreted in the following way. The life status of policyholders is influenced by some common systematic factors related to environmental, economic, social and financial conditions captured by the information flow represented by $\Fb$, but every policyholder has also his own purely individual or idiosyncratic decease factors (e.g. illness, accident etc.).
 	\\
 	Moreover, we assume that 
	\begin{enumerate}
		\item The subfiltration $\Fb$ of $\Gb$ satisfies the so called $H$-hypothesis, i.e. every $\Fb$-martingale is also a $\Gb$-martingale. 
		
		\item For every $i = 1,...,n$, the following non-negative $\Fb$-submartingale $F^i := (F^i_t)_{t \in [0,T]}$, called conditional cumulative distribution function of $\tau^i$
			\[
				F^i_t := \P \left. \left( \tau^i \leqslant t \right| \F_t \right), \ \ \ \ t \in [0, T]
			\]
			satisfies
			\[
				F^i_t < 1, \ \ \ \ t \in [0, T],
			\]
			and is absolutely continuous with respect to Lebesgue's measure.
	\end{enumerate}
	For every $i=1,...,n$, the hazard process $\Gamma^i := (\Gamma^i_t)_{t \in [0,T]}$ of $\tau^i$ 
	\[
		\Gamma^i_t := - \ln \left(1 - F^i_t \right), \ \ \ \ t \in [0, T]
	\]
	is well defined, absolutely continuous and increasing (see Lemma 6.1.2 of \cite{Bie-Rut}). On one hand, this implies that, for every $i=1,...,n$, $\tau^i$ avoids all $\Fb$-stopping times and is a $\Gb$-totally inaccessible stopping time (see \cite{Del-Mey}).
	On the other hand, the absolute continuity of $\Gamma^i$ implies the existence of a non negative $\Fb$-progressively measurable mortality intensity $\mu^i : = (\mu_t)_{t \in [0, T]}$ with integrable sample paths such that
	\[
		\Gamma^i_t = \int_0^t \mu^i_u \ud u, \ \ \ \ t \in [0, T].
	\]
	We can take a $\Fb$-predictable version of $\mu^i$, see Lemma 1.36 of \cite{Jac}.	
	For the sake of simplicity, we assume a homogeneous insurance portfolio where all policyholders have the same mortality rate and we denote the common conditional cumulative distribution function, hazard process and mortality intensity respectively by $F := (F_t)_{t \in [0.T]}$, $\Gamma := (\Gamma_t)_{t \in [0,T]}$ and $\mu := (\mu_t)_{t \in [0,T]}$, i.e. $F = F^i$, $\Gamma = \Gamma^i$, $\mu = \mu^i$ for all $i = 1,...,n$. This happens for example when all policyholders belong to the same age cohort in the same country\footnote{The number of the reference population group is normally much bigger than $n$.}. We assume furthermore that there is a publicly accessible index based on mortality data of the given age group. According to \cite{Cai-Bla}, such index is modelled by the process $I := (I_t)_{t \in [0, T]}$, where
	\[
		I_t := 1 - F_t = e^{-\Gamma_t} = e^{- \int^t_0 \mu_u \ud u}, \ \ \ \ t \in [0,T],
	\]
	and is called survival index (or longevity index or longevity process) related to the given group of people\footnote{The assumption of a homogeneous portfolio can be easily relaxed and extended to the case of more population groups, subdivided by age cohorts, countries etc.. In that case, every group of people is assumed to have its correspondent longevity index and mortality intensity.}.
	We stress that under our assumptions, the reference filtration gives information about the (common) mortality intensity $\mu$, or equivalently the longevity index $I$, but not about a single decease event itself.
	For $i = 1,...,n$, we introduce the process $L^i := (L^i_t)_{t \in [0,T]}$ associated to the $i$-th policyholder
	\begin{equation}\label{eq: process L}
		L^i_t := (1 - H^i_t) e^{\Gamma_t} = \Ind{\tau^i > t} e^{\int^t_0 \mu_s \ud s}, \ \ \ \ t \in [0, T].
	\end{equation}
	Proposition 5.7 and Proposition 5.8 of \cite{Bar} show that $L^i$ is a $\Gb$-martingale and satisfies
	\[
		L^i_t = 1 - \int_{]0,t]} L^i_{u-} \ud M^i_u, \ \ \ \ t \in [0, T],
	\]
	where $M^i := (M^i_t)_{t \in [0,T]}$ is a $\Gb$-martingale defined by
	\[
		M_t^i = H^i_t - \Gamma^i_{t \wedge \tau^i}, \ \ \ \ t \in [0, T].
	\]
	Hence, $(\Gamma_{t \wedge \tau^i})_{t \in [0,T]}$ is the $\Gb$-compensator of $H^i$. In other words, the hazard process $\Gamma^i$ coincides up to $\tau^i$ with the so called $(\Fb, \Gb)$-martingale hazard process $\Lambda^i := \left(\Lambda^i_t\right)_{t \in [0,T]}$ of $\tau^i$.
	Under the assumption of a homogeneous insurance portfolio, if we define the $\Gb$-martingale $M := (M_t)_{t \in [0,T]}$ as 
	\[
		M := \sum_{i=1}^n M^i,
 	\]
 	then
 	\begin{equation}\label{eq: process M}
 		M_t = N_t - (n - N_{t-}) \Gamma_t, \ \ \ \ t \in [0,T], 
 	\end{equation}
 	i.e. the $\Gb$-compensator of the death counting process $N$ is given by $\left((n - N_{t-}) \Gamma_t\right)_{t \in [0,T]}$.
	 
\subsection{Market model}

We assume that our reference market is frictionless and there are $l$ liquidly traded primary assets with price processes $S^i:=(S^i_t)_{t \in [0, T]}$, $i = 1, ... , l$, following real-valued continuous $(\P, \Fb)$-semimartingales. The asset vector is denoted by $S:=(S^i)_{i =1,...,l}$. For now we do not specify the nature and the dynamics of these primary assets.
\begin{rem}
	We remark that primary assets are here assumed to be $\Fb$-adapted. Unlike in the credit risk context where products associated to a single default (e.g. bankruptcy of an important finance institution) are traded on the financial market over time, this is not the case of life insurance products based on the decease of single persons. However, primary assets can include some mortality/longevity index-linked securities, like longevity bond.
\end{rem}

\begin{defn}
	A \emph{trading strategy} is a $\R^l$-valued $\Gb$-predictable $S$-integrable process $\delta := (\delta_t)_{t \in [0, T]}$.
\end{defn}
The space of all $\R^l$-valued $\Gb$-predictable $S$-integrable processes is indicated by $L(S,\P,\Gb)$. 

\begin{defn}
	A \emph{portfolio} or \emph{value process} $S^\delta:=(S^\delta_t)_{t \in [0,T]}$ associated to a trading strategy $\delta$ is defined by the following càdlàg optional process\footnote{We follow the definition given in \cite{Bia-Pra}.}
	\[
		S^\delta_{t-} = \delta_t^\top S_t = \sum_{i=1}^l \delta^i_t S^i_t, \ \ \ \ t \in [0,T].
	\]
	It is called \emph{self-financing} if 
	\[
		S^\delta_t = S_0^\delta + \int_0^t \delta_u^\top \ud S_u = S_0^\delta + \sum_{i=1}^l \int_0^t \delta^i_u \ud S^i_u, \ \ \ \ t \in [0.T].
	\]
\end{defn}

We note that according to our definition, for $t \in [0, T]$ and $i = 1,...,l$, the variable $\delta^i_t$ represents the amount of $i$-th primary asset held at time $t$. We define the following set
\[
	\mathcal{V}^{+}_x = \{ S^\delta \text{ self-financing }| \ \delta \in L(S, \P, \Gb), \ S_0^\delta = x > 0, \ S^\delta > 0\}.
\]

\begin{defn}\label{def: benchmark}
	A \emph{benchmark} or \emph{numéraire portfolio} $S^{\delta^*}:=(S^{\delta^*}_t)_{t \in [0,T]}$ is a portfolio in the set $\mathcal{V}^{+}_1$, such that every portfolio $S^\delta \in \mathcal{V}^{+}_1$ when discounted by $S^{\delta^*}$ forms a $(\P, \Gb)$-supermartingale, i.e. 
	\[
		\frac{S^\delta_t}{S^{\delta^*}_t} \geqslant \condespg{\frac{S^\delta_s}{S^{\delta^*}_s}}, \ \ \ \ t,s \in [0,T], \ \ s \geqslant t.
	\]
\end{defn}

In our framework, we assume only the existence of a numéraire portfolio denoted by $S^*$ and not necessarily the existence of an equivalent martingale measure.
We note that according to our definition of benchmark (or numéraire) portfolio, we exclude that the benchmark may contain claims related to single deceases. 

\begin{defn}
	We call \emph{benchmarked value} the value of any security or portfolio $S^\delta$ when discounted by the benchmark portfolio and we denote it by process $\tilde S^\delta$, i.e.
	\[
		\tilde S^\delta := \frac{S^\delta}{S^*}.
	\]
\end{defn}

Since in our framework all primary assets are assumed to be continuous, the following result holds.

\begin{prop}\label{prop: benchmarked primary assets}
	The benchmarked vector process of primary assets $\tilde S := \frac{S}{S^*}$ is a $(\P, \Gb)$-local martingale.
\end{prop}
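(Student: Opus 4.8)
The plan is to show that the benchmarked primary asset vector $\tilde S = S/S^*$ is a $(\P,\Gb)$-local martingale by combining the supermartingale property coming from the definition of the numéraire portfolio with a symmetry argument applied to perturbations of the benchmark by a position in $S^i$ against $S^*$ itself. The key point, which distinguishes this from being merely a supermartingale, is the continuity of all primary assets.

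First I would observe that $\tilde S^i = S^i/S^*$ is, for each $i$, a nonnegative $(\P,\Gb)$-supermartingale: this is immediate from Definition \ref{def: benchmark} applied to the portfolio consisting of a single unit of $S^i$ (suitably normalized so that $S^\delta_0 = 1$, or by linearity and localization if $S^i_0 \ne 1$; one passes to the $\Gb$-stopping times reducing the local-martingale parts if needed). In particular $\tilde S^i$ is a càdlàg supermartingale, hence a special semimartingale, with Doob--Meyer decomposition $\tilde S^i = \tilde S^i_0 + \tilde M^i - \tilde A^i$, where $\tilde M^i$ is a $(\P,\Gb)$-local martingale and $\tilde A^i$ is a predictable nondecreasing process with $\tilde A^i_0 = 0$. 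Moreover, since $S^i$ and $S^*$ are continuous semimartingales and $S^* > 0$, the quotient $\tilde S^i$ is again a continuous semimartingale, so $\tilde A^i$ is continuous.

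The core step is to show $\tilde A^i \equiv 0$. Here I would exploit that the benchmark $S^* \in \mathcal V^+_1$ is itself a self-financing portfolio, and perturb it: for a suitable bounded predictable process and small $\varepsilon$, consider the self-financing strategy that holds the benchmark but shifts a small predictable amount between $S^i$ and $S^*$. Concretely, because $S^*$ is itself obtainable as $S^{\delta^*}$ and we may add to $\delta^*$ a self-financing increment $\varepsilon \theta(\ud S^i - \frac{S^i}{S^*}\ud S^*)$-type position for a predictable $\theta$, one gets for $\varepsilon$ small enough (using continuity and positivity of the resulting wealth) a portfolio in $\mathcal V^+_1$ whose benchmarked value is $1 + \varepsilon \int \theta \, \ud \tilde S^i = 1 + \varepsilon\bigl(\int\theta\,\ud\tilde M^i - \int\theta\,\ud\tilde A^i\bigr)$. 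The supermartingale requirement forces $-\varepsilon\int\theta\,\ud\tilde A^i$ (the drift) to be nonincreasing for \emph{every} such $\theta$ and for both signs of $\varepsilon$; taking $\theta \equiv 1$ and $\theta \equiv -1$ yields that $\tilde A^i$ is both nondecreasing and nonincreasing, hence $\tilde A^i \equiv 0$, so $\tilde S^i = \tilde S^i_0 + \tilde M^i$ is a local martingale. Doing this coordinatewise gives the claim for the vector $\tilde S$.

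The main obstacle I anticipate is the careful justification that the perturbed strategies stay admissible, i.e. that the perturbed benchmarked wealth remains strictly positive so that the perturbation lands in $\mathcal V^+_1$; this is where continuity of $S$ is essential, since on a bounded horizon a continuous process $\int\theta\,\ud\tilde S^i$ can be kept uniformly small by choosing $\varepsilon$ small after localizing, whereas with jumps one could be knocked to zero or below. A cleaner alternative, if one prefers to avoid building explicit perturbed portfolios, is to invoke the known characterization (e.g. from the benchmark-approach literature cited, \cite{Pla}, \cite{Du-Pla}, or via \cite{Hul-Sch}) that in a continuous market the numéraire portfolio exists if and only if the benchmarked assets are local martingales, and simply cite it; but the perturbation argument above is the self-contained route and I would present that, relegating the admissibility bookkeeping to a short paragraph.
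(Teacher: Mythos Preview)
The paper does not give a self-contained proof of this proposition; it simply writes ``See e.g.\ \cite{Bia-Cre}.'' So your alternative of citing the benchmark-approach literature is in fact exactly what the paper does, only with a different reference. Your perturbation argument is the standard self-contained route and is essentially correct; in that sense you are providing more than the paper does.

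One point to tighten: in your first paragraph you claim each $\tilde S^i$ is a \emph{nonnegative} supermartingale by applying the numéraire definition to a single unit of $S^i$. But in this paper the primary assets $S^i$ are only assumed to be real-valued continuous semimartingales, not strictly positive, so a buy-and-hold position in $S^i$ alone need not lie in $\mathcal V^+_1$ and the direct supermartingale conclusion is not available. This does not actually damage your argument, because the core perturbation step does not use it: $\tilde S^i$ is in any case a continuous semimartingale with canonical decomposition $\tilde S^i_0 + \tilde M^i + \tilde B^i$ (finite-variation part $\tilde B^i$ of a priori unknown sign), and your perturbed portfolios $1+\varepsilon\,\tilde S^i_{\cdot\wedge\tau_n}$ for both signs of small $\varepsilon$ force $\tilde B^i$ to be simultaneously nonincreasing and nondecreasing, hence zero. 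So simply drop the preliminary supermartingale/Doob--Meyer paragraph and run the perturbation argument directly on the continuous-semimartingale decomposition; the admissibility bookkeeping (localize so that $\tilde S^i$ is bounded, switch to holding only $S^*$ after the stopping time to stay in $\mathcal V^+_1$) is exactly as you describe and is where continuity is used.
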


\begin{proof}
	See e.g. \cite{Bia-Cre}.
\end{proof}

The cash flow received by the policyholder from the insurer over time can be seen as a dividend payment which is usually modelled by a process $D := ( D_t)_{t \in [0, T]}$ of finite variation or more in general a $(\P, \Gb)$-semimartingale. We denote by $A:=(A_t)_{t \in [0,T]}$ the benchmarked value of the cumulative liabilities of the insurer towards a policyholder, formally
\begin{equation}\label{eq: A process definition}
	A_t := \int^t_0 (S^*_u)^{-1} \ud D_u, \ \ \ \ t \in [0,T].
\end{equation}We assume that $D$ is defined such that $A$ is square integrable\footnote{A process $X :=(X_t)_{t \in [0,T]}$ is square integrable if $\sup_{t \in [0,T]} \esp{X_t^2} < \infty$.}.

\begin{defn}
	The following formula for a dividend process, which settles at time $T$, is called \emph{real world pricing formula},
	\begin{equation}\label{eq: pricing formula dividend}
		V_t := S^*_t \condespg{A_T - A_t} = S^*_t \condespg{\int_{]t, T]} (S^*_u)^{-1} \ud D_u},
	\end{equation}
	for $t \in [0, T]$.
\end{defn}
This definition is a generalization of the so called ex-dividend price process (see e.g. \cite{Bac-Bie} and \cite{Jam}) which gives the current value of all future cash flow the insurer has to pay in the risk-neutral valuation context, i.e. when a martingale measure is assumed to exist.
In the general case, if the existence of a martingale measure is not assumed, the benchmarked value of the price process $V := (V_t)_{t \in [0, T]}$ corresponds to the \emph{benchmarked risk-minimizing price} as we explain in Appendix \ref{app: benchmarked risk-min payment process}.

It is well known that the hybrid market extended with the introduction of insurance contracts is intrinsically incomplete even when the reference market is complete. This fact is due to the additional source of randomness given by mortality risk and the lack of liquidity in trading life insurance products based on single decease over time, which create an unhedgeable basis risk (see e.g. \cite{Bia-Rhe-r}). Hence in our setting we need to choose a hedging method for incomplete markets.
The so called risk-minimizing method, which aims to find the optimal replicating strategy and minimizing the expected quadratic risk,  appears to be a natural approach when market incompleteness is due to external source of randomness as discussed in \cite{Bia}. It was originally introduced in \cite{Fol-Son} for a single payoff and then extended to dividend processes and applied to insurance contracts in e.g. \cite{Mol-u}, \cite{Mol-i}, \cite{Dah-Mol}, \cite{Bar}, \cite{Bia-Sch} and \cite{Bia-Rhe-r}. However, the results in these papers require a risk-neutral contest. The relationship of benchmark approach and risk-minimization has been analysed in \cite{Pla3} and \cite{Bia-Cre} for a single payoff. In Appendix \ref{app: benchmarked risk-min payment process} we show how this method can be easily reviewed to include dividend payments.

We now consider a state variable process $Z$ representing the underlying risk factors, possibly including macro-economic variables, environmental and social indicators. Let $m_1, m_2 \in \N$ be such that $m_1 + m_2 = d$. Furthermore, we assume $Z$ is given by a polynomial diffusion of the form
\[
 	Z = \left( \begin{array}{c} X \\ Y \end{array} \right), 
\]
with
\begin{equation}\label{eq: dynam (X,Y)}
	\left\{\begin{array}{rcl}
		\ud X_t &= &b(X_t)\ud t + \sigma ( X_t)\ud W_t, \\
		\ud Y_t &= &\overline b(X_t, Y_t)\ud t,
	\end{array} \right.
\end{equation}
for $t \in [0,T]$, realized on the filtered probability space $(\Omega, \F, \Fb, \P)$, where $W:=(W_t)_{t \in [0,T]}$ is a $d$-dimensional $\Fb$-Brownian motion, $\sigma: \R^{m_1} \rightarrow \R^{m_1 \times d}$ is a continuous function, $a := \sigma \sigma^\top$,
\[
\begin{array}{ccc}
	a: \R^{m_1} \longrightarrow \S^{m_1} , &b: \R^{m_1} \longrightarrow \R^{m_1}  , &\overline b : \R^{d} \longrightarrow \R^{m_2}, \\
	a_{ij} \in \text{Pol}_2(\R^{m_1})  ,  &b_i \in \text{Pol}_1 (\R^{m_1})  , &\overline b_k \in \text{Pol}_1 (\R^{d}),
\end{array}
\]
for all $i,j = 1,...,m_1 $ and $k = 1, ... , m_2$. 
The (compact) state space $E$ is given by
\[
	E = E^X \times E^Y \ , \ E^X \subseteq \R^{m_1} \text{ and } \ E^Y \subseteq \R^{m_2} \ , 
\]
where $E^X$ is the state space of process $X$ and $E^Y$ the state space of $Y$, respectively.

The following proposition shows that if $E^X$ is a compact set of $\R^{m_1}$, then $E^Y$ is compact as well.
\begin{prop}\label{prop: boundness Y}
	If there is a constant $C$ such that for every $t \in [0, T]$,
	\[
		\norm{X_t} \leqslant C,
	\]
	then $Y$ is uniformly bounded.
\end{prop}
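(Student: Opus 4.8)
The plan is to exploit the fact that $Y$ has dynamics of pure bounded variation, $\ud Y_t = \overline b(X_t, Y_t)\,\ud t$, with $\overline b_k \in \mathrm{Pol}_1(\R^d)$ affine in $(X,Y)$. Writing the $k$-th component, $\overline b_k(X_t, Y_t) = \alpha_k + \beta_k^\top X_t + \gamma_k^\top Y_t$ for suitable constants, we get for each $k = 1,\dots,m_2$ an ordinary (random) linear ODE
\[
	\frac{\ud}{\ud t} Y^k_t = \alpha_k + \beta_k^\top X_t + \gamma_k^\top Y_t,
\]
driven by the path $t \mapsto X_t$, which by hypothesis satisfies $\norm{X_t} \leqslant C$ uniformly on $[0,T]$. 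The idea is then a standard Grönwall argument: since the forcing term $\alpha_k + \beta_k^\top X_t$ is uniformly bounded (by $|\alpha_k| + \norm{\beta_k} C =: K_k$) and the feedback term is linear with constant $\norm{\gamma_k}$, the solution cannot grow faster than exponentially in $t$, hence stays bounded on the finite horizon $[0,T]$.

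Concretely, first I would bound $\norm{Y_t}$ by integrating: $\norm{Y_t} \leqslant \norm{Y_0} + \int_0^t \norm{\overline b(X_s, Y_s)}\,\ud s \leqslant \norm{Y_0} + \int_0^t \big( K + L \norm{Y_s} \big)\,\ud s$, where $K$ collects the uniform bounds on the $X$-dependent and constant parts across all $m_2$ components and $L$ collects the linear coefficients $\norm{\gamma_k}$ (using equivalence of norms on $\R^{m_2}$ to pass between componentwise and vector estimates). Since $Y_0 \in E^Y$ is a fixed constant, $\norm{Y_0}$ is a finite constant. Then Grönwall's inequality gives
\[
	\norm{Y_t} \leqslant \left( \norm{Y_0} + KT \right) e^{LT}, \ \ \ \ t \in [0,T],
\]
which is a bound independent of $t$, so $Y$ is uniformly bounded on $[0,T]$.

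One should check that the argument is legitimate pathwise: for $\P$-a.e. $\omega$, $t \mapsto X_t(\omega)$ is continuous (hence measurable and bounded by $C$), and $t \mapsto Y_t(\omega)$ is absolutely continuous with derivative $\overline b(X_t(\omega), Y_t(\omega))$, so the integral inequality and Grönwall apply on each such path with the \emph{same} constants $\norm{Y_0}$, $K$, $L$, $T$; the resulting bound is therefore deterministic and uniform in both $t$ and $\omega$. I do not expect any real obstacle here — the only mild subtlety is bookkeeping between componentwise affine bounds and the vector norm estimate, and making sure the constants $K$ and $L$ are chosen once and for all (not depending on the path), which is immediate because $\overline b$ is a fixed affine map and $\norm{X_t} \leqslant C$ holds uniformly. (Strictly speaking, $E$ being compact already forces $Y$ bounded by definition of the state space; the content of the proposition is the quantitative statement that compactness of $E^X$ alone suffices, which is exactly what the Grönwall estimate delivers.)
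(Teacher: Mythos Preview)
Your proposal is correct and follows essentially the same approach as the paper: write $\overline b$ in affine form $\overline b(X_t,Y_t) = AX_t + BY_t + c$, use the uniform bound $\norm{X_t}\leqslant C$ to control the forcing term, and apply Gr\"onwall's inequality on the finite horizon $[0,T]$. Your final bound $(\norm{Y_0}+KT)e^{LT}$ is in fact stated more cleanly than the paper's version, and your remarks on the pathwise nature of the argument and on the real content of the proposition are apt.
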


\begin{proof}
	Since $\bar b$ is a linear function of $Z$, the $Y$-dynamics can be written as
	\[
		\ud Y_t = (A X_t + B Y_t + c) \ud t,
	\]
	with $A \in \R^{m_2 \times m_1}$, $B \in \R^{m_2 \times m_2}$, $c \in \R^{m_2}$. In particular, we have
	\[
		\norm{Y_t} \leqslant b \int^t_0 \norm{Y_u} \ud u + a C t + \norm{c}t + \norm{Y_0},
	\] 
	where $a$ and $b$ are respectively some matrix norms of $A$ and $B$.\\	
	The Gr\"{o}nwall's inequality yields
	\[	
		\norm{Y_t} \leqslant \bar C  + b \int^{t}_o e^{b(t - s)} ds \leqslant \bar C  + b \int^{T}_o e^{b(T - s)} ds < \infty,
	\]
	for all $t \in [0,T]$, with $\bar C$ a suitable constant, i.e. $Y$ is uniformly bounded.
\end{proof}

For the sake of simplicity, in the rest of this paper the degree of a generic polynomial function $p$ will be indicated by $\bar p$.


We model the benchmark portfolio as driven by the state variable $Z$ in the following way
\begin{equation}\label{eq: benchmark process}
	\frac{1}{S^*_t} := e^{ - \alpha t} p(Z_t), \ \ \ \alpha \in \R  \text{ and } p \text{ strictly positive polynomial on } E,
\end{equation}
for every $t \in [0, T]$. We note that according to our definition (\ref{eq: benchmark process}), the benchmark portfolio is $\Fb$-adapted and continuous. In \cite{Fil-Lar} a similar dynamics is specified for a state price density, here we choose to model the benchmark portfolio.

Now we assume that both risk-free OIS bond and longevity bond maturing at $T$ are among primary assets and we indicate their value processes respectively by $(P(t,T))_{t \in [0,T]}$ and $(P^l(t, T))_{t \in [0,T]}$. In view of Proposition \ref{prop: benchmarked primary assets}, we assume that their benchmarked value processes are continuous $(\P,\Gb)$-true martingales.
We stress that for compact state space $E$, the restricted polynomial $\left. p \right| _E$ admits a strictly positive minimum value, i.e.  there exists a strictly positive number $\varepsilon$ such that
\begin{equation}\label{eq: bound p}
	E \subseteq \{ z \in \R^d : p(z) \geqslant \varepsilon\}.
\end{equation}
As we will see below, under our model assumptions the condition (\ref{eq: bound p}) ensures the continuity of both risk-free OIS bond and longevity bond, as well as the non-negativity of the risk-free short rate by adjusting the parameter $\alpha$. 

A risk-free OIS bond maturing in $T$ is by definition a zero-coupon bond with unit payment at term of contract $T$, whose value at $t \in [0,T]$ is represented by
\[
	P(t,T) = S^*_t \condespg{\frac{1}{S^*_T}},
\]
which can be explicitly calculated using (\ref{eq: benchmark process}) and Proposition \ref{prop pol cond exp},
\begin{align*}
	S^*_t \condespg{\frac{1}{S^*_T}} &= e^{-\alpha(T - t)} \frac{\condespg{p(Z_T)}}{p(Z_t)} = e^{-\alpha(T - t)} \frac{\condespf{p(Z_T)}}{p(Z_t)}\\
			&=e^{-\alpha(T - t)} \frac{H_{\bar p}(Z_t)^\top e^{(T - t)G_{\bar p}} \vec p}{p(Z_t)} = e^{-\alpha(T - t)} \frac{\hat p_{(t,T)} (Z_t)}{p(Z_t)},
\end{align*}
where second equality follows by Lemma 6.1.1 of \cite{Bie-Rut}.
Due to (\ref{eq: bound p}), the process above is well-defined and continuous. Hence we have
\begin{equation}\label{eq: OIS bond definition}
	P(t,T) = e^{-\alpha(T - t)} \frac{\hat p_{(t,T)} (Z_t)}{p(Z_t)}, \ \ \ \ t \in [0,T].
\end{equation}
We note that in our model the value of a OIS $T$-bond in time $t$ is a ratio of polynomial functions of $Z_t$, with explicit deterministic time-dependent coefficients.

The risk-free short rate process $r := (r_t)_{t \in [0, T]}$ can be explicitly calculated from the risk-free OIS bond dynamics (\ref{eq: OIS bond definition}), for $t \in [t,T]$
\begin{align*}
	r_t :&= - \left. \partial_T \log P(t, T) \right|_{T=t} \\
		&= - \left. \partial_T \log \left( e^{-\alpha(T - t)} \frac{\hat p_{(t,T)} (Z_t)}{p(Z_t)} \right) \right|_{T=t}\\
		&= - \left. \partial_T \log \left( e^{-\alpha(T - t)} \right) \right|_{T=t} - \left. \partial_T \log \left( \frac{H_{\bar p}(Z_t)^\top e^{(T-t)G_{\bar p}} \vec{p}}{p(Z_t)} \right) \right|_{T=t}\\
		&= \alpha - \left. \left[ \frac{p(Z_t)}{H_{\bar p}(Z_t)^\top e^{(T-t)G_{\bar p}} \vec{p}} \ \partial_T \left( \frac{H_{\bar p}(Z_t)^\top e^{(T-t)G_{\bar p}} \vec{p}}{p(Z_t)} \right) \right] \right|_{T=t}\\
		&= \alpha - \left. \left[ \frac{p(Z_t)}{H_{\bar p}(Z_t)^\top e^{(T-t)G_{\bar p}} \vec{p}} \frac{H_{\bar p}(Z_t)^\top G_{\bar p} e^{(T-t)G_{\bar p}} \vec{p}}{p(Z_t)}  \right] \right|_{T=t}\\
		&= \alpha - \frac{p(Z_t)}{H_{\bar p}(Z_t)^\top  \vec{p}} \frac{H_{\bar p}(Z_t)^\top G_{\bar p} \vec{p}}{p(Z_t)} \\
		&= \alpha - \frac{H_{\bar p}(Z_t)^\top G_{\bar p} \vec p}{p(Z_t)},
\end{align*}
since $e^{(T-t)G_{\bar p}} = 1$ when $T=t$.
In particular, the compactness of the state space $E$ and (\ref{eq: bound p}) provide that
\[
	\frac{H_{\bar p}(Z_t)^\top G_{\bar p} \vec p}{p(Z_t)}
\]
has an upper bound $\overline \alpha$ and a lower bound $\underline{\alpha}$ uniformly in $t \in [0, T]$.
By choosing $\alpha = \overline \alpha$, the short rate takes positive value in $[ 0 , \overline \alpha - \underline{\alpha} ]$.

Following the definition in \cite{Cai-Bla} and \cite{Bla-Cai}, a longevity bond maturing at $T$ is a index-linked zero-coupon bond with final payment at $T$ equal to the value of a given survival index at $T$. 
Unlike the usual intensity-based approach, we model first the survival index and then derive the mortality intensity dynamics. 
We use the $Y$-component of the state variable $Z$ to model the survival index
\begin{equation}\label{eq: longevity index}
	I_t := e^{- \gamma t} q(Y_t), \ \ \ \gamma \in \R \text{ and } q \text{ strictly positive polynomial on } E^Y,
\end{equation}
for every $t \in [0,T]$.
The parameter $\gamma$ will be used to adjust the value level of mortality intensity.
The same argument as before leads to the existence of a strictly positive number $\delta$ such that
\begin{equation}\label{eq: bound q}
	E^Y \subseteq \{ y \in \R^{m_2} : q(y) \geqslant \delta\}.
\end{equation}

The formula for the mortality intensity $(\mu):= (\mu_t)_{t \in [0,T]}$ can be obtained immediately 
\begin{equation}\label{eq: mortality intensity}
	\mu_t = - \left.\partial_T \log \left(I_T \right)  \right|_{T=t} = \gamma - \frac{\nabla q(Y_t)^\top \overline b(X_t, Y_t)}{q(Y_t)},
\end{equation}
for all $t \in [0.T]$.
Analogously to the case of risk-free short rate $r$, thanks to the compactness of $E^Y$ and the condition (\ref{eq: bound q}), we have that uniformly in $t$ the quantity
\[
	\frac{\nabla q(Y_t)^\top \overline b(X_t, Y_t)}{q(Y_t)}
\]
has an upper bound $\overline \gamma$ and a lower bound $\underline{\gamma}$. If we set $\gamma = \overline \gamma$, the mortality intensity has then a positive value range $[ 0 , \overline \gamma - \underline{\gamma} ]$.

Similar to the OIS bond case, under our assumption and by using definition (\ref{eq: longevity index}) of survival index and Proposition \ref{prop pol cond exp}, we can calculate explicitly the value of a $T$-longevity bond $P^l(t,T)$ at $t \in [0,T]$,
\begin{equation}\label{eq: def longevity bond}
\begin{aligned}
	P^l(t, T) &=S^*_t \condespg{\frac{I_T}{S^*_T}} = S^*_t \condespg{(S_T^{*})^{-1} e^{- \gamma_T} q(Y_T)}\\
				   &= I_t e^{- (\alpha + \gamma) (T-t)} \frac{\condespf {p(Z_T) q(Y_T)}}{p(Z_t) q(Y_t)}\\
				   &= I_t e^{- (\alpha + \gamma) (T-t)} \frac{\hat {pq}_{(t,T)}(Z_t)}{p(Z_t) q(Y_t)}\\
				   &= e^{ - \gamma T - \alpha (T-t)} \frac{\hat {pq}_{(t,T)}(Z_t)}{p(Z_t)},
\end{aligned}
\end{equation}
where in the second equality we use Lemma 6.1.1 of \cite{Bie-Rut}. Condition (\ref{eq: bound p}) guarantees the continuity of the process above.
Similar to the case of risk-free OIS bond $P(t,T)$, the value of longevity bond $P^l(t,T)$ at time $t$ is also a polynomial rational function of the state variable $Z_t$ with deterministic time-dependent coefficients. 


	
	

\section{Pricing and hedging life insurance liabilities}\label{sec: pricing and hedging insurance products}

It is well known that most of life insurance liabilities can be modelled as a combination of the following three building blocks, which are particular cases of dividends:

\begin{itemize}
	\item Pure endowment contract: the insurer pays only if the policyholder survives until the maturity $T$ of the contract.
	
	\item Term insurance contract: the payment is given only when the decease of the insured person occurs before or at $T$.
	
	\item Annuity contract: continuous cash flow is paid as long as the policyholder is alive or the contract is valid.
\end{itemize}

In the following sections, we compute the real world pricing formula and the benchmarked risk-minimizing strategy for the three building blocks, and show how the property in Proposition \ref{prop pol cond exp} gives explicit formulas in the case of polynomial payments, as well as a good approximation in the case of continuous payments. All theorems and notations are provided in Appendix \ref{app: benchmarked risk-min payment process}. 
For the sake of simplicity, we assume to invest only in the OIS bond and the longevity bond, i.e.
\[
	\tilde S_t = \left( \frac{P(t, T)}{S^*_t} , \frac{P^l(t, T)}{S^*_t} \right)^\top, \ \ \ \ t \in [0,T],
\]
and calculate the explicit benchmarked risk-minimizing strategy only for pure endowment contract assuming that $\dim W = n = 2$. The case of the other two building blocks is similar.

The following lemma will be used frequently.
\begin{lemma}\label{lemma: decomposition hat p}
	Let $p$ be a polynomial in $\Pol$ with coordinate representation 
	\[
		p(z) = H_n(z)^\top \vec{p},
	\]
	for $z \in E$, then for $0 \leqslant t \leqslant T$,
	\[
		\hat p_{(t,T)}(Z_t) = \condespg{p(Z_T)} = H_n(Z_0)^\top e^{T G_n} \vec p + \int^t_0 \nabla \hat p_{(u,T)}(Z_u)^\top \sigma (Z_u) \ud W_u.
	\]
\end{lemma}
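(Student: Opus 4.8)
The plan is first to reduce the $\Gb$-conditional expectation to an $\Fb$-conditional one, and then to obtain the stochastic integral representation by Itô's formula. Since $p(Z_T)$ is $\F_T$-measurable and the reference filtration $\Fb$ satisfies the $H$-hypothesis, we have $\condespg{p(Z_T)} = \condespf{p(Z_T)}$ (Lemma 6.1.1 of \cite{Bie-Rut}), and the latter equals $\hat p_{(t,T)}(Z_t) = H_n(Z_t)^\top e^{(T-t)G_n}\vec{p}$ by Proposition \ref{prop pol cond exp}. It thus remains to prove the integral representation for the $\Fb$-martingale $t \mapsto \hat p_{(t,T)}(Z_t)$.

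To this end I would apply Itô's formula to the time-dependent polynomial $f(t,z) := H_n(z)^\top e^{(T-t)G_n}\vec{p}$ along $Z$. For each fixed $t$ the map $z \mapsto f(t,z)$ is a polynomial of degree $\leqslant n$, so $f \in \mathcal{C}^{1,2}([0,T]\times\R^d)$; using the dynamics (\ref{eq: dynamics Z}) and the operator (\ref{operatore g}), Itô's formula gives
\[
	df(t,Z_t) = \big( \partial_t f(t,Z_t) + \mathbf{G} f(t,Z_t) \big)\,dt + \nabla_z f(t,Z_t)^\top \sigma(Z_t)\,dW_t,
\]
where $\nabla_z$ denotes the gradient in the space variable. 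Differentiating the matrix exponential yields $\partial_t f(t,z) = - H_n(z)^\top G_n e^{(T-t)G_n}\vec{p}$. On the other hand, for fixed $t$ the polynomial $z \mapsto f(t,z)$ belongs to $\Pol$ with coordinate vector $e^{(T-t)G_n}\vec{p}$, so by the matrix representation of $\mathbf{G}$ restricted to $\Pol$ we have $\mathbf{G} f(t,z) = H_n(z)^\top G_n e^{(T-t)G_n}\vec{p}$. Hence $\partial_t f + \mathbf{G} f \equiv 0$ — this is precisely the Kolmogorov backward equation satisfied by $\hat p_{(\cdot,T)}$ — and the finite variation part disappears. Integrating from $0$ to $t$ and using that $Z_0$ is the constant initial value, so that $f(0,Z_0) = H_n(Z_0)^\top e^{TG_n}\vec{p}$, yields the asserted identity.

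I do not expect a genuine obstacle: the only delicate point is the drift cancellation, which reduces to the elementary identity $\frac{d}{ds}e^{sG_n} = G_n e^{sG_n}$ together with the fact that $\mathbf{G}$ acts on coordinate vectors of elements of $\Pol$ through the matrix $G_n$. As a final remark, since $E$ is compact, $\sigma$ is continuous and $z \mapsto \nabla_z f(t,z)$ is polynomial with coefficients continuous in $t$, the integrand $\nabla \hat p_{(u,T)}(Z_u)^\top \sigma(Z_u)$ is bounded on $[0,T]\times E$; therefore the stochastic integral above is a true square-integrable $\Fb$-martingale, consistently with the martingale property already granted by Proposition \ref{prop pol cond exp}.
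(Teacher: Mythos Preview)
Your argument is correct and follows essentially the same route as the paper: apply It\^o's formula to $f(t,z)=H_n(z)^\top e^{(T-t)G_n}\vec p$ and observe that the drift vanishes because $\partial_t f = -H_n(z)^\top G_n e^{(T-t)G_n}\vec p$ cancels against $\mathbf{G} f = H_n(z)^\top G_n e^{(T-t)G_n}\vec p$. Your additional remarks --- the reduction from $\Gb$- to $\Fb$-conditional expectation via the $H$-hypothesis and the boundedness of the integrand ensuring a true martingale --- are welcome refinements that the paper leaves implicit.
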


\begin{proof}
	\begin{align*}
		 &\hat p_{(t,T)}(Z_t)\\
		=& \hat p_{(0,T)}(Z_0) + \int^t_0 \left( \frac{\partial}{\partial u} \hat p_{(u,T)}(Z_u) \right) \ud u + \int^t_0 \mathbf{G} \hat p_{(u,T)}(Z_u) \ud u\\ 
		 &+ \int^t_0 \nabla \hat p_{(u,T)}(Z_u)^\top \sigma (Z_u) \ud W_u\\
		=& H_n(Z_0)^\top e^{T G_n} \vec p - \int^t_0 \left( H_n(Z_u)^\top G_n e^{(T - u) G_n} \vec p \right) \ud u + \int^t_0 \left( H_n(Z_u)^\top G_n e^{(T - u) G_n} \vec p \right) \ud u\\ 
		 &+ \int^t_0 \nabla \hat p_{(u,T)}(Z_u)^\top \sigma (Z_u) \ud W_u\\
		=& H_n(Z_0)^\top e^{T G_n} \vec p + \int^t_0 \nabla \hat p_{(u,T)}(Z_u)^\top \sigma (Z_u) \ud W_u.
	\end{align*}
	The first equality is given by the It\^{o}'s formula and the second one is due to Proposition \ref{prop pol cond exp}.
\end{proof}

\subsection{Pure endowment}\label{sec: pure endowment}

A pure endowment contract provides a payment at the term $T$ of contract if the insured person is still alive. For $i= 1,...,n$, its payoff at $T$ associated to $i$-th policyholder is given by
\[
	\Ind{\tau^i > T} g_T,
\]
where the value $g_T$ is assumed to be a $\F_T$-measurable and square integrable random variable. For a homogeneous portfolio of $n$ policyholders with the same payoff $g_T$ we have
\[
	\sum_{i=1}^n \Ind{\tau^i > T} g_T = (n - N_T) g_T.
\] 
The benchmarked cumulative payment $A$ is given by
\[
	A_t =  \sum_{i=1}^n(S^*_T)^{-1} \I_{ \{\tau^i > T \}} g_T \Ind{t = T} = (S^*_T)^{-1} (n - N_T) g_T \Ind{t = T},
\]
for $t \in [0,T]$.

Let $V^T := (V^T_t)_{t \in [0,T]}$ denote the price process given by the real world pricing formula (\ref{eq: pricing formula dividend}) associated to a homogeneous portfolio of pure endowments. Under our model assumptions of Section \ref{sec: model assumptions}, we have at time $t \in [0,T]$,
\begin{align*}
	V^T_t =  & S^*_t \condespg{(S^*_T)^{-1} \sum_{i=1}^n \I_{ \{\tau^i > T \}} g_T} \\
	= & \sum_{i=1}^n S^*_t \condespg{(S^*_T)^{-1} \I_{ \{\tau^i > T \}} g_T }\\
	= & \sum_{i=1}^n \Ind{\tau^i > t} S^*_t \condespf{(S^*_T)^{-1} e^{- \int^T_t \mu_u \ud u} g_T }\\
	= & (n - N_t) e^{- (\gamma + \alpha) ( T - t)} \frac{\condespf{p(Z_T) q(Y_T) g_T }}{p(Z_t) q(Y_t)},
\end{align*}
where in the third equality we use Proposition 5.5 of \cite{Bar} combined with Corollary 5.1.1 of \cite{Bie-Rut}.
Then the benchmarked value process $\tilde S^{\bar \delta^T} := (\tilde S^{\bar \delta^T}_t)_{t \in [0,T]}$ associated to the benchmarked risk-minimizing strategy $\bar \delta^T = (\bar \delta^T_t)_{t \in [0,T]}$ of the given portfolio is
\[
	\tilde S^{\bar \delta^T}_t = (S^*_t)^{-1} V^T_t = (n - N_t) e^{- \alpha T - \gamma (T-t)} \frac{\condespf{p(Z_T) q(Y_T) g_T }}{q(Y_t)},
\]
for $t \in [0,T]$.
Proposition 5.11 of \cite{Bar} can be easily adapted to our case and together with (\ref{eq: benchmark process}) it shows that the benchmarked risk-minimizing strategy is given by $\bar \delta^T$ with
\begin{equation}
	\bar \delta^T_t = (n - N_{t-}) e^{- \alpha T - \gamma (T-t)} q^{-1}(Y_t) \phi_t,
\end{equation}
for $t \in [0,T]$, where the vector process $\phi := (\phi_t)_{t \in [0, T]}$ is obtained by the Galtchouk-Kunita-Watanabe decomposition of $U_t := \condespf{p(Z_T) q(Y_T) g_T }$
\begin{equation}\label{eq: gkw decomposition of U}
	U_t =\esp{p(Z_T) q(Y_T) g_T} + \int_0^t \phi_u^\top \ud \tilde S_u + L_t^{U}, \ \ \ \ t \in [0,T],
\end{equation}
where $\phi \in L^2(\tilde S, \P, \Gb)$ and $L^{U} \in \mathcal{M}^2_0(\P, \Gb)$ is strongly orthogonal to $\mathcal{I}^2(\tilde S, \P, \Gb)$.\\
The benchmarked cumulative cost process is
\begin{align*}
	C_t^{\bar \delta^T} =& n e^{- (\alpha + \gamma) T} \esp{p(Z_T) q(Y_T) g_T } + \int^t_0 (n - N_{u-}) e^{- \alpha T - \gamma (T-u)} q^{-1}(Y_u) \ud L_u^U\\
			&+ \int^t_0 U_{u-} e^{- \alpha T - \gamma (T-u)} q^{-1}(Y_u) \ud M_u, 
\end{align*}
for $t \in [0,T]$, where $M$ is given by (\ref{eq: process M}).

Now we consider the simplest case when the payoff is given by a polynomial function of the state variable, i.e.
\[
	g_T = g(Z_T) , \ \ \ \ \text{ with $g$ polynomial function}.
\]
In this case the pricing formula is reduced to
\begin{equation}\label{eq: pure endow polynomial}
	V^T_t = (n - N_t) e^{- (\gamma + \alpha) ( T - t)} \frac{\hat {pqg}_{(t,T)}(Z_t)}{p(Z_t) q(Y_t)}, \ \ \ \ t \in [0,T].
\end{equation}
We note that this includes the realistic cases for an insurance contract with constant payoff $g_T = k$, $k \in \R^+$, or the one with an index-linked payoff, e.g. proportional to the survival index at time $T$, that is $g_T = k I_T = k e^{-\gamma T} q(Y_T)$, $k \in \R^+$. 
In this case, we have
\begin{equation}\label{eq: discounted value process}
	U_t = \hat {pqg}_{(t,T)}(Z_t), \ \ \ \ t \in [0,T].
\end{equation}
Lemma \ref{lemma: decomposition hat p} applied to (\ref{eq: discounted value process}), (\ref{eq: OIS bond definition}) and (\ref{eq: def longevity bond}) leads to the following decompositions
\[
	\hat{pqg}_{(t,T)}(Z_t) =\hat{pqg}_{(0,T)}(Z_0) + \int^t_0 \nabla_x \hat{pqg}_{(u, T)} (Z_u)^\top \sigma(X_u) \ud W_u,
\]
\[
	(S^*_t)^{-1} P(t, T) = e^{- \alpha T} \hat{p}_{(0,T)} (X_0) + \int^t_0 e^{-\alpha T} \nabla_x \hat{p}_{(u, T)} (Z_u)^\top \sigma(X_u) \ud W_u,
\]
\[
	(S^*_t)^{-1} P^l(t, T) = e^{- (\alpha + \gamma) T} \hat{pq}_{(0,T)} (Z_0)  + \int^t_0 e^{-(\alpha + \gamma) T} \nabla_x \hat{pq}_{(u, T)} (Z_u)^\top \sigma(X_u) \ud W_u,
\]
where $t \in [0,T]$.
We define the 2-dimensional square matrix process $\theta:=(\theta_t)_{t \in [0,T]}$
\begin{equation}\label{eq: matrix A}
	\theta_t := \left[ \begin{array}{cc}
	 e^{-\alpha T} \sigma (X_t)^\top \nabla_x \hat{p}_{(t, T)} (Z_t), & e^{-(\alpha + \gamma) T} \sigma(X_t)^\top \nabla_x \hat{pq}_{(t, T)} (Z_t) \end{array} \right],
\end{equation}
and the 2-dimensional vector process $\phi := (\phi_t)_{t \in [0,T]}$
\[
	\phi_t = \left( \begin{array}{c}
		\phi_t^1 \\ \phi_t^2 \end{array} \right),
\]
satisfying
\[
	\theta_t \phi_t  = \sigma (X_t)^\top \nabla \hat{pqg}_{(t, T)} (Z_t),
\]
for all $t \in [0,T]$. Providing that the matrix $\theta_t$ is a.s. invertible for all $t \in [0,T]$, we have
\[
	\phi_t = \theta_t^{-1} \sigma (X_t)^\top \nabla \hat{pqg}_{(t, T)} (Z_t),
\]
for all $t \in [0,T]$.
Then 
\[
	\hat{pqg}_{(t,T)}(Z_t) =\hat{pqg}_{(0,T)}(Z_0) + \int_0^t \phi^1_u \ud \left((S^*_u)^{-1} P(u,T)\right) + \int_0^t \phi^2_u \ud \left((S^*_u)^{-1} P^l(u,T)\right),
\]
hence for $t \in [0,T]$, the benchmarked risk-minimizing strategy is given by
\[
	\bar \delta^T_t = \left((n - N_{t-}) e^{-\alpha T - \gamma (T - t)} q^{-1}(Y_t) \phi^1_t, (n - N_{t-}) e^{-\alpha T - \gamma (T - t)} q^{-1}(Y_t) \phi^2_t \right),
\]
and the benchmarked cumulative cost process is given by
\[
	C_t^{\bar \delta^T} = n e^{- (\alpha + \gamma) T} \hat{pqg}_{(0,T)}(Z_0) + \int^t_0 \hat{pqg}_{(u,T)}(Z_{u}) e^{- \alpha T - \gamma (T-u)} q^{-1}(Y_u) \ud M_u.
\]

If now we assume that the payoff is a generic continuous function of the state variable, i.e.
\[
	g_T  = g(Z_T) , \ \ \ \ \text{ with $g$ continuous function on } E,
\]
then it is not always possible to find an explicit form of the conditional expectation as in the polynomial case.
This class includes a large family of longevity linked products, e.g. options on survival index or longevity bond.
However, providing that the state space $E$ is compact, we can always find a uniform polynomial approximation $\{ g_m \}_{m \in \N}$ of $g$ on E, i.e. 
\begin{equation}\label{eq: approximation g}
	\norm{g_m - g}_\infty \xrightarrow[]{m \rightarrow \infty} 0 \text{ \ \ \ \  on \ \ } E,
\end{equation}
where the norm $\norm{ \cdot }_\infty$ is defined by
\[
	\norm{f}_\infty := \sup_{\substack{x \in E \\ \norm{x} = 1}} |f(x)|,
\]
for any $f \in \mathcal{C}(E)$. Proposition \ref{prop: approximation pure endowment} shows that the sequence of pricing formulas related to $\{ g_m \}_{m \in \N}$ provides a good approximation of the one related to $g$.

\begin{lemma}\label{lemma: approximation of conditional expectation}
	Let $\{ g_m \}_{m \in \N}$ be a uniform polynomial approximation of the continuous function $g$ on $E$ as in (\ref{eq: approximation g}). Then
	\[
		\sup_{t \in [0,T]} \left|\condespf{g_m(Z_T) - g(Z_T)} \right|\xrightarrow[]{m \rightarrow \infty} 0, \ \ \ \ \text{a.s.},
	\]
	\[
		\sup_{t \in [0,T]} \norm{ \condespf{g_m(Z_T) - g(Z_T)} }_{L^p(\Omega, \P)} \xrightarrow[]{m \rightarrow \infty} 0,
	\]
	for all $p \geqslant 1$.
\end{lemma}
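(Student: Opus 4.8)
The plan is to reduce everything to the elementary observation that, because the state space $E$ is compact, the polynomial error $g_m-g$ is controlled by the deterministic constant $c_m := \norm{g_m - g}_\infty$, which tends to $0$ by \eqref{eq: approximation g}. Since $Z$ is an $E$-valued process we have $Z_T \in E$ $\P$-a.s., hence $|g_m(Z_T) - g(Z_T)| \leqslant c_m$ $\P$-a.s.; moreover $g$ is continuous on the compact set $E$ and each $g_m$ is a polynomial, so both $g(Z_T)$ and $g_m(Z_T)$ lie in $L^p(\Omega,\P)$ for every $p \geqslant 1$ and no integrability question arises.

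First I would apply the conditional Jensen inequality (the $L^1$-contraction property of conditional expectation): for each fixed $t \in [0,T]$,
\[
  \left| \condespf{g_m(Z_T) - g(Z_T)} \right| \;\leqslant\; \condespf{|g_m(Z_T) - g(Z_T)|} \;\leqslant\; c_m \qquad \P\text{-a.s.}
\]
The $L^p$-statement then follows immediately: since $c_m$ is a constant, the above gives $\norm{\condespf{g_m(Z_T) - g(Z_T)}}_{L^p(\Omega,\P)} \leqslant c_m$ for every $t$, so that $\sup_{t\in[0,T]}\norm{\condespf{g_m(Z_T) - g(Z_T)}}_{L^p(\Omega,\P)} \leqslant c_m \to 0$.

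For the almost-sure statement the only subtlety is that the supremum runs over the uncountable set $[0,T]$, whereas the exceptional null set in the displayed estimate a priori depends on $t$. To handle this I would fix $m$, note that $t \mapsto \condespf{g_m(Z_T) - g(Z_T)}$ is a uniformly integrable $\Fb$-martingale (closed on the right by the bounded variable $g_m(Z_T) - g(Z_T)$, and $\Fb$ is complete and right-continuous), and pass to its c\`adl\`ag modification $Y^m$. Then $|Y^m_t| \leqslant c_m$ a.s. for every fixed $t$, and by right-continuity $\sup_{t\in[0,T]}|Y^m_t| = \sup_{t\in\Q\cap[0,T]}|Y^m_t| \leqslant c_m$ a.s., which is a countable supremum of almost-sure bounds. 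Letting $m \to \infty$ and using $c_m \to 0$ concludes the argument.

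The main (and essentially only) obstacle is this version/measurability bookkeeping needed to make the uniform-in-$t$ estimate rigorous; all of the analytic content is carried by the uniform bound of $g_m - g$ on the compact state space $E$.
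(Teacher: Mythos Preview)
Your proposal is correct and follows essentially the same route as the paper: bound the conditional expectation in absolute value by $\condespf{|g_m(Z_T)-g(Z_T)|}\leqslant\norm{g_m-g}_\infty$, observe this bound is a deterministic constant independent of $t$, and conclude both the a.s.\ and the $L^p$ uniform convergence from $\norm{g_m-g}_\infty\to 0$. Your additional discussion of the c\`adl\`ag modification to handle the uncountable supremum is a technical refinement that the paper simply omits; it does not change the argument but makes the a.s.\ statement fully rigorous.
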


\begin{proof}
	We first prove the a.s. approximation,
	\begin{align*}
		& \sup_{t \in [0,T]} \left| \condespf{g_m(Z_T) - g(Z_T)} \right| \leqslant \sup_{t \in [0,T]} \condespf{ \left| g_m(Z_T) - g(Z_T) \right| \ } \\
		 &\leqslant \sup_{t \in [0,T]} \condespf{ \norm{ g_m - g }_\infty \ } = \norm{g_m - g}_\infty \xrightarrow[]{m \rightarrow \infty} 0, \ \ \ \ \text{a.s.}.
	\end{align*} 
	Similarly we have the $L^p(\Omega, \P)$ approximation uniformly in $t \in [0, T]$ for any $p \geqslant 1$,
	\begin{align*}
		& \sup_{t \in [0,T]} \esp{\left| \condespf{g_m(Z_T) - g(Z_T)} \right|^p} \leqslant \esp{ \left( \sup_{t \in [0,T]}  \left| \condespf{g_m(Z_T) - g(Z_T)} \right| \right)^p} \\
		 &\leqslant \norm{g_m - g}^p_\infty \xrightarrow[]{m \rightarrow \infty} 0.
	\end{align*} 
\end{proof}
\ \\ \
We set $V^T :=(V^T_t)_{t \in [0,T]}$ where
	\begin{equation}\label{eq: VT}
		V^T_t = (n - N_t) e^{- (\gamma + \alpha) ( T - t)} \frac{\condespf{p(Z_T) q(Y_T) g(Z_T) }}{p(Z_t) q(Y_t)}, 
	\end{equation}
for $t \in [0,T]$.

\begin{prop}\label{prop: approximation pure endowment}
	Let $\{ g_m \}_{m \in \N}$ be a uniform polynomial approximation of the continuous function $g$ on $E$ as in (\ref{eq: approximation g}). For every $m \in \N$ we consider 
	\[
		V^{T,m} := \left( V^{T,m}_t \right)_{t \in [0,T]} = \left((n - N_t) e^{- (\gamma + \alpha) ( T - t)}  \frac{\hat {pqg_m}_{(t,T)}(Z_t)}{p(Z_t) q(Y_t)}\right)_{t \in [0,T]}.
	\]
	Then $\left\{V^{T,m}\right\}_{m \in \N}$  provides both a pathwise and $L^p(\Omega, \P)$ approximation of $V^T$ in (\ref{eq: VT}) for any $p \geqslant 1$ uniformly in $t \in [0,T]$, i.e.
	\[
		\sup_{t \in [0,T]} \left|V^{T,m}_t - V^T_t \right|\xrightarrow[]{m \rightarrow \infty} 0, \ \ \ \ \text{a.s.},
	\]
	\[
		\sup_{t \in [0,T]} \norm{ V^{T,m}_t - V^T_t }_{L^p(\Omega, \P)} \xrightarrow[]{m \rightarrow \infty} 0.
	\]
\end{prop}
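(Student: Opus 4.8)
The plan is to reduce the statement to Lemma \ref{lemma: approximation of conditional expectation} by checking that, apart from the conditional expectation of $g_m(Z_T)-g(Z_T)$, every factor appearing in $V^{T,m}_t$ and $V^T_t$ is controlled by a deterministic constant, uniformly in $t\in[0,T]$. First I would note that $pqg_m$ is a polynomial on $E$, so that Proposition \ref{prop pol cond exp} gives $\hat{pqg_m}_{(t,T)}(Z_t)=\condespf{p(Z_T)q(Y_T)g_m(Z_T)}$ for every $t\in[0,T]$. Subtracting the two price processes and using the explicit form of $V^T_t$ in (\ref{eq: VT}), the $(n-N_t)$-, exponential-, and $(p(Z_t)q(Y_t))^{-1}$-factors are common, whence
\[
	V^{T,m}_t-V^T_t=(n-N_t)\,e^{-(\gamma+\alpha)(T-t)}\,\frac{\condespf{p(Z_T)q(Y_T)\bigl(g_m(Z_T)-g(Z_T)\bigr)}}{p(Z_t)q(Y_t)},\qquad t\in[0,T].
\]

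Next I would collect the uniform bounds. One has $0\leqslant n-N_t\leqslant n$ and $e^{-(\gamma+\alpha)(T-t)}\leqslant e^{|\gamma+\alpha|T}=:c_1$ for all $t\in[0,T]$. Since $E$, and hence $E^Y$, is compact and $p,q$ are continuous, $c_2:=\sup_{z\in E}p(z)\cdot\sup_{y\in E^Y}q(y)<\infty$, so that $p(Z_T)q(Y_T)\leqslant c_2$ a.s.; on the other hand (\ref{eq: bound p}) and (\ref{eq: bound q}) give $p(Z_t)q(Y_t)\geqslant\varepsilon\delta>0$ for all $t\in[0,T]$. Setting $c:=nc_1c_2/(\varepsilon\delta)$, these estimates combine, after bringing the absolute value inside the conditional expectation, into the pointwise bound
\[
	\bigl|V^{T,m}_t-V^T_t\bigr|\leqslant c\,\condespf{\bigl|g_m(Z_T)-g(Z_T)\bigr|},\qquad t\in[0,T].
\]

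Finally I would take the supremum over $t\in[0,T]$. Since $Z_T\in E$ a.s. and $g_m\to g$ uniformly on $E$ by (\ref{eq: approximation g}), one has $\sup_{t\in[0,T]}\condespf{|g_m(Z_T)-g(Z_T)|}\leqslant\norm{g_m-g}_\infty$, and the same estimates used in the proof of Lemma \ref{lemma: approximation of conditional expectation} — together with the fact that $c$ is deterministic, so that $\norm{cX}_{L^p(\Omega,\P)}=c\,\norm{X}_{L^p(\Omega,\P)}$ — yield both
\[
	\sup_{t\in[0,T]}\bigl|V^{T,m}_t-V^T_t\bigr|\xrightarrow[]{m\rightarrow\infty}0\ \ \text{a.s.}\qquad\text{and}\qquad\sup_{t\in[0,T]}\norm{V^{T,m}_t-V^T_t}_{L^p(\Omega,\P)}\xrightarrow[]{m\rightarrow\infty}0.
\]
I do not expect a genuine obstacle here; the only point requiring care is that each of the extra multiplicative factors — the population term $n-N_t$, the exponential discount, the reciprocal polynomial $1/(p(Z_t)q(Y_t))$, and the polynomial weight $p(Z_T)q(Y_T)$ — is bounded by a constant, which is exactly what compactness of $E$ together with (\ref{eq: bound p}) and (\ref{eq: bound q}) provide, so that the whole error collapses onto $\norm{g_m-g}_\infty$.
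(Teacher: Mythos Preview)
Your proposal is correct and follows exactly the approach the paper intends: the paper's own proof is simply ``Straightforward from Lemma \ref{lemma: approximation of conditional expectation}'', and you have filled in precisely the details that make this straightforward, namely the uniform deterministic bounds on $(n-N_t)$, the exponential factor, $p(Z_T)q(Y_T)$, and $1/(p(Z_t)q(Y_t))$ coming from compactness and (\ref{eq: bound p})--(\ref{eq: bound q}).
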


\begin{proof}
	Straightforward from Lemma \ref{lemma: approximation of conditional expectation}.
\end{proof}

Now we prove that both the benchmarked risk-minimizing strategies and benchmarked cumulative cost processes associated to $\{ g_m \}_{m \in \N}$ provide a good approximation of the ones associated to $g$ as well.

\begin{lemma}\label{lemma: convergence phi and LU}
	Let $\{ g_m \}_{m \in \N}$ be a uniform polynomial approximation of the continuous function $g$ on $E$ as in (\ref{eq: approximation g}) and for every $m \in \N$ we consider
	\begin{equation}\label{eq: Um}
		U_m := \left((U_m)_t\right)_{t \in [0,T]} = \left(\hat{pqg_m}_{(t,T)} (Z_t)\right)_{t \in [0,T]},
	\end{equation}
	with the following Galtchouk-Kunita-Watanabe decomposition
	\[
		\left(U_m\right)_t = \left(U_m\right)_0 + \int ^t_0 (\phi_m)_u^\top \ud \tilde S_u + L^{U_m}_t, \ \ \ \ t \in [0,T].
	\]
	Let $\phi$ and $L^U$ be the two processes given by the Galtchouk-Kunita-Watanabe decomposition of $U$ in (\ref{eq: gkw decomposition of U}) with respect to $\tilde S$, then
	\begin{equation}\label{eq: convergence phi}
		\norm{\phi - \phi_m }_{L^2(\tilde S, \P, \Gb)}\xrightarrow[]{m \rightarrow \infty} 0,
	\end{equation}
	and 
	\begin{equation}\label{eq: convergence L}
		\norm{L^U - L^{U_m}}_{M^2_0(\P, \Gb)} \xrightarrow[]{m \rightarrow \infty} 0.
	\end{equation}
	If the matrix process $\theta$ defined in (\ref{eq: matrix A}) is such that, for all $t \in [0,T]$, $\theta_t$ is a.s. invertible with $\theta^{-1} \in L^2 \left(\Omega \times [0,T], \P \otimes \ud t \right)$, then
	\begin{equation}\label{eq: convergence phi with A invertible}
		\norm{\phi - \phi_m}_{L^1 \left(\Omega \times [0,T], \P \otimes \ud t \right)} \xrightarrow[]{m \rightarrow \infty} 0,
	\end{equation}
	and $L^U = L^{U_m} = 0$ for all $m \in \N$.
\end{lemma}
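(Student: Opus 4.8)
The plan is to reduce all four convergences to two ingredients: on the compact state space the random variables closing the martingales $U$ and $U_m$ are uniformly close, and the Galtchouk--Kunita--Watanabe decomposition is an \emph{orthogonal} splitting in $\mathcal{M}^2_0(\P,\Gb)$ whose stochastic-integral part is isometric to $L^2(\tilde S,\P,\Gb)$. First I would record that $U$ is the $(\P,\Gb)$-martingale closed by $U_T:=p(Z_T)q(Y_T)g(Z_T)$ and, for each $m$, that $U_m=\hat{pqg_m}_{(\cdot,T)}(Z_\cdot)$ is closed by $(U_m)_T:=p(Z_T)q(Y_T)g_m(Z_T)$ (using Proposition \ref{prop pol cond exp} for the latter, since $pqg_m$ is a polynomial); both are $\Fb$-martingales, hence $\Gb$-martingales by the $H$-hypothesis. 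Since $p$ is bounded on $E$, $q$ is bounded on $E^Y$ and $Z$ is $E$-valued, there is a constant $K$ with $|p(z)q(y)|\leqslant K$ on $E$, so that (\ref{eq: approximation g}) gives
\[
	\norm{U_T-(U_m)_T}_{L^2(\Omega,\P)}\ \leqslant\ K\,\norm{g-g_m}_\infty\ \xrightarrow[]{m \rightarrow \infty}\ 0 .
\]

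Next I would subtract the two GKW decompositions. Because $U_0=\esp{U_T}$ and $(U_m)_0=\esp{(U_m)_T}$, the centred process
\[
	\big(U_t-(U_m)_t\big)-\esp{U_T-(U_m)_T}\ =\ \int_0^t\big(\phi_u-(\phi_m)_u\big)^\top\ud\tilde S_u+\big(L^U_t-L^{U_m}_t\big)
\]
lies in $\mathcal{M}^2_0(\P,\Gb)$, and its two summands are strongly orthogonal there (the integral belongs to $\mathcal{I}^2(\tilde S,\P,\Gb)$, to which $L^U-L^{U_m}$ is strongly orthogonal). Pythagoras together with the isometry defining $\norm{\cdot}_{L^2(\tilde S,\P,\Gb)}$ then yields
\[
	\norm{\phi-\phi_m}^2_{L^2(\tilde S,\P,\Gb)}+\norm{L^U-L^{U_m}}^2_{\mathcal{M}^2_0(\P,\Gb)}\ =\ \esp{\big(U_T-(U_m)_T-\esp{U_T-(U_m)_T}\big)^2}\ \leqslant\ \esp{\big(U_T-(U_m)_T\big)^2},
\]
which tends to $0$ by the previous display; as both terms on the left are non-negative, each tends to $0$, proving (\ref{eq: convergence phi}) and (\ref{eq: convergence L}).

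For the invertible case, I would first observe that, each $g_m$ being a polynomial, the computation leading to (\ref{eq: matrix A}) (which rests on Lemma \ref{lemma: decomposition hat p}) writes $U_m=(U_m)_0+\int_0^\cdot(\phi_m)_u^\top\ud\tilde S_u$ with $(\phi_m)_t=\theta_t^{-1}\sigma(X_t)^\top\nabla\hat{pqg_m}_{(t,T)}(Z_t)$ and with orthogonal part identically $0$; by uniqueness of the GKW decomposition this is precisely the decomposition in the statement, so $L^{U_m}=0$ for every $m$, and then (\ref{eq: convergence L}) forces $L^U=0$. Finally, reading $\ud[\tilde S]_u=\theta_u^\top\theta_u\,\ud u$ off the explicit martingale representations of $(S^*)^{-1}P(\cdot,T)$ and $(S^*)^{-1}P^l(\cdot,T)$, relation (\ref{eq: convergence phi}) says that $\esp{\int_0^T\norm{\theta_u(\phi_u-(\phi_m)_u)}^2\ud u}\to 0$; writing $\phi_u-(\phi_m)_u=\theta_u^{-1}\big(\theta_u(\phi_u-(\phi_m)_u)\big)$ and applying the Cauchy--Schwarz inequality in $L^2(\Omega\times[0,T],\P\otimes\ud t)$ gives
\[
	\norm{\phi-\phi_m}_{L^1(\Omega\times[0,T],\,\P\otimes\ud t)}\ \leqslant\ \norm{\theta^{-1}}_{L^2(\Omega\times[0,T],\,\P\otimes\ud t)}\,\norm{\theta(\phi-\phi_m)}_{L^2(\Omega\times[0,T],\,\P\otimes\ud t)}\ \xrightarrow[]{m \rightarrow \infty}\ 0,
\]
which is (\ref{eq: convergence phi with A invertible}).

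The step I expect to be the main obstacle is the last one: one must verify that the integrand obtained by inverting $\theta$ genuinely coincides with the GKW integrand $\phi_m$ (this is where uniqueness of the decomposition enters) and that the quadratic-variation structure $\ud[\tilde S]_u=\theta_u^\top\theta_u\,\ud u$ is identified correctly, so that convergence in $L^2(\tilde S,\P,\Gb)$ can be downgraded to convergence in $L^1(\Omega\times[0,T],\P\otimes\ud t)$ --- the hypothesis $\theta^{-1}\in L^2(\Omega\times[0,T],\P\otimes\ud t)$ being exactly what legitimises the Cauchy--Schwarz estimate. The earlier steps are routine once the uniform closeness of the closing variables and the orthogonality and isometry of the GKW decomposition are in hand.
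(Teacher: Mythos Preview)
Your proposal is correct and follows essentially the same route as the paper: both arguments establish $\norm{U-U_m}_{M^2_0(\P,\Gb)}\to 0$ (you via a direct terminal-value bound using compactness, the paper via Proposition \ref{prop: approximation pure endowment}), then invoke orthogonality of the GKW decomposition to split off (\ref{eq: convergence phi}) and (\ref{eq: convergence L}), and finally identify $\ud[\tilde S]_u=\theta_u^\top\theta_u\,\ud u$ and apply Cauchy--Schwarz with $\theta^{-1}\in L^2$ for (\ref{eq: convergence phi with A invertible}), using Lemma \ref{lemma: decomposition hat p} to get $L^{U_m}=0$.
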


\begin{proof}
	Proposition \ref{prop: approximation pure endowment} gives in particular the following convergence in $M^2_0(\P, \Gb)$,
	\[
		\norm{ U - U_m }^2_{M^2_0(\P, \Gb)} = \sup_{t \in [0,T]}	\esp{\left( U_t - (U_m)_t \right)^2}\xrightarrow[]{m \rightarrow \infty} 0.
	\]
	Since $L^U$ and $\left( L^{U_m} \right)_{m \in \N}$ are strongly orthogonal to the space $\mathcal{I}(\tilde S, \P, \Gb)$, we have
	\begin{align*}
		&\norm{ \int ^\cdot_0 \left( \phi_u^\top - (\phi_m)_u^\top \right) \ud \tilde S_u + \left(L^U - L^{U_m}\right) }^2_{M^2_0(\P, \Gb)}\\
		=& \norm{ \int ^\cdot_0 \left( \phi_u^\top - (\phi_m)_u^\top \right) \ud \tilde S_u}^2_{M^2_0(\P, \Gb)} + \norm{ L^U - L^{U_m} }^2_{M^2_0(\P, \Gb)}\\
		=&\norm{\phi - \phi_m}^2_{L^2(\tilde S, \P, \Gb)} + \norm{ L^U - L^{U_m} }^2_{M^2_0(\P, \Gb)},
	\end{align*}
	which implies that
	\begin{equation}\label{eq: converg L}
		\norm{L^U - L^{U_m} }_{M^2_0(\P, \Gb)} \xrightarrow[]{m \rightarrow \infty} 0,
	\end{equation}
	and
	\[
		 \norm{\phi - \phi_m }_{L^2(\tilde S, \P, \Gb)} \xrightarrow[]{m \rightarrow \infty} 0.
	\]
	Moreover, for every $m \in \N$, the It\^{o} isometry yields
	\begin{align*}
		&\norm{\phi - \phi_m }_{L^2(\tilde S, \P, \Gb)}\\
		=&\esp{\int^T_0  \left( \phi_u- (\phi_m)_u \right)^\top \ud [\tilde S]_u  \left( \phi_u - (\phi_m)_u \right)}\\
		=& \esp{\int^T_0  \left( \phi_u- (\phi_m)_u \right)^\top \theta_u^\top  \ud[W]_u  \theta_u  \left( \phi_u - (\phi_m)_u \right)} \\
		=&\esp{    \int^T_0 \left( \phi_u- (\phi_m)_u \right)^\top \theta_u^\top  \theta_u  \left( \phi_u - (\phi_m)_u \right)\ud u} \\
		=& \norm{ \theta (\phi - \phi_m) }_{L^2 \left(\Omega \times [0,T], \P \otimes \ud t \right)}.
	\end{align*}
	If the matrix $\theta_u$ is invertible for all $u \in [0,T]$ a.s. with 
	\[
	\theta^{-1} \in {L^2 \left(\Omega \times [0,T], \P \otimes \ud t \right)},
	\]
	then Cauchy–Schwarz inequality leads to 
	\begin{align*}
			&\esp{\int^T_0 \left| (\phi)_u - (\phi_m)_u \right| du}\\ 
		\leqslant& \norm{ \theta (\phi - \phi_m) }_{L^2 \left(\Omega \times [0,T], \P \otimes \ud t \right)} \cdot \norm{\theta^{-1} }_{L^2 \left(\Omega \times [0,T], \P \otimes \ud t \right)}\\
		&\xrightarrow[]{m \rightarrow \infty} 0.
	\end{align*}
	In particular we have $L^{U_m} = 0$ for all $m \in \N$ by Lemma \ref{lemma: decomposition hat p}. Then (\ref{eq: converg L}) yields $L = 0$.
\end{proof}

\begin{rem}\label{rem: martingale representation}
	We note that if $g$ is given by a continuous function, via a convergence argument (similar to the one in Lemma \ref{lemma: convergence phi and LU}) we obtain by Lemma \ref{lemma: decomposition hat p} that the Galtchouk-Kunita-Watanabe decomposition of $U$ with projection on the subspace $\mathcal{I}(W, \P, \Gb)$ is given by
	\[
		U_t = U_0 + \int^t_0 \psi^\top_u \ud W_u, \ \ \ \ t \in [0,T],
	\]
	where $\psi := (\psi_t)_{t \in [0,T]}$ is a predictable $W$-integrable vector process, i.e. $U$ contains no orthogonal term even without the assumption $\Fb = \Fb^W$.
\end{rem}

\begin{prop}
	Let $\{ g_m \}_{m \in \N}$ be a uniform polynomial approximation of the continuous function $g$ on $E$ as in (\ref{eq: approximation g}). If $\bar \delta^T$ and $C^{\bar \delta^T}$ are respectively the benchmarked risk-minimizing strategy and benchmarked cumulative cost process associated to $g$, $\bar \delta^T_m$ and $C^{\bar \delta^T_m}$ are the ones associated to $g_m$, then
	\begin{equation}\label{eq: convergence strategy}
		\norm{ \bar \delta^T - \bar \delta^T_m }_{L^2(\tilde S, \P, \Gb)} \xrightarrow[]{m \rightarrow \infty} 0,
	\end{equation}
	\begin{equation}\label{eq: convergence cost}
		\norm{ C^{\bar \delta^T} - C^{\bar \delta^T_m} }_{M^2_0(\P, \Gb)} \xrightarrow[]{m \rightarrow \infty} 0.
	\end{equation}
	If furthermore the matrix process $\theta$ given by (\ref{eq: matrix A}) is a.s. invertible, then
	\begin{equation}\label{eq: convergence strategy with A invertible}
		\norm{\bar \delta^T - \bar \delta^T_m}_{L^1 \left(\Omega \times [0,T], \P \otimes \ud t \right)} \xrightarrow[]{m \rightarrow \infty} 0.
	\end{equation}
\end{prop}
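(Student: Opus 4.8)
The plan is to reduce all three statements to Lemma~\ref{lemma: convergence phi and LU} and Proposition~\ref{prop: approximation pure endowment}, exploiting that $\bar\delta^T_m$ and $C^{\bar\delta^T_m}$ are obtained from $\phi_m$, $L^{U_m}$, $U_m$ by multiplying against coefficient processes that do not depend on the choice of $g$ versus $g_m$ and that are uniformly bounded. First I would record those bounds: by compactness of $E$ and $E^Y$, by (\ref{eq: bound p})--(\ref{eq: bound q}) and by continuity of $p$, $q$, $g$, all of $(n-N_{\cdot-})$, $e^{-\alpha T-\gamma(T-\cdot)}$, $q^{-1}(Y_\cdot)$, $U_{\cdot-}=\hat{pqg}_{(\cdot,T)}(Z_\cdot)$ and the intensity $\mu$ are dominated by deterministic constants on $\Omega\times[0,T]$. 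I write $K$ for a constant bounding the products of these appearing below and set $\eta_t:=e^{-\alpha T-\gamma(T-t)}q^{-1}(Y_t)$; all these processes are $\Fb$-adapted with left-continuous or continuous paths, hence $\Gb$-predictable, which legitimises the stochastic integrals used.

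For (\ref{eq: convergence strategy}): the explicit form of $\bar\delta^T$, applied to $g$ and to $g_m$, gives $\bar\delta^T_t-(\bar\delta^T_m)_t=\rho_t\bigl(\phi_t-(\phi_m)_t\bigr)$ with $\rho_t:=(n-N_{t-})e^{-\alpha T-\gamma(T-t)}q^{-1}(Y_t)$ a scalar process satisfying $|\rho_t|\leqslant K$. Since $\ud[\tilde S]_t=\theta_t^\top\theta_t\,\ud t$ is positive semidefinite,
\[
	\norm{\bar\delta^T-\bar\delta^T_m}_{L^2(\tilde S,\P,\Gb)}^2=\esp{\int_0^T\rho_u^2\bigl(\phi_u-(\phi_m)_u\bigr)^\top\ud[\tilde S]_u\bigl(\phi_u-(\phi_m)_u\bigr)}\leqslant K^2\,\norm{\phi-\phi_m}_{L^2(\tilde S,\P,\Gb)}^2,
\]
which tends to $0$ by (\ref{eq: convergence phi}). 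Running the same estimate with $\int_0^T|\cdot|\,\ud u$ in place of the $[\tilde S]$-bracket and invoking (\ref{eq: convergence phi with A invertible}) gives (\ref{eq: convergence strategy with A invertible}) when $\theta$ is a.s. invertible.

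For (\ref{eq: convergence cost}): subtracting the cost formula for $g_m$ from that for $g$, and using $U_T=p(Z_T)q(Y_T)g(Z_T)$, yields
\[
	C_t^{\bar\delta^T}-C_t^{\bar\delta^T_m}=ne^{-(\alpha+\gamma)T}\bigl(\esp{U_T}-\esp{(U_m)_T}\bigr)+\int_0^t(n-N_{u-})\eta_u\,\ud\bigl(L^U-L^{U_m}\bigr)_u+\int_0^t\bigl(U_{u-}-(U_m)_{u-}\bigr)\eta_u\,\ud M_u.
\]
Applying $(a+b+c)^2\leqslant3(a^2+b^2+c^2)$, taking expectations and $\sup_{t\in[0,T]}$, it suffices to bound three pieces: by Jensen, $|\esp{U_T}-\esp{(U_m)_T}|\leqslant\norm{U-U_m}_{M^2_0(\P,\Gb)}\to0$ by Proposition~\ref{prop: approximation pure endowment}; by the It\^o isometry and $(n-N_{u-})^2\eta_u^2\leqslant K^2$, the $\ud L$-term contributes at most $K^2\,\esp{[L^U-L^{U_m}]_T}=K^2\,\norm{L^U-L^{U_m}}_{M^2_0(\P,\Gb)}^2\to0$ by (\ref{eq: convergence L}); and since $[M]=[N]=N$, replacing $\ud N_u$ by the $\Gb$-compensator of $N$ (see (\ref{eq: process M})), whose density $(n-N_{u-})\mu_u$ is bounded by $K$, the $\ud M$-term contributes at most $K^2T\,\sup_{t\in[0,T]}\esp{(U_t-(U_m)_t)^2}\to0$, again by Proposition~\ref{prop: approximation pure endowment}. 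This gives (\ref{eq: convergence cost}).

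I expect the main obstacle to be not any single estimate but the careful justification of the uniform bounds collected in the first paragraph---particularly the boundedness of $U$ and of the compensator density of $N$---since it is exactly this that reduces the $\ud M$- and $\ud L^U$-integrals, and hence the whole statement, to the already established convergences (\ref{eq: convergence phi}), (\ref{eq: convergence L}) and Proposition~\ref{prop: approximation pure endowment}.
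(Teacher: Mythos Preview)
Your proof is correct and follows essentially the same route as the paper: the strategy convergences are reduced to Lemma~\ref{lemma: convergence phi and LU} via the bounded scalar factor $(n-N_{t-})e^{-\alpha T-\gamma(T-t)}q^{-1}(Y_t)$, and the cost convergence is obtained by splitting $C^{\bar\delta^T}-C^{\bar\delta^T_m}$ into the constant term, the $\ud(L^U-L^{U_m})$-integral, and the $\ud M$-integral, each controlled by compactness-induced uniform bounds together with Lemma~\ref{lemma: convergence phi and LU} and Proposition~\ref{prop: approximation pure endowment}. Your treatment of the $\ud M$-term (passing from $[M]=N$ to the compensator with bounded density $(n-N_{u-})\mu_u$) is in fact a touch more explicit than the paper's appeal to dominated convergence, but the substance is identical.
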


\begin{proof}
	(\ref{eq: convergence strategy}) and (\ref{eq: convergence strategy with A invertible}) are immediate consequence of (\ref{eq: convergence phi}) and (\ref{eq: convergence phi with A invertible}) in Lemma \ref{lemma: convergence phi and LU}. Now we prove the convergence in $M^2_0(\P, \Gb)$ of the benchmarked cumulative cost process. We note
	\begin{align*}
				&\esp{\left( C_t^{\bar \delta^T} - C_t^{\bar \delta^T_m} \right)^2}\\
		\leqslant& \ c_m + 2 \esp{\int^t_0 \left( (n - N_{u-}) e^{- \alpha T - \gamma (T-u)} q^{-1}(Y_u) \right)^2 \ud [L^U - L^{U_m}]_u}\\
				&+ 2 \esp{\int^t_0 \left( (U_{u} - (U_m)_{u}) e^{- \alpha T - \gamma (T-u)} q^{-1}(Y_u) \right)^2 \ud [M]_u}, 
	\end{align*}
	for every $t \in [0, T]$, where 
	\[
		c_m = 2 \left( C^{\bar \delta^T}_0 - C^{\bar \delta^T_m}_0 \right) = 2 \left( n e^{- (\alpha + \gamma) T} \esp{p(Z_T) q(Y_T) g(Z_T) - p(Z_T) q(Y_T) g_m(Z_T)} \right).
	\]
	Clearly 
	\[
		c_m \xrightarrow[]{m \rightarrow \infty} 0.
	\]
	For the first addend, thanks to the compactness of the state space we have 
	\begin{align*}
		2 \esp{\int^t_0 \left( (n - N_{u-}) e^{- \alpha T - \gamma (T-u)} q^{-1}(Y_u) \right)^2 \ud [L^U - L^{U_m}]_u} \leqslant \bar c \esp{(L^U_t - L^{U_m}_t)^2},
	\end{align*}
	for every $t \in [0, T]$, where $\bar c$ is a suitable constant. This quantity turns to zero uniformly in $t$ thanks to (\ref{eq: convergence L}) in Lemma \ref{lemma: convergence phi and LU}.\\
	For the second addend, since $(U^m)_{m \in \N}$ provides a pathwise approximation of $U$ uniformly in $t \in [0,T]$, the dominated convergence theorem together with the boundedness of the integrand process yields
	\[
		2 \esp{\int^t_0 \left( (U_{u} - (U_m)_{u}) e^{- \alpha T - \gamma (T-u)} q^{-1}(Y_u) \right)^2 \ud [M]_u}\xrightarrow[]{m \rightarrow \infty} 0, \ \ \ \ t \in [0,T],
	\]
	that concludes the proof.
\end{proof}

\subsection{Term insurance}\label{sec: term insurance}

A term insurance contract gives a positive payoff in the case of a policyholder's decease before the term $T$ of contract. The payment process $R := (R_t)_{t \in [0, T]}$ is assumed to be $\Fb$-predictable and square integrable. The amount paid at $T$ to the $i$-th policyholder is given by
\[
	\Ind{0 < \tau^i \leqslant T} R_{\tau^i},
\]
for $i = 1,...,n$. In the case of a homogeneous portfolio of policies, we have
\[
	\sum_{i=1}^n \Ind{0 < \tau^i \leqslant T} R_{\tau^i}.
\]
The associated benchmarked payment process $A$ is
\[
	A_t = \sum_{i=1}^n \int_0^t (S^*_u)^{-1} \ud D_u = \sum_{i=1}^n (S^*_{\tau^i})^{-1} \Ind{0 < \tau^i \leqslant t} R_{\tau^i},
\]
for $t \in [0,T]$. 

We denote by $V^\tau := (V^\tau_t)_{t \in [0,T]}$ the price process associated to a homogeneous portfolio of term insurance contracts. The real world pricing formula (\ref{eq: pricing formula dividend}) together with (\ref{eq: benchmark process}) and (\ref{eq: mortality intensity}) yields
\begin{align*}
	V^\tau_t = &S^*_t \condespg{\sum_{i=1}^n (S^*_{\tau^i})^{-1} \Ind{t < \tau^i \leqslant T} R_{\tau^i}}\\
	= & \sum_{i=1}^n S^*_t \condespg{ (S^*_{\tau^i})^{-1} \Ind{t < \tau^i \leqslant T} R_{\tau^i}}\\
	= & \sum_{i=1}^n \Ind{\tau^i > t } S^*_t \condespf{ \int^T_t (S^*_u)^{-1} R_u e^{- \int^T_t \mu_u \ud u} \mu_u \ud u}\\
	= & (n - N_t) e^{(\gamma + \alpha)t} \frac{\condespf{\int^T_t  e^{-(\gamma + \alpha)u} R_u p(Z_u) q(Y_u) \mu_u \ud u}}{p(Z_t) q(Y_t)} \\
	= & (n - N_t) e^{(\gamma + \alpha)t} \frac{ \condespf{ \int^T_t e^{-(\gamma + \alpha)u} R_u p(Z_u) \left(\gamma q(Y_u) - \nabla q(Y_u)^\top \bar b(Z_u)\right)\ud u }}{p(Z_t) q(Y_t)},
\end{align*}
where in the third equality we use Proposition 5.5 of \cite{Bar} combined with Corollary 5.1.3 of \cite{Bie-Rut}. We note that Corollary 5.1.3 of \cite{Bie-Rut} requires that $R$ is a bounded process, but this hypothesis can be easily relaxed by using a localization argument together with the dominated convergence theorem for conditional expectation if $R$ is sufficiently integrable.

Now we assume
\[
	R_t = R(Z_t),
\]
for $t \in [0,T]$, with $R$ a continuous function on the compact state space $E$. Then the stochastic Fubini-Tonelli Theorem yields
\begin{align*}
	V^\tau_t = &(n - N_t) e^{(\gamma + \alpha)t} \frac{\condespf{\int^T_t  e^{-(\gamma + \alpha)u} R(Z_u) p(Z_u)  (\gamma q(Y_u) - \nabla q(Y_u)^\top \bar b(Z_u)) \ud u}}{p(Z_t) q(Y_t)}\\
		  = & (n - N_t) e^{(\gamma + \alpha)t} \frac{\int^T_t e^{-(\gamma + \alpha)u} \condespf{ R(Z_u) p(Z_u) (\gamma q(Y_u) - \nabla q(Y_u)^\top \bar b(Z_u)) }\ud u}{p(Z_t) q(Y_t)},
\end{align*}
for $t \in [0,T]$. As before, this expression can be approximated by explicated pricing formulas related to polynomial payoff.

\begin{prop}\label{prop: approximation term insurance}
	Let $\left\{ R_m \right\}_{m \in \N}$ be a sequence of polynomials functions which approximates uniformly the continuous function $R$ on $E$. For every $m \in \N$, we consider $V^{\tau,m} := (V^{\tau,m}_t)_{t \in [0,T]}$ with
	\begin{align*}
		V^{\tau, m}_t := & (n - N_t) e^{(\gamma + \alpha)t} \frac{\int^T_t e^{-(\gamma + \alpha)u}  \condespf{ \gamma r_m (Z_u) - s_m(Z_u)}\ud u}{p(Z_t) q(Y_t)}\\
				 = & (n - N_t) e^{(\gamma + \alpha)t} \frac{\int^T_t e^{-(\gamma + \alpha)u} \left( \gamma {\hat r}_{m(t,u)} (Z_t) - \hat{s}_{m(t,u)} (Z_t) \right) \ud u}{p(Z_t) q(Y_t)},
	\end{align*}
	for every $t \in [0,T]$, where the polynomial functions $r_m$ and $s_m$ are given respectively by $r_m := R_m p q$ and $s_m:= R_m p \left( \nabla q^\top \bar b \right)$. Then $\left\{ V^{\tau, m} \right\}_{m \in \N}$ provide both a pathwise and $L^p(\Omega, \P)$ approximation of $V^\tau$ uniformly in $t \in [0,T]$.
\end{prop}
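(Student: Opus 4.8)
The plan is to follow the pattern of the pure-endowment case (Proposition \ref{prop: approximation pure endowment}), reducing the claim to the uniform estimate behind Lemma \ref{lemma: approximation of conditional expectation}; the extra deterministic integral over $u\in[t,T]$ contributes only a bounded multiplicative factor, so it causes no trouble. First I would put $V^{\tau,m}$ in a form directly comparable to $V^\tau$. Since $r_m=R_m\,pq$ and $s_m=R_m\,p(\nabla q^\top\bar b)$ are polynomials on $E$ and $\gamma r_m-s_m=R_m\,p\bigl(\gamma q-\nabla q^\top\bar b\bigr)$, Proposition \ref{prop pol cond exp} gives, for $t\le u\le T$,
\[
\gamma\,\hat{r}_{m(t,u)}(Z_t)-\hat{s}_{m(t,u)}(Z_t)=\condespf{R_m(Z_u)\,p(Z_u)\bigl(\gamma q(Y_u)-\nabla q(Y_u)^\top\bar b(Z_u)\bigr)},
\]
so that $V^{\tau,m}_t$ is exactly the closed form obtained for $V^\tau_t$ with $R$ replaced by $R_m$. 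Subtracting and using linearity of the conditional expectation,
\[
V^{\tau,m}_t-V^\tau_t=(n-N_t)\,e^{(\gamma+\alpha)t}\,\frac{\ds\int_t^T e^{-(\gamma+\alpha)u}\,\condespf{\bigl(R_m(Z_u)-R(Z_u)\bigr)p(Z_u)\bigl(\gamma q(Y_u)-\nabla q(Y_u)^\top\bar b(Z_u)\bigr)}\,\ud u}{p(Z_t)\,q(Y_t)},
\]
for every $t\in[0,T]$.

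Next I would bound this uniformly in $t$ and $\omega$. On the compact state space $E=E^X\times E^Y$ the polynomials $p,q,\nabla q$ and the affine map $\bar b$ are bounded, while $p\ge\varepsilon$ and $q\ge\delta$ by (\ref{eq: bound p}) and (\ref{eq: bound q}); moreover $0\le n-N_t\le n$, the weights $e^{\pm(\gamma+\alpha)t}$ are bounded on $[0,T]$, and $[t,T]$ has length at most $T$. Because $Z_u\in E$ almost surely, the integrand above is bounded in modulus by $M\,\norm{R_m-R}_\infty$, where $M:=\sup_{E}\bigl|p\,(\gamma q-\nabla q^\top\bar b)\bigr|$, and hence so is its $\F_t$-conditional expectation. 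Collecting the constants, there is a deterministic $K>0$ such that
\[
\sup_{t\in[0,T]}\bigl|V^{\tau,m}_t-V^\tau_t\bigr|\;\le\;K\,\norm{R_m-R}_\infty,
\]
and the right-hand side tends to $0$ by the uniform approximation hypothesis on $\{R_m\}$. This gives at once the pathwise convergence $\sup_{t\in[0,T]}|V^{\tau,m}_t-V^\tau_t|\to0$ almost surely; and since the bound is deterministic, the same estimate yields $\sup_{t\in[0,T]}\norm{V^{\tau,m}_t-V^\tau_t}_{L^p(\Omega,\P)}\le K\,\norm{R_m-R}_\infty\to0$ for every $p\ge1$.

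No serious difficulty is expected, as this is the term-insurance counterpart of Proposition \ref{prop: approximation pure endowment}. The two points deserving a line of justification are (i) the identification in the first display via Proposition \ref{prop pol cond exp}, and (ii) the fact that the denominator $p(Z_t)\,q(Y_t)$ stays bounded away from $0$ uniformly in $t$, which is precisely what compactness of $E$ together with conditions (\ref{eq: bound p}) and (\ref{eq: bound q}) provides; one may also note that the interchanges of conditional expectation and $\ud u$-integration are legitimate since the relevant integrand is bounded on the compact state space, exactly as in the derivation of the closed form of $V^\tau$. The same scheme will cover the annuity building block treated next.
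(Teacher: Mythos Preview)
Your proposal is correct and follows exactly the approach the paper intends: the paper's own proof is simply ``Analogous to Proposition \ref{prop: approximation pure endowment}'', and you have spelled out precisely those details, reducing to a deterministic bound $K\norm{R_m-R}_\infty$ via compactness of $E$ and the lower bounds (\ref{eq: bound p}), (\ref{eq: bound q}). There is nothing to add.
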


\begin{proof}
	Analogous to Proposition \ref{prop: approximation pure endowment}.
\end{proof}

\subsection{Annuity}

An annuity is a continuous cash stream paid by the insurer as long as the policyholder is alive. We denote its cumulated payoff value up to time $t$ by $C_t$. The process $C := (C_t)_{t \in [0,T]}$ is assumed to be a right continuous increasing $\Fb$-adapted and square integrable process, with $C_0 = 0$ and $C_{T-} = C_T$. 
The total payoff at $T$ associated to the $i$-th policyholder is given by
\begin{equation}\label{eq: annuity payoff}
	\int_{]0,T]} (1 - H^i_u) d C_u = \int_{]0,T]} \I_{ \{\tau^i > u\} } d C_u = C_T \I_{ \{\tau^i > T\} } + C_{\tau^i-} \I_{ \{0 < \tau^i \leqslant T\} },
\end{equation}
the total payoff at $T$ of a homogeneous portfolio of annuity contracts is
\[
	\sum_{i=1}^n \int_{]0,T]} (1 - H^i_u) d C_u.
\]
The benchmarked cumulated payment process at time $t$ with $t \in [0,T]$ is
\[
	A_t = \sum_{i=1}^n \int^t_0 (S^*_u)^{-1} (1 - H^i_u) d C_u.
\]

Let $V^C := (V^C_t)_{t \in [0,T]}$ denote the price process given by the real world pricing formula (\ref{eq: pricing formula dividend}) for a homogeneous portfolio of annuity contracts. By (\ref{eq: benchmark process}) and (\ref{eq: mortality intensity}) we have at $t \in [0,T]$
\begin{align*}
	V^C_t :=  & S^*_t \condespg{\sum_{i=1}^n \int^T_t (S^*_u)^{-1} (1 - H_u) d C_u}\\
	=& \sum_{i=1}^n S^*_t \condespg{ \int^T_t (S^*_u)^{-1} (1 - H_u) d C_u}\\
	=& \sum_{i=1}^n \Ind{\tau^i > t}S^*_t \condespf{\int_{ ] t , T]} (S^*_u)^{-1} e^{- \int^u_t \mu_u \ud u} d C_u}\\
	=& (n - N_t) e^{(\gamma + \alpha)t} \frac{\condespf{\int^T_t e^{-(\gamma + \alpha)u} p(Z_u) q(Y_u) d C_u}}{p(Z_t) q(Y_t)},
\end{align*}
where in the third equality we use Proposition 5.5 of \cite{Bar} and Proposition 5.1.2 of \cite{Bie-Rut}. Proposition 5.1.2 of \cite{Bie-Rut} requires that the process $C$ is bounded. As in Section \ref{sec: term insurance}, this condition can be relaxed using a localization argument combined with the theorem of dominated convergence for conditional expectation.

\begin{rem}\label{rem: annuity sum of pure endowment and term insurance}
	We stress that under our assumption, if $C$ is furthermore a continuous process, then it is also an $\Fb$-predictable process. Therefore, according to (\ref{eq: annuity payoff}) we have 
	\[
		\sum_{i=1}^n \int_{]0,T]} (1 - H^i_u) d C_u = C_T (n - N_T) + \sum_{i=1}^n C_{\tau^i} \I_{ \{0 < \tau^i \leqslant T\} }.
	\]
	That is, a homogeneous annuity portfolio can be considered as the sum of a homogeneous pure endowment portfolio and a homogeneous term insurance portfolio as defined in Section \ref{sec: pure endowment} and \ref{sec: term insurance} respectively, where $g_T = C_T$ and $R = C$. In particular, the linearity of the pricing formula yields
	\[
		V^C = V^T + V^\tau.
	\]
\end{rem}

If now we assume 
\[
	C_t = \bar C(Z_t),
\]
for $t \in [0,T]$, with $\bar C$ a continuous function on the compact state space $E$, then we have the following proposition.

\begin{prop}
	Let $\{ C_m \}_{m \in \N}$ be a sequence of polynomials which approximates uniformly the continuous function $\bar C$ on $E$. For every $m \in \N$, we consider $V^{C,m} := (V^{C,m}_t)_{t \in [0,T]}$ with
	\begin{align*}
		V^{C,m} :=& V^{T,m} + V^{\tau,m},
	\end{align*}
	where $V^{T,m}$ and $V^{\tau,m}$ are respectively defined in Proposition \ref{prop: approximation pure endowment} and \ref{prop: approximation term insurance} with 
	\[
		g_m = R_m = C_m,
	\] 
	for all $m \in \N$.
	Then $\{ V^{C,m} \}_{m \in \N}$ provides both a pathwise and $L^p(\Omega, \P)$ approximation of $V^C$ uniformly in $t \in [0,T]$.
\end{prop}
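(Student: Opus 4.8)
The plan is to deduce this proposition directly from the decomposition in Remark~\ref{rem: annuity sum of pure endowment and term insurance} together with the two approximation results already proved for pure endowments and term insurance. First I would note that, since $\bar C$ is continuous and $Z$ has continuous paths, the process $C_t := \bar C(Z_t)$ is continuous, hence $\Fb$-predictable, so Remark~\ref{rem: annuity sum of pure endowment and term insurance} applies and yields $V^C = V^T + V^\tau$, where the pure endowment part corresponds to the payoff $g_T = C_T = \bar C(Z_T)$ and the term insurance part to the payment process $R = C = \bar C(Z_\cdot)$. In particular, $V^T$ and $V^\tau$ are exactly the price processes of the form treated in Proposition~\ref{prop: approximation pure endowment} and Proposition~\ref{prop: approximation term insurance}, with $g = R = \bar C$.

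Next, since $\{C_m\}_{m\in\N}$ converges to $\bar C$ uniformly on the compact set $E$, the very same sequence serves simultaneously as a uniform polynomial approximation of $g=\bar C$ and of $R=\bar C$. Applying Proposition~\ref{prop: approximation pure endowment} with $g_m = C_m$ gives that $\sup_{t\in[0,T]}\left|V^{T,m}_t - V^T_t\right|\to 0$ a.s.\ and $\sup_{t\in[0,T]}\norm{V^{T,m}_t - V^T_t}_{L^p(\Omega,\P)}\to 0$ for every $p\geqslant 1$ as $m\to\infty$; applying Proposition~\ref{prop: approximation term insurance} with $R_m = C_m$ gives the analogous convergences of $V^{\tau,m}$ to $V^\tau$, uniformly in $t$.

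Finally, because $V^{C,m}=V^{T,m}+V^{\tau,m}$ and $V^C=V^T+V^\tau$, the triangle inequality applied pathwise to the supremum over $t\in[0,T]$, and then to $\norm{\cdot}_{L^p(\Omega,\P)}$, yields $\sup_{t\in[0,T]}\left|V^{C,m}_t - V^C_t\right|\leqslant \sup_{t\in[0,T]}\left|V^{T,m}_t - V^T_t\right| + \sup_{t\in[0,T]}\left|V^{\tau,m}_t - V^\tau_t\right|$, which tends to $0$ both almost surely and in $L^p(\Omega,\P)$, uniformly in $t$. There is no real obstacle here: the only points that deserve a line of justification are that $C_t=\bar C(Z_t)$ indeed defines an admissible (continuous, and hence predictable) cumulated annuity payoff so that Remark~\ref{rem: annuity sum of pure endowment and term insurance} is applicable, and that one may legitimately use a single approximating sequence $\{C_m\}$ in both sub-propositions; the rest is a routine application of the triangle inequality and the already established convergences.
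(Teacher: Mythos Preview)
Your argument is correct and follows exactly the paper's own approach: the paper's proof simply reads ``Straightforward from Remark~\ref{rem: annuity sum of pure endowment and term insurance}, Proposition~\ref{prop: approximation pure endowment} and Proposition~\ref{prop: approximation term insurance}.'' You have merely made explicit the decomposition $V^C=V^T+V^\tau$, the use of a single approximating sequence in both sub-results, and the final triangle-inequality step, which is precisely what the paper leaves to the reader.
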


\begin{proof}
	Straightforward from Remark \ref{rem: annuity sum of pure endowment and term insurance}, Proposition \ref{prop: approximation pure endowment} and Proposition \ref{prop: approximation term insurance}.
\end{proof}

\section{A numerical example}\label{sec: simulation}

We now consider a numerical example with calibration to real data. Set $m_1 = m_2 = 1$, throughout this section we assume that $E^X =[-1,1]$ and $Z_0 = (X_0, Y_0)^\top = 0$. In particular $E^Y$ is also bounded, see Proposition \ref{prop: boundness Y}.

For a more detailed study of polynomial diffusions on unit ball we refer to \cite{Lar-Pul}. In view of Theorem 2.1 of \cite{Lar-Pul} we consider the following model dynamics,
\[
	\ud Z_t = 
	\ud \left( \begin{array}{c}
				X_t \\
				Y_t
			\end{array} \right) =
		\left( \begin{array}{cc}
				\Psi & 0 \\
				d & \kappa 
				\end{array} \right)
		\left(
			\left( \begin{array}{c}
						b \\
						\eta
					\end{array} \right)
			- 
			\left( \begin{array}{c}
						X_t \\
						Y_t
					\end{array} \right)
		\right) \ud t
		+ \sigma(X_t) \ud W_t,
\]
where $W$ is a 1-dimensional Brownian motion,
\[
	a(x)= \sigma(x)^2 = \sigma (1 - x^2), \ \ \ \ \text{ for all } x \in \mathcal [-1, 1],
\]
with $\sigma > 0$, $\Psi, b, d , \kappa, \eta \in \R$ 
and the parameters satisfy the following condition 
\[
	b \Psi x - \Psi x^2  \leqslant 0, \ \ \ \ \text{ for } x \in \{1, -1\},
\]
equivalent to $|b \Psi| \leqslant \Psi$ or
\begin{equation}
	\left\{\begin{array}{c}
		|b| \leqslant 1, \\
		\Psi \geqslant 0.
	\end{array} \right.
\end{equation}
In particular the dynamics of component $X$ is given by
\[
	\ud X_t = \Psi (b - X_t) \ud t + \sigma \sqrt{1 - X^2_t} \ud W_t.
\]

Furthermore we assume that the polynomials $p$ and $q$ are both linear and positive on $E$, i.e.
\begin{equation}\label{eq: p}
	p(x) = \rho + c x, \ \ \ \ p > 0 \text{ on } E^X,
\end{equation}
\begin{equation}\label{eq: q}
	q(y) = \delta + \nu y, \ \ \ \ q > 0 \text{ on } E^Y,
\end{equation}
where $\rho, \delta, \nu, c \in \R$. A similar specification for $p$ can be found in \cite{Fil-Lar2}.

Under these assumptions we have
\[
	\frac{1}{S^*_t} = e^{- \alpha t} \left( \rho + c X_t \right),
\]
\[
	P(t,T) = \frac{\left( \rho + c b \right) e^{- \alpha (T - t)} + c e^{-(\alpha + \Psi)(T - t)} (X_t - b)}{\rho + cp X_t},
\]
\[
	r_t = \alpha - \frac{c \Psi (b - X_t)}{\rho + c X_t},
\]
\[
	I_t = e^{- \gamma t} \left( \delta + \nu Y_t \right),
\]
\[
	\mu_t = \gamma - \frac{\nu \left ( d b + \kappa \eta - d X_t - \kappa Y_t \right) }{\delta + \nu Y_t},
\]
\[
	\tilde P(t, T) = e^{- \gamma T - \alpha (T - t)} \frac{\hat{pq}_{(t, T)}(Z_t)}{\rho + c X_t},
\]
with dynamics
\[
	\frac{\ud (S^*_t)^{-1}}{(S^*_t)^{-1}} = - r_t \ud t - \lambda_t \ud W_t,
\]
where $\lambda_t = - \sqrt{a(X_t)} c / (\rho + c X_t)$,
\[
	\frac{\ud P(t, T)}{P(t, T)} = (r_t + \nu(t,T) \lambda_t) \ud t + \nu(t, T) \ud W_t,
\]
where $\nu(t, T) = \sqrt{a(X_t)} \nabla P(t, T) / P(t, T)$,
\[
	\frac{\ud I_t}{I_t} = - \mu_t \ud t.
\]

For the sake of simplicity, we calibrate our model to the inverse of benchmark portfolio and the longevity index. The benchmark portfolio can be identified with a sufficiently diversified portfolio such as Morgan Stanley capital weighted world stock accumulation index, called MSCI world index (see discussion in \cite{Pla2} and \cite{Pla3}). For the second one we take data from LLMA index related to German population. The sample period ranges from January 1970 to January 2013 with 517 monthly observations of MSCI world index and 44 annual observations of Germany male graduated initial rate of mortality published by LLMA relating to the cohort of male population aged 20 in 1970. 

The following table reports the summary statistics of the two data sets. The inverse of benchmark portfolio data is reported in basis points and longevity index data in percentages.

\begin{center}
	\begin{tabular}{c c c c c c}
		\hline
		\ 	 	  & Mean & Median & Std. & min & MAX \\
		\hline
		$1/\text{MSCI index}$ ($1/S^*$) & 38.612 & 19.468 & 35.397 & 5.9441 & 134.31 \\
		Longevity index ($I$) & 98.628 & 98.984 & 1.1337 & 95.606 & 99.803 \\
		\hline
	\end{tabular}
\end{center}

We denote the model parameter vector by $\Phi$ and the times of observation by $t_1, t_2, ..., t_N$, where $N= 517$. For every $t_k$ with $1 \leqslant k \leqslant N$ we have a 1- or 2-dimensional observation vector $v_{t_k}$. When both MSCI index and LLMA index are observable the measurement equation is given below. When only the MSCI index is observable, $v_{t_k}$ is reduced to the only first component. We have
\begin{align*}
	v_{t_k} &= f(Z_{t_k}, \Phi) + \varepsilon_{t_k} = \left[ \frac{1}{S^*_{t_k}} ,\ I_{t_k} \right]^\top + \varepsilon_{t_k}\\
		&= \left[ e^{- \alpha {t_k}} \left( \rho + c X_{t_k} \right), e^{- \gamma {t_k}} \left( \delta + \nu Y_{t_k} \right) \right]^\top + \varepsilon_{t_k}\\
		&= \Theta_{k0} + \Theta_{k1} Z_{t_k} + \varepsilon_{t_k},
\end{align*}
where
\[
	\Theta_{k0} = \left( \begin{array}{c}
							e^{- \alpha t_k} \rho \\
							e^{- \gamma t_k} \delta 
					\end{array} \right), \ \ \ \ \ \ 
	\Theta_{k1} = 	\left( \begin{array}{cc}
							e^{- \alpha t_k} c & 0 \\
							0 & e^{- \gamma t_k} \nu 
					\end{array} \right),
\]
and the measurement error vector is assumed\footnote{The same assumption can be found in \cite{Lun-All} and in \cite{Fil-Lar2}. In a more general case $\varepsilon_{t_k}$ is a random error vector with $\mathbb{E}[\varepsilon_{t_k}] = 0$.} to be
\[
	\varepsilon_{t_k} \sim N \left(0, \left( \begin{array}{cc}
									\sigma^2_1 & 0 \\
									0 & \sigma^2_2
								\end{array} \right) \right), 
\] 
where $\sigma^2_1$ indicate the measurement error variance associated to the inverse of benchmark portfolio and $\sigma^2_2$ the one associated to the longevity index.

In view of \cite{Lar-Pul} and under the assumption that the longevity index does not have relevant influence on the benchmark portfolio, the transition equation in discrete time of the first component $X$ of (unobserved) state variable for $t_k$ with $1 < k \leqslant N$ can be approximated\footnote{See e.g. \cite{Dua-Sim}.} by,
\[
	X_{t_{k}} = \mathbb{E}\left. \left[ X_{t_{k}} \right| X_{t_{k-1}} \right] + \text{Var} \left. \left[ X_{t_{k}} \right| X_{t_{k-1}} \right]^{\frac{1}{2}} \eta_{t_k}, 
\]
where $\eta_{t_k}$ is a 1-dimensional error term of zero mean and unit variance, independent from $X_{t_{k-1}}$. 
We note that $\mathbb{E}\left. \left[ X_{t_{k}} \right| X_{t_{k-1}} \right]$ and $\text{Var} \left. \left[ X_{t_{k}} \right| X_{t_{k-1}} \right]^{\frac{1}{2}} \eta_{t_k}$ are (both conditionally and unconditionally) independent. Such approximation gives exact conditional and unconditional expectation and variance matrix of $X_{t_k}$. Furthermore, it follows from the property given in Proposition \ref{prop pol cond exp} that the conditional expectation $\mathbb{E}\left. \left[ X_{t_{k}} \right| X_{t_{k-1}} \right]$ is an affine function of $X_{t_{k-1}}$ and the conditional variance $\text{Var} \left. \left[X_{t_k} \right| X_{t_{k-1}} \right]$ is a second degree polynomial function of $X_{t_{k-1}}$. To be more precise,
\[
	\mathbb{E}\left. \left[ X_{t_{k}} \right| X_{t_{k-1}} \right] = \Phi_{k0} + \Phi_{k1}X_{t_{k-1}},
\]
where
\[
	\Phi_{k0} = b \left( 1 - e^ {- \Psi \left(t_k - t_{k-1}\right)} \right),
\]
\[
	\Phi_{k1} = e^ {-\Psi \left(t_k - t_{k-1}\right)}.
\]
Following \cite{Lun-All} we make the further assumption of approximating $\text{Var} \left. \left[ X_{t_{k}} \right| X_{t_{k-1}} \right]^{\frac{1}{2}} \eta_{t_k}$ with a normal distribution error term $u_{k-1}$ independent from $\mathbb{E}\left. \left[ X_{t_{k}} \right| X_{t_{k-1}} \right]$, 
\begin{equation}\label{eq: transition function X}
	X_{t_{k}} = \Phi_{k0} + \Phi_{k1}X_{t_{k-1}} + u_{k-1}.
\end{equation}
\[
	u_{k-1} \sim N(0, Q_{k-1}), \ \ \ \ \ 
	Q_{k-1} = \esp{\text{Var} \left. \left[X_{t_k} \right| X_{t_{k-1}} \right]}.
\]
The second component $Y$ of state variable for any time $t$ has an explicit solution depending on $X$,
\[
	Y_t = e^{-\kappa t} \int^t_0 \left( -d X_s + d b + \kappa \eta \right)e^{\kappa s} \ud s.
\]
For $t$ equal to an observation time of the longevity index, it can be approximated by
\[
	Y_t = e^{-\kappa t} \sum_{0 \leqslant t_{k_i} < t} \left( -d X_{t_{k_i}} + d b + \kappa \eta \right)e^{\kappa t_{k_i}} (t_{k_{i + 1}} - t_{k_i}),
\]
where $X_{t_{k_i}}$ are the monthly values calculated by the transition equation (\ref{eq: transition function X}) of $X$.


Since both the inverse of the benchmark portfolio $1/S^*$ and the longevity index $I$ are affine functions of the state variable, if we ignore for now the state space restrictions
\footnote{See \cite{Lun-All} for a more detailed discussion regarding this assumption.}, then we are in the case of a linear Gaussian state space model. Linear Kalman filter and maximum likelihood estimation are thus applicable under these approximations. We refer to \cite{Har} for a detailed description of the method. For the sake of simplicity, we apply linear Kalman filter and maximum likelihood estimation only to estimate parameters of the $X$ component. Since the longevity index can be considered as a linear regression of $X$ under our approximation, least squares estimation will be applied to estimate the remaining parameters of the $Y$ component.

Let $V_{t_k}$ denote the information available at time $t_k$ regarding the benchmark portfolio, i.e.
\[
	V_{t_k} = (v^1_{t_1}, v^1_{t_2},..., v^1_{t_k}).
\]
For $1 < k \leqslant N$ we denote
\[
	\hat X_{t_k| t_{k-1}} := \mathbb{E} \left. \left[ X_{t_k} \right| V_{t_{k-1}} \right], \ \ \ \ \Sigma_{t_k| t_{k-1}} := \text{Var}\left. \left[ X_{t_k} \right| V_{t_{k-1}} \right],
\]
where $\hat X_{t_k| t_{k-1}}$ is the optimal predictor of $X_{t_k}$ and $\Sigma_{t_k| t_{k-1}}$ is its mean square error. Analogously for $1 \leqslant k \leqslant N$ we denote
\[
	\hat X_{t_k} := \mathbb{E} \left. \left[ X_{t_k} \right| V_{t_k} \right], \ \ \ \ \Sigma_{t_k} := \text{Var}\left. \left[ X_{t_k} \right| V_{t_k} \right].
\]
For $1 < k \leqslant N$, the \emph{prediction step} of linear Kalman filter is given by
\[
	\hat X_{t_k| t_{k-1}} = \Phi_{k0} + \Phi_{k1} \hat X_{t_{k-1}},
\]
with mean square error
\[
	\Sigma_{t_k| t_{k-1}} = \Phi^2_{k1} \Sigma_{t_{k-1}} + Q_{k-1},
\]
and the \emph{update step} is given by 
\[
	\hat X_{t_k} = \hat X_{t_k| t_{k-1}} + \Sigma_{t_k| t_{k-1}} \Theta^1_{k1} \left(F_{t_k} \right)^{-1} w_{t_k},
\]
\[
	\Sigma_{t_k} = \left( \left(\Sigma_{t_k| t_{k-1}} \right)^{-1} + \left( \Theta^1_{k1} \right)^2 \sigma^{-2}_1 \right)^{-1},
\]
where
\[
	w_{t_k} = v^1_{t_k} - \mathbb{E}\left. \left[v^1_{t_k} \right| V_{t_{k-1}} \right] = v^1_{t_k} 
				- \left( \Theta^1_{k0} + \Theta^1_{k1} \hat X_{t_k| t_{k-1}} \right),
\]
\[
	F_{t_k} = \text{Var} (w_{t_k}) = \left( \Theta^1_{k1} \right)^2 \Sigma_{t_k | t_{k-1}} + \sigma^{2}_1.
\]
The (approximated) log-likelihood function is given by
\[
	\log L(v_{t_1}, v_{t_2},..., v_{t_N}; \Phi) = \sum_{k=1}^N -\log(2 \pi) - \frac{1}{2} \log |F_{t_k}| - \frac{1}{2} w_{t_k}^\top F_{t_k}^{-1} w_{t_k}.
\]
For $k = 1 + 12*h$ with $h = 0, ... , 43$\footnote{We note that longevity index is only annually observable, thus $Y$ can be updated only annually.} the approximated value of $Y_{t_k}$ is
\begin{equation}\label{eq: approx Y}
	\hat Y_{t_k} = e^{-\kappa t_k} \sum^{k-1}_{s=0} \left( -d \hat X_{t_s} + d b + \kappa \eta \right)e^{\kappa t_s} (t_{s + 1} - t_s).
\end{equation}

We set $\rho= 0.01, c = 0.006, \delta = 0.998, \nu = -0.00044$ such that conditions (\ref{eq: p}) and (\ref{eq: q}) are satisfied. In particular the assumption of $Z_0 = (X_0,Y_0)^\top = 0$ forces the value of $\rho$ and $\delta$ to be (almost) equal to the first value of the inverse of benchmark portfolio and longevity index respectively.\footnote{While our condition $Z_0 = (X_0,Y_0)^\top = 0$ limits the choice of $\rho$ and $\delta$, the values of $c$ and $\nu$ can be arbitrarily chosen, as soon as (\ref{eq: p}) and (\ref{eq: q}) are fulfilled. A different choice of $c$ and $\nu$ will result in a scaling of the state variable $Z$. See also Theorem 5 of \cite{Fil-Lar2}.}

The following table reports the calibrated parameters.
\begin{center}
	\begin{tabular}{c c c c c c}
		\hline
		$\Psi$ & $b$ & $\sigma$ & $d$ & $\kappa$ & $\eta$ \\
		\hline
		14.98581 & -0.79506 & 1.25299 & 5.18417 & -5.87517 & -5.05117 \\
		\hline
	\end{tabular}
\end{center}
The correspondent values of $\alpha, \gamma$, which adjust the short rate and mortality intensity level, and the log-likelihood value are reported below.
\begin{center}
	\begin{tabular}{c c c}
		\hline
		$\alpha$ & $\gamma$ & L \\
		\hline
		4.6068 & 0.0045607 & 2347.5 \\
		\hline
	\end{tabular}
\end{center}

\begin{figure}[!htbp]
	\centering
	\subfigure[Inverse of benchmark portfolio data and fit\label{subfig: Benchmark portfolio data and fit}]
	{\includegraphics[scale=0.31]{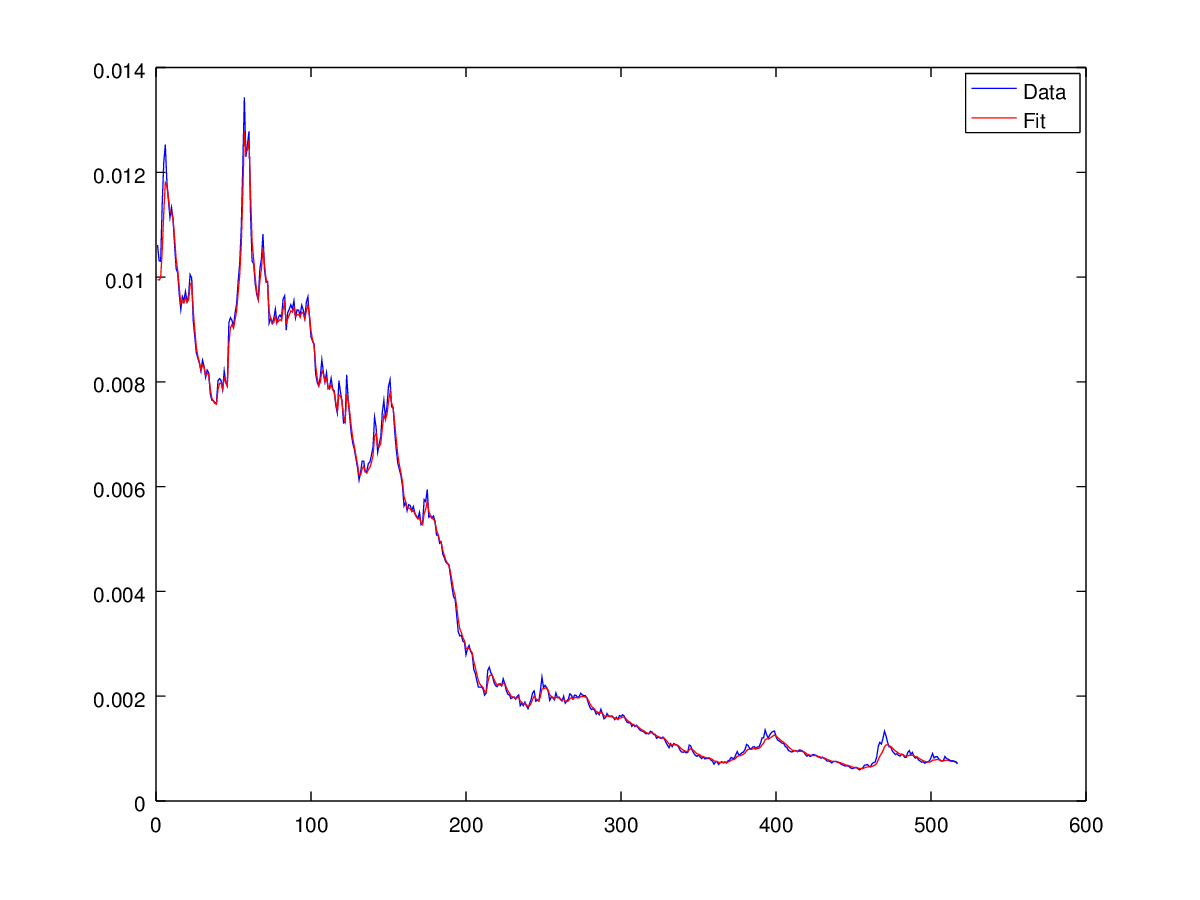}}
	\\ \ \\
	\subfigure[Pricing error generated by Kalman filter\label{subfig: Benchmark portfolio difference}]
	{\includegraphics[scale=0.31]{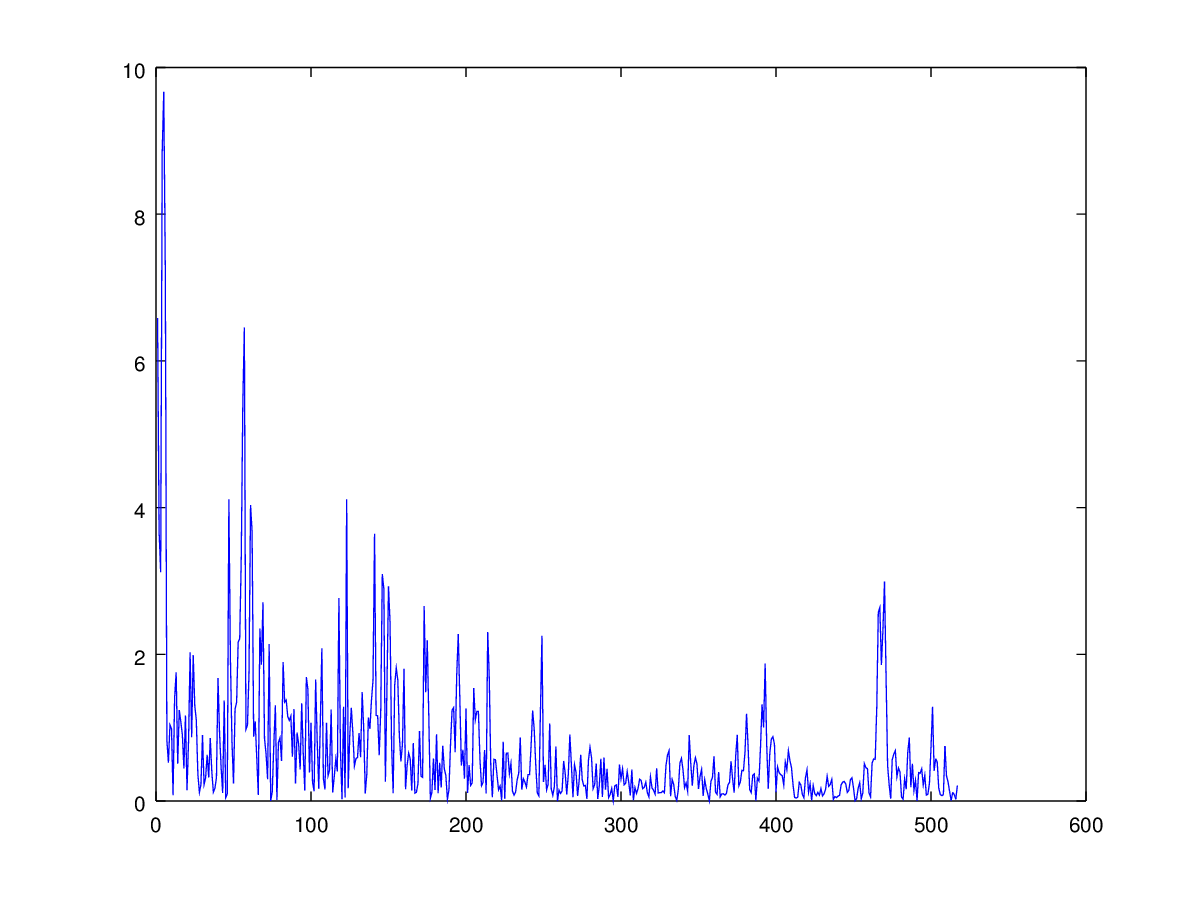}}
	\quad
	\subfigure[Inverse of benchmark portfolio RMSE\label{subfig: Benchmark portfolio RMSE}]
	{\includegraphics[scale=0.31]{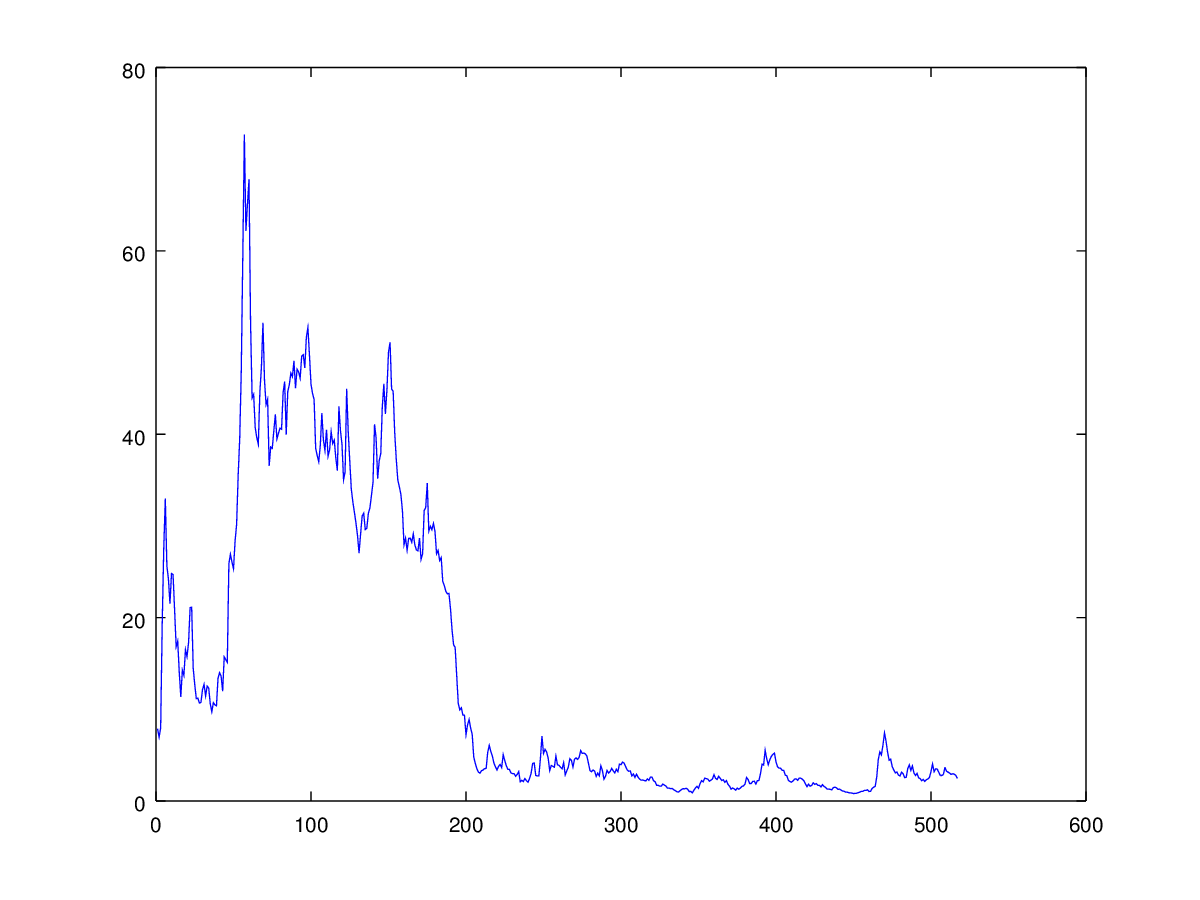}}
	\subfigure[State variable component $X$\label{subfig: State variable component X}]
	{\includegraphics[scale=0.31]{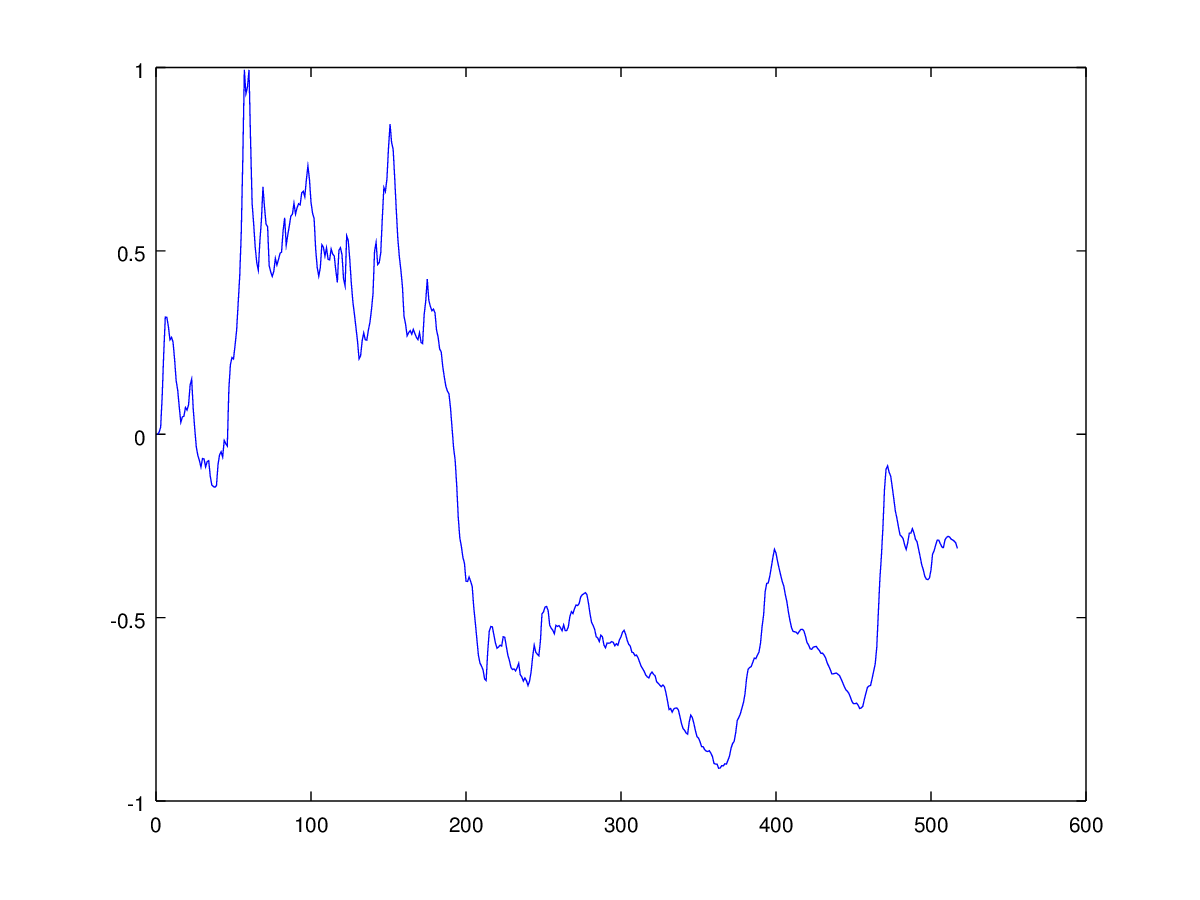}}
	\quad
	\subfigure[Short rate $r$\label{subfig: Short rate}]
	{\includegraphics[scale=0.31]{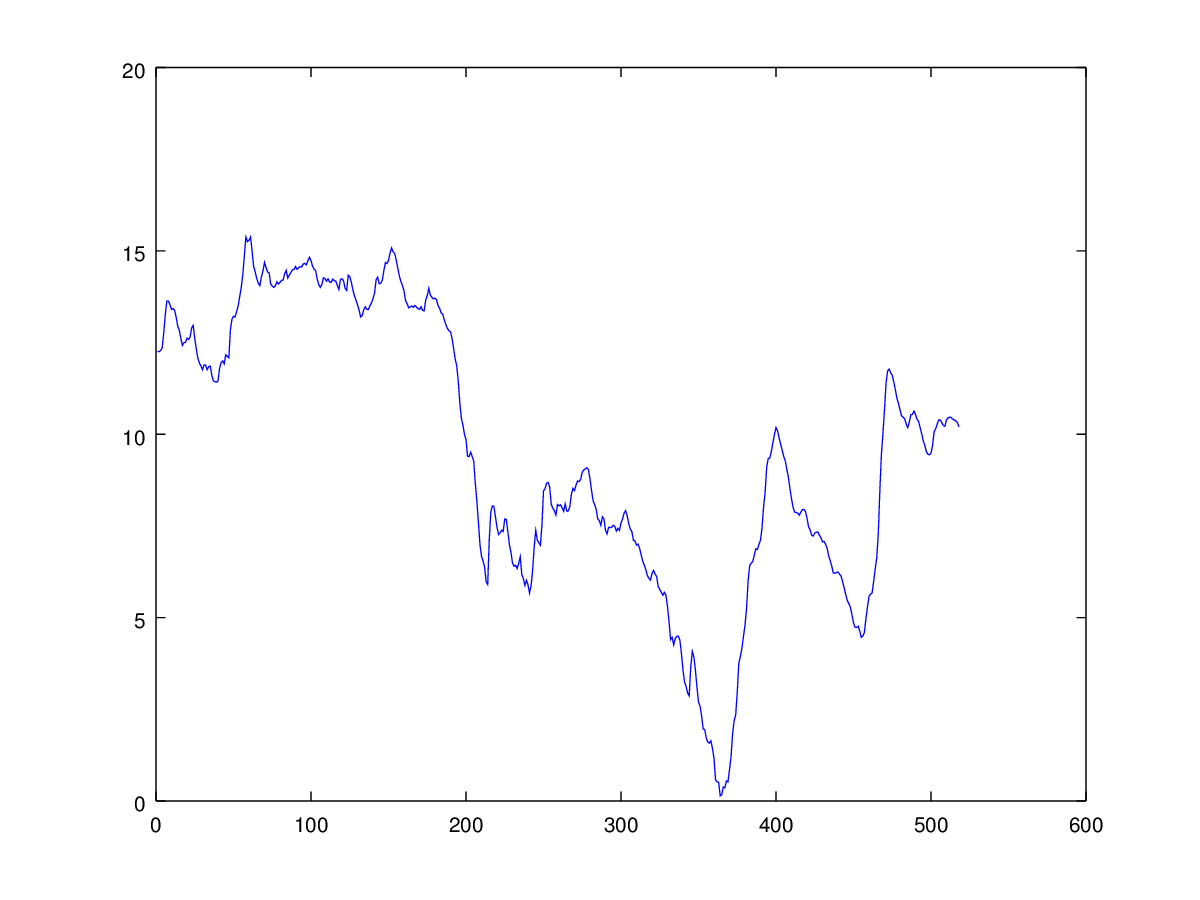}}
	\caption{Benchmark portfolio data and fit.}\label{fig: Benchmark portfolio data and fit}
\end{figure}

Figure \ref{fig: Benchmark portfolio data and fit} \subref{subfig: Benchmark portfolio data and fit} plots the observed inverse of benchmark portfolio data and the fit produced by Kalman filter. Figure \ref{fig: Benchmark portfolio data and fit} \subref{subfig: Benchmark portfolio difference} and \subref{subfig: Benchmark portfolio RMSE}, with unit in basis point, shows respectively the pricing error generated by Kalman filter and the root mean square pricing error (RMSE) computed over 100 Monte Carlo replications. Figure \ref{fig: Benchmark portfolio data and fit} \subref{subfig: State variable component X} displays time series of estimated state variable component $X$ underlying the benchmark portfolio dynamics and taking value in the compact interval $[-1, 1]$. Figure \ref{fig: Benchmark portfolio data and fit} \subref{subfig: Short rate} displays time series of estimated short rate $r$ adjusted by the level parameter $\alpha$ and thus taking only positive values. The simulated samples consist of 517 monthly time series observations from January 1970 to January 2013.\\
We note that one dimensional component $X$, with a mean RMSE equal to 15.24 bps, is already good enough to explain the inverse of benchmark portfolio dynamics structure and to produce a reasonable fit to the observed data. In particular the fit behaves better in the tail, which is a desirable situation as we are fitting the inverse of LLMA world index value.

\begin{figure}[!htbp]
	\centering
	\subfigure[Longevity index data and fit\label{subfig: Longevity index data and fit}]
	{\includegraphics[scale=0.31]{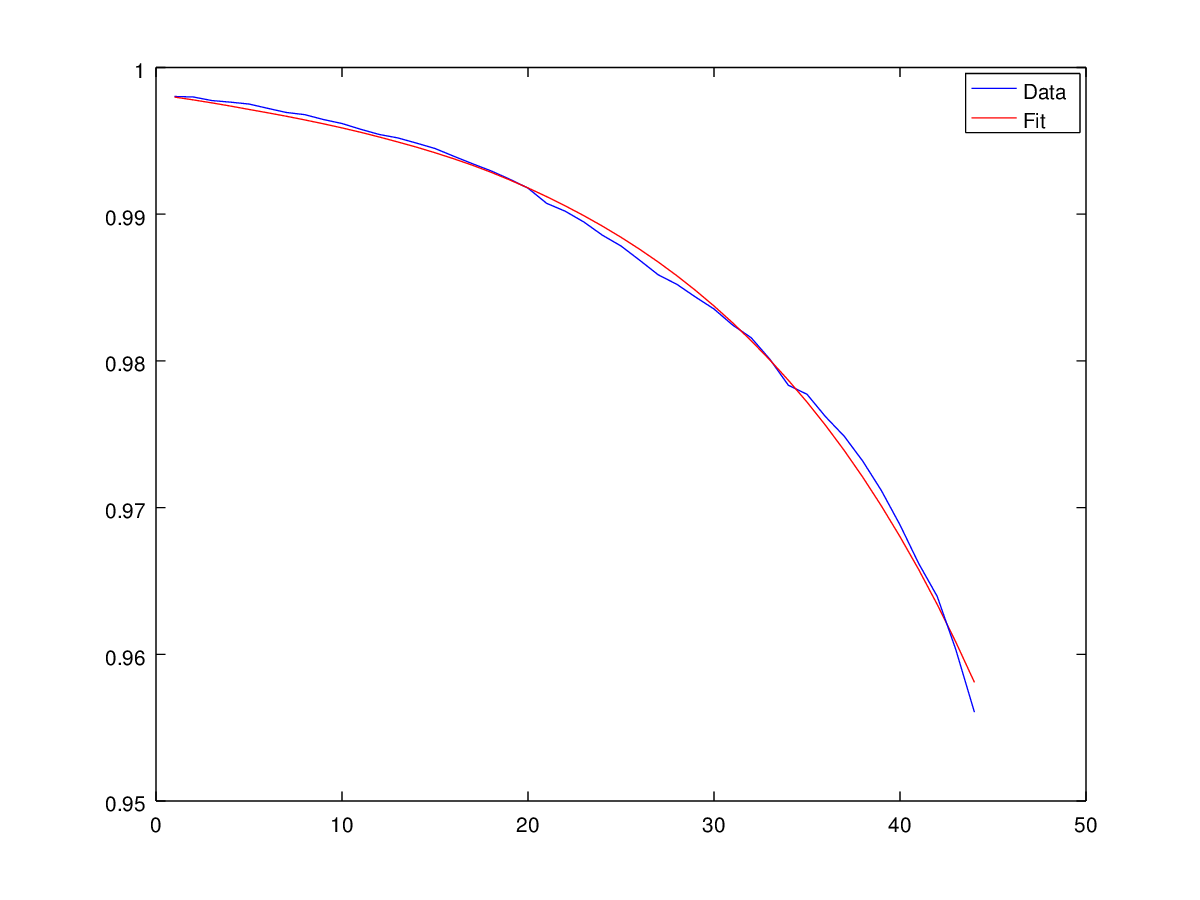}}
	\\ \ \\
	\subfigure[Pricing error generated by Kalman filter\label{subfig: Longevity index difference}]
	{\includegraphics[scale=0.31]{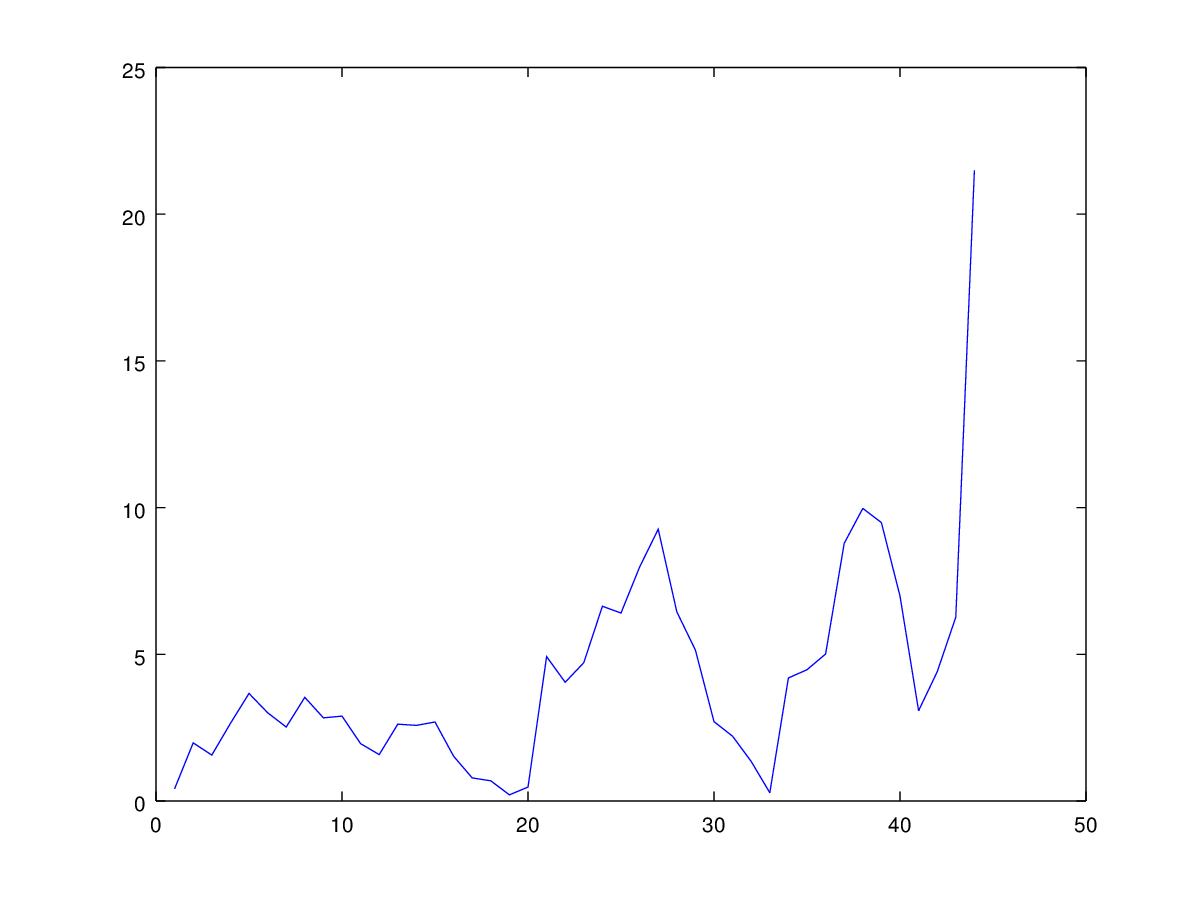}}
	\quad
	\subfigure[Longevity index RMSE\label{subfig: Longevity index RMSE}]
	{\includegraphics[scale=0.31]{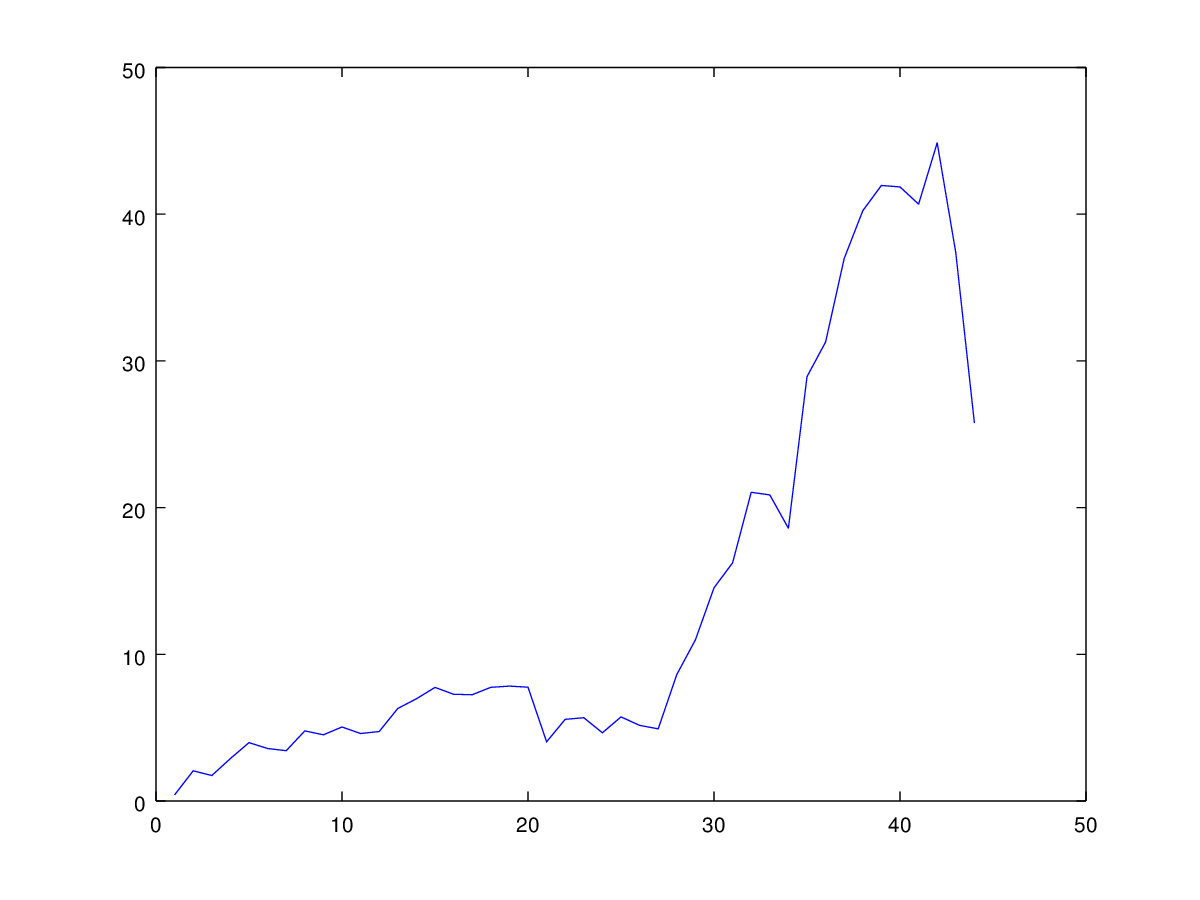}}
	\subfigure[State variable component $Y$\label{subfig: State variable component Y}]
	{\includegraphics[scale=0.31]{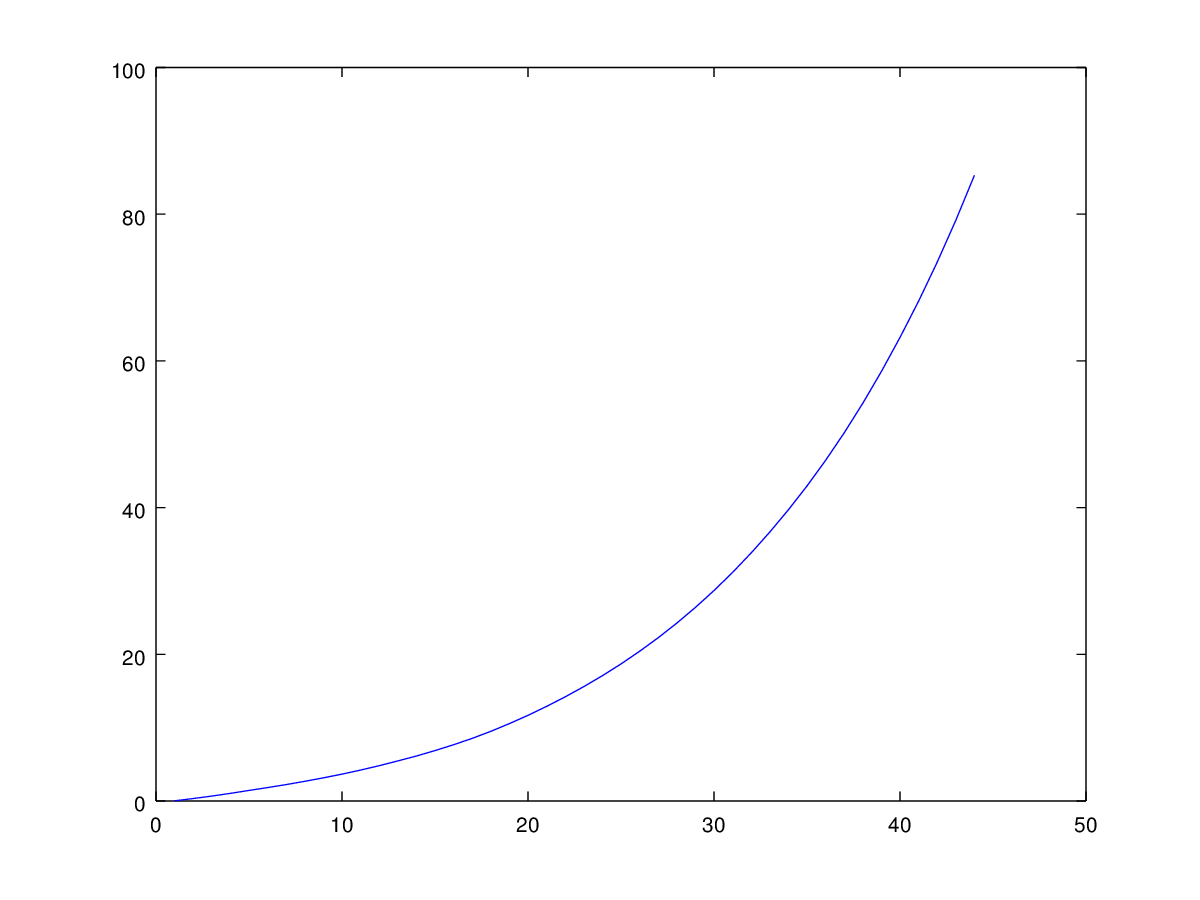}}
	\quad
	\subfigure[Mortality intensity $\mu$\label{subfig: Mortality intensity}]
	{\includegraphics[scale=0.31]{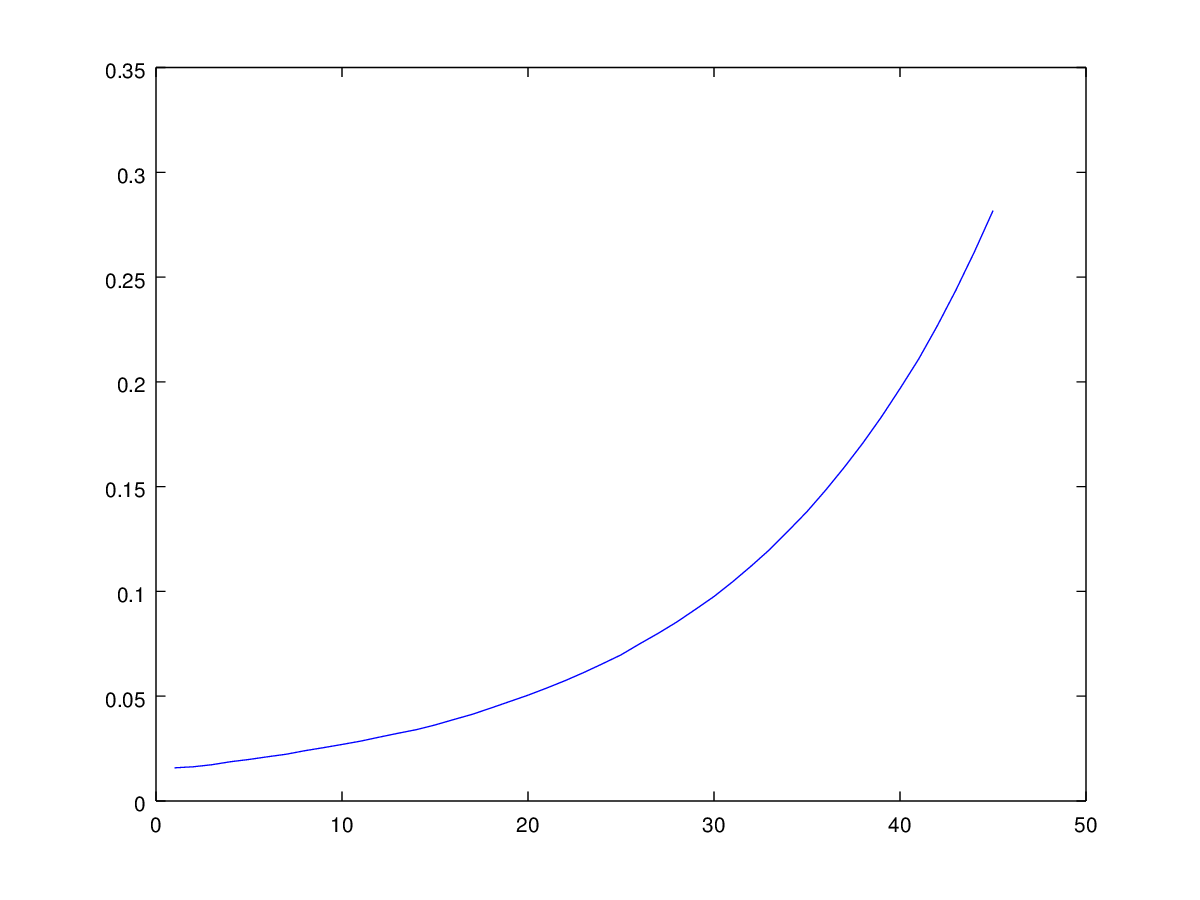}}
	\caption{Longevity index data and fit.}\label{fig: Longevity index data and fit}
\end{figure}

Similarly Figure \ref{fig: Longevity index data and fit} \subref{subfig: Longevity index data and fit} plots the observed longevity index data and the fit produced by (\ref{eq: approx Y}) with estimated parameter sets. Figure \ref{fig: Longevity index data and fit}\subref{subfig: Longevity index difference} and \subref{subfig: Longevity index RMSE}, with unit in basis point, show respectively the pricing error associated to (\ref{eq: approx Y}) and the root mean square pricing error (RMSE) computed over 100 Monte Carlo replications. Figure \ref{fig: Longevity index data and fit}\subref{subfig: State variable component Y} displays time series of estimated state variable component $Y$. Figure \ref{fig: Longevity index data and fit}\subref{subfig: Mortality intensity} displays time series of estimated mortality intensity $\mu$.
The simulated samples consist of 44 annual time series observations from January 1970 to January 2013.\\
The absence of the diffusion term in the $Y$ dynamics produces rather smooth paths for $Y$ and consequently for the longevity index fit. This is reasonable as longevity index data has very light oscillations along the trend, with a mean RMSE value of 15.39 bps. However, the poor data set of the longevity index, with always less than 50 annual observations for one age cohort, and the long time frame between two consecutive data can be a drawback for calibration.

\section{Conclusion}

We consider a polynomial diffusion state variable on compact state space to model the hybrid financial-insurance market. We include a dependence structure between OIS short rate and mortality intensity and are able to guarantee their positivity when needed. The model is parsimonious and analytically tractable. By following the benchmark approach we compute optimal premium as well as risk-minimizing strategy for life insurance liabilities. We complete our theoretical analysis by a calibration of a particular model specification to real data. As showed in Section \ref{sec: simulation}, we obtain a good fitting to the MSCI and LLMA index also in the parsimonious case of a two-dimensional state variable under linear polynomial specification. 

\appendix
\section{Benchmarked risk-minimization for payment streams}\label{app: benchmarked risk-min payment process}

We give here the general definitions and results of the benchmarked risk-minimizing method for payment streams by following mainly \cite{Bia-Cre} and \cite{Bar} for the presentation.  

We introduce first the following Hilbert spaces
\[
	M_0^2(\P, \Gb) := \left. \{ M:= (M_t)_{t \in [0,T]} \ \Gb\text{-martingale } \right| \ M_0 = 0, \ \sup_{t \in [0,T]} \esp{M_t^2} < \infty \}, 
\]
\[
	L^2(\tilde S, \P, \Gb) = \left. \bigg\{  \ \delta \ \ \R^l\text{-valued } \Gb\text{-predictable processes } \right| \ \mathbb{E}\left[ \int^T_0 \delta^\top_u \ud[\tilde S]_u \delta_u \right]  < \infty \bigg\}\footnote{$[\tilde S] = ([\tilde S^i,\tilde S^j])_{i,j=1,...,l}$ denotes the quadratic variation matrix process of the benchmarked primary security vector $\tilde S$.},
\]
\[
	\mathcal{I}(\tilde S, \P, \Gb) =\left \{ \left. \int^T_0 \delta_u^\top \ud \tilde S_u \right| \delta \in L^2(\tilde S, \P, \Gb) \right \},
\]
with norms given respectively by
\[
	\norm{ M }_{M_0^2(\P, \Gb)} := \sup_{t \in [0,T]} \esp{M_t^2}^{\frac{1}{2}} = \esp{M_T^2}^{\frac{1}{2}}= \sup_{t \in [0,T]} \esp{[M]_t}^{\frac{1}{2}} = \esp{[M]_T}^{\frac{1}{2}},
\]
\[
	\norm{\delta}_{L^2(\tilde S, \P, \Gb)} := \left( \mathbb{E}\left[ \int^T_0 \delta^\top_u \ud[\tilde S]_u \delta_u \right] \right)^{\frac{1}{2}},
\]
\[
	\norm{\int^T_0 \delta_u^\top \ud \tilde S_u}_{\mathcal{I}(\tilde S, \P, \Gb)} := \esp{\left(\int^T_0 \delta_u^\top \ud \tilde S_u\right)^2}^{\frac{1}{2}},
\]
for $M \in M_0^2(\P, \Gb)$ and $\delta \in L^2(\tilde S, \P, \Gb)$. Lemma 3.4 of \cite{Bar} (or Lemma 2.1 of \cite{Sch}) shows that $\mathcal{I}(\tilde S, \P, \Gb)$ is a stable subspace of $M_0^2(\P, \Gb)$. In particular, for every $\delta \in L^2(\tilde S, \P, \Gb)$ we have
\[
	\norm{\delta}_{L^2(\tilde S, \P, \Gb)} = \norm{ \int^T_0 \delta_u^\top \ud \tilde S_u }_{\mathcal{I}(\tilde S, \P, \Gb)} = \norm{\int^T_0 \delta_u^\top \ud \tilde S_u} _{M_0^2(\P, \Gb)}.
\]

\begin{defn}
	An $L^2$-\emph{admissible strategy} is a process $\delta := (\delta_t)_{t \in [0,T]}$ such that
	\begin{itemize}
		\item[(1)] $\delta \in L^2(\tilde S, \P, \Gb)$,
		\item[(2)] the associated benchmarked value process $\tilde S^\delta$ with
		\[
			\tilde S^\delta_{t-} := \delta_t^\top \tilde S_t, \ \ \ \ t \in [0,T]
		\] 
		is in $M_0^2(\P, \Gb)$.
	\end{itemize}
\end{defn}

Now we fix a process $A$ as defined in (\ref{eq: A process definition}) which models the benchmarked cumulative payments towards a policyholder.

\begin{defn}
	The \emph{benchmarked cumulative cost process} $C^\delta:=(C_t^\delta)_{t \in [0,T]}$ of a $L^2$-admissible strategy $\delta$ associated to $A$ is defined by
	\[
		C_t^\delta = \tilde S^\delta_t - \int_0^t \delta_u \ud \tilde S_u + A_t, \ \ \ \ t \in [0,T].
	\]
\end{defn}

\begin{defn}
	The \emph{risk process} $R^\delta:=(R^\delta_t)_{t \in [0,T]}$ of an $L^2$-admissible strategy $\delta$ is defined by
	\[
		R_t^{\delta} = \condespg{(C_T^{\delta} - C_t^{\delta})^2},\ \ \ \ t \in [0,T].
	\]
\end{defn}

\begin{defn}
	An $L^2$-admissible strategy $\bar \delta$ is called \emph{benchmarked risk-minimizing} for $A$ if 
	\begin{itemize}
		\item[(1)] $\tilde S_T^{\bar \delta} = 0$ $\P$-a.s.,
		\item[(2)] $R_t^{\bar \delta} \leqslant R_t^{\delta}$ $\P$-a.s. for every $t \in [0,T]$ and for any $L^2$-admissible strategy $\delta$ such that $\tilde S_T^{\bar \delta} = \tilde S_T^{\delta}$ $\P$-a.s., $\bar \delta_u = \delta_u$ $\P$-a.s. for all $u \leqslant t$.
	\end{itemize}
\end{defn}

\begin{lemma}\label{lemma: benchmarked cost martingale}
	The benchmarked cumulative cost process of any benchmarked risk-minimizing strategy is a $(\P, \Gb)$-martingale.
\end{lemma}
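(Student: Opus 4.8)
The plan is to adapt the classical F\"ollmer--Sondermann argument, in the benchmarked form of \cite{Bia-Cre} and \cite{Bar}: a benchmarked risk-minimizing strategy must be \emph{mean-self-financing}, which is exactly the assertion of the lemma.

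First I would record that, since $\bar\delta$ is $L^2$-admissible and $A$ is square integrable, the process $C^{\bar\delta}=\tilde S^{\bar\delta}-\int_0^\cdot \bar\delta_u^\top\,\ud\tilde S_u+A$ is square integrable, so $\condespg{C^{\bar\delta}_T}$ is well defined, and it suffices to prove $\condespg{C^{\bar\delta}_T}=C^{\bar\delta}_t$ for every $t\in[0,T]$: by the tower property this gives $\condesp{\G_s}{C^{\bar\delta}_t}=C^{\bar\delta}_s$ for all $s\leqslant t$, i.e. $C^{\bar\delta}$ is a $(\P,\Gb)$-martingale. Since $\int_0^\cdot\bar\delta_u^\top\,\ud\tilde S_u\in\mathcal I(\tilde S,\P,\Gb)$ is itself a $\Gb$-martingale and $\tilde S^{\bar\delta}_T=0$, this identity is moreover equivalent to $\tilde S^{\bar\delta}_t=\condespg{A_T-A_t}$, so that the statement also says that the optimal benchmarked value process coincides with the benchmarked real world price (\ref{eq: pricing formula dividend}) of the payment stream.

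Fix $t\in[0,T)$ and put $\Delta_t:=\condespg{C^{\bar\delta}_T-C^{\bar\delta}_t}$, a $\G_t$-measurable, square integrable random variable; arguing by contradiction, assume $\P(\Delta_t\neq 0)>0$. The core step is to construct a competing $L^2$-admissible strategy $\delta$ that agrees with $\bar\delta$ on $[0,t]$, still satisfies $\tilde S^\delta_T=0=\tilde S^{\bar\delta}_T$, and has cost increment
\[
	C^\delta_T-C^\delta_t=\big(C^{\bar\delta}_T-C^{\bar\delta}_t\big)-\Delta_t .
\]
Such a $\delta$ is obtained by perturbing $\bar\delta$ only in the ``riskless'' direction over $(t,T]$: since the benchmark portfolio lies in $\mathcal V^+_1$ and is self-financing, the constant benchmarked value $1$ is attainable by trading in $\tilde S$, and one superimposes on $(t,T]$ a $\G_t$-measurable multiple of the corresponding benchmark-replicating strategy so as to shift the benchmarked value process, and hence the cost, over $(t,T]$ by $-\Delta_t$, while leaving $\int_0^\cdot\bar\delta_u^\top\,\ud\tilde S_u$, the dividend process $A$, and the terminal value $\tilde S^\delta_T=0$ unchanged; a preliminary truncation of $\Delta_t$ keeps $\delta\in L^2(\tilde S,\P,\Gb)$ and is removed by a standard limiting argument. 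Granting this, and using $\condespg{C^{\bar\delta}_T-C^{\bar\delta}_t}=\Delta_t$ with $\Delta_t$ being $\G_t$-measurable, one gets
\[
	R^\delta_t=\condespg{\big(C^{\bar\delta}_T-C^{\bar\delta}_t-\Delta_t\big)^2}=R^{\bar\delta}_t-\Delta_t^2 ,
\]
hence $R^\delta_t\leqslant R^{\bar\delta}_t$, strictly on the positive-probability set $\{\Delta_t\neq 0\}$, contradicting the benchmarked risk-minimality of $\bar\delta$. Therefore $\Delta_t=0$ $\P$-a.s.; as $t$ is arbitrary, $C^{\bar\delta}$ is a $(\P,\Gb)$-martingale.

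The main obstacle is precisely this competitor construction: exhibiting, within the class of $L^2$-admissible strategies, a perturbation of $\bar\delta$ supported on $(t,T]$ that produces the exact cost shift $-\Delta_t$ without disturbing the agreement with $\bar\delta$ on $[0,t]$ or the terminal constraint $\tilde S^\delta_T=0$. This is a careful bookkeeping argument resting on the num\'eraire-invariance of self-financing portfolios and on the freedom of non-self-financing admissible strategies to modify the benchmarked value process over $(t,T]$ by a $\G_t$-measurable amount against a single cash flow localised at $t$; the measurability and integrability of $\Delta_t$ needed here are handled by the truncation step.
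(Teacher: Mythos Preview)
Your argument is correct and is precisely the classical F\"ollmer--Sondermann mean-self-financing argument that underlies the references the paper invokes (Lemma 3.5 of \cite{Bia-Cre} and Lemma A.4 of \cite{Mol-i}); the paper itself gives no direct proof beyond that citation. One small point in the competitor construction: superimposing $\Delta_t\,\mathbf{1}_{(t,T]}\,\delta^*$ shifts the benchmarked value process on all of $(t,T]$, so the terminal constraint $\tilde S^\delta_T=0$ is not preserved automatically but has to be re-imposed as a separate choice---which is legitimate here since the paper's definition pins down only the left limit $\tilde S^\delta_{T-}=\delta_T^\top\tilde S_T$, leaving the c\`adl\`ag value at $T$ free---and this is exactly the bookkeeping you rightly flag as the main obstacle.
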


\begin{proof}
	This lemma is a combination of Lemma 3.5 of \cite{Bia-Cre} and Lemma A.4 of \cite{Mol-i}.
\end{proof}

\begin{lemma}\label{lemma: benchmarked risk-minimizing value process}
	The benchmarked value process associated to a benchmarked risk-minimizing strategy $\bar \delta$ for $A$ is given by
	\[
		\tilde S_t^{\bar \delta} = \condespg{A_T - A_t}, \ \ \ \ t \in [0,T].
	\]
\end{lemma}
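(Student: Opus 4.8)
The plan is to extract $\tilde S^{\bar\delta}$ from the definition of the benchmarked cumulative cost process and then exploit the two facts already at our disposal for a benchmarked risk-minimizing strategy $\bar\delta$: that $\tilde S_T^{\bar\delta}=0$ $\P$-a.s. (part (1) of the definition of benchmarked risk-minimizing strategy) and that $C^{\bar\delta}$ is a $(\P,\Gb)$-martingale (Lemma \ref{lemma: benchmarked cost martingale}). The only additional ingredient is the martingale property of the gain term $\int_0^\cdot \bar\delta_u^\top\,\ud\tilde S_u$, which is built into $L^2$-admissibility.

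Concretely, I would first evaluate the cost identity $C_t^{\bar\delta}=\tilde S_t^{\bar\delta}-\int_0^t\bar\delta_u^\top\,\ud\tilde S_u+A_t$ at the terminal time, where $\tilde S_T^{\bar\delta}=0$ gives
\[
	C_T^{\bar\delta}=A_T-\int_0^T\bar\delta_u^\top\,\ud\tilde S_u,
\]
and then solve the same identity at a generic $t$ for the benchmarked value,
\[
	\tilde S_t^{\bar\delta}=C_t^{\bar\delta}+\int_0^t\bar\delta_u^\top\,\ud\tilde S_u-A_t.
\]
Next I would take $\G_t$-conditional expectations. Since $\bar\delta\in L^2(\tilde S,\P,\Gb)$, the stochastic integral $\int_0^\cdot\bar\delta_u^\top\,\ud\tilde S_u$ lies in $\mathcal I(\tilde S,\P,\Gb)\subseteq M_0^2(\P,\Gb)$ and is therefore a genuine square-integrable $(\P,\Gb)$-martingale, so $\condespg{\int_0^T\bar\delta_u^\top\,\ud\tilde S_u}=\int_0^t\bar\delta_u^\top\,\ud\tilde S_u$; by Lemma \ref{lemma: benchmarked cost martingale}, $C_t^{\bar\delta}=\condespg{C_T^{\bar\delta}}$. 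Substituting the expression for $C_T^{\bar\delta}$, the two gain terms cancel and
\[
	\tilde S_t^{\bar\delta}=\condespg{A_T}-\condespg{\int_0^T\bar\delta_u^\top\,\ud\tilde S_u}+\int_0^t\bar\delta_u^\top\,\ud\tilde S_u-A_t=\condespg{A_T}-A_t=\condespg{A_T-A_t},
\]
the last equality because $A_t$ is $\G_t$-measurable and integrable (as $A$ is square integrable).

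I do not expect any real obstacle; the single point that needs to be stated carefully is that $\int_0^\cdot\bar\delta_u^\top\,\ud\tilde S_u$ is a true martingale rather than merely a local one, so that conditional expectations may be taken freely. This is exactly what the admissibility requirement $\bar\delta\in L^2(\tilde S,\P,\Gb)$, together with the identity $\norm{\int_0^T\bar\delta_u^\top\,\ud\tilde S_u}_{M_0^2(\P,\Gb)}=\norm{\bar\delta}_{L^2(\tilde S,\P,\Gb)}<\infty$ recalled just before the definition of $L^2$-admissible strategy, is designed to guarantee; likewise, admissibility ensures $\tilde S^{\bar\delta}\in M_0^2(\P,\Gb)$ so that all the terms above are square integrable. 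Everything else is bookkeeping with the defining identities.
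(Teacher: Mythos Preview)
Your argument is correct and is essentially the same as the paper's: the paper's proof simply invokes Lemma~\ref{lemma: benchmarked cost martingale} together with Lemma~3.12 of \cite{Mol-i}, and what you have written is exactly the unpacking of that citation---using $\tilde S_T^{\bar\delta}=0$, the martingale property of $C^{\bar\delta}$, and the fact that $\int_0^\cdot \bar\delta_u^\top\,\ud\tilde S_u\in\mathcal I(\tilde S,\P,\Gb)\subseteq M_0^2(\P,\Gb)$ is a true martingale. There is no gap.
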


\begin{proof}
	The proof is a straightforward consequence of Lemma \ref{lemma: benchmarked cost martingale} and Lemma 3.12 of \cite{Mol-i}.
\end{proof}

The following theorem gives the solution of the benchmarked risk-minimizing problem.
\begin{theorem}
	Let 
	\begin{equation}\label{eq: GKW decomposition}
		A_T = \mathbb{E}[A_T] + \int_0^T \left( \delta^A_u \right)^\top \ud \tilde S_u + L^A_T, \ \ \ \ \P-\text{a.s.},
	\end{equation}
	be the Galtchouk-Kunita-Watanabe decomposition\footnote{See \cite{Ans-Str} for an overview of Galtchouk-Kunita-Watanabe decomposition.} of $A_T$,
	where $\int_0^T \delta^A_u \ud \tilde S_u$ is the projection of $\left(A_T - \mathbb{E}[A_T]\right)$ on the space $\mathcal{I}(\tilde S, \P, \Gb)$ with $\delta^A \in L^2(\tilde S, \P, \Gb)$ and $L^A \in M_0^2(\P, \Gb)$ is $\P$-strongly orthogonal to $\mathcal{I}(\tilde S, \P, \Gb)$. 
	The unique benchmarked risk-minimizing strategy $\bar \delta$ for $A$ is given by $\bar \delta = \delta^A$. The associated benchmarked cumulative cost process is 
	\[
		C_t^{\bar \delta} = \esp{A_T} + L^A_t = C_0^{\bar \delta} + L^A_t, \ \ \ \ t \in [0,T],
	\]
	and the benchmarked value process is
	\[
		\tilde S_t^{\bar \delta} = \condespg{A_T - A_t}, \ \ \ \ t \in [0,T].
	\]
\end{theorem}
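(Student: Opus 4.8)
The plan is to verify directly that $\bar\delta:=\delta^A$ meets the definition of a benchmarked risk-minimizing strategy for $A$, and then to invoke Lemma~\ref{lemma: benchmarked risk-minimizing value process} to force uniqueness; the only substantial inputs beyond bookkeeping are the Galtchouk--Kunita--Watanabe decomposition (\ref{eq: GKW decomposition}) and the (conditional) strong orthogonality of $L^A$. First I would set $\bar\delta:=\delta^A$ and prescribe its benchmarked value process as $\tilde S^{\bar\delta}_t:=\condespg{A_T-A_t}$. The $L^2$-admissibility is routine: $\bar\delta\in L^2(\tilde S,\P,\Gb)$ is part of (\ref{eq: GKW decomposition}), and $\tilde S^{\bar\delta}_t=\condespg{A_T}-A_t$ is square integrable because $(\condespg{A_T})_t$ is an $L^2$-bounded martingale and $A$ is square integrable. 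Substituting into the definition of the cost process and conditioning (\ref{eq: GKW decomposition}) on $\G_t$,
\[
C^{\bar\delta}_t=\tilde S^{\bar\delta}_t-\int_0^t\bar\delta_u^\top\ud\tilde S_u+A_t=\condespg{A_T}-\int_0^t\bar\delta_u^\top\ud\tilde S_u=\esp{A_T}+L^A_t,
\]
so $C^{\bar\delta}-\esp{A_T}\in M_0^2(\P,\Gb)$, and $\tilde S^{\bar\delta}_T=\condespg[\G_T]{A_T-A_T}=0$, which is condition (1).

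Next I would prove the minimality condition (2) together with the identity that drives the whole argument. Let $\delta$ be any $L^2$-admissible strategy with $\tilde S^{\delta}_T=0$ agreeing with $\bar\delta$ on $[0,t]$, so in particular $\tilde S^{\delta}_t=\tilde S^{\bar\delta}_t=\condespg{A_T-A_t}$. From $\tilde S^{\delta}_T=0$ and the definition of the cost process,
\[
C^{\delta}_T-C^{\delta}_t=A_T-A_t-\tilde S^{\delta}_t-\int_t^T\delta_u^\top\ud\tilde S_u=\big(A_T-\condespg{A_T}\big)-\int_t^T\delta_u^\top\ud\tilde S_u,
\]
and substituting $A_T-\condespg{A_T}=\int_t^T(\delta^A_u)^\top\ud\tilde S_u+(L^A_T-L^A_t)$ (which follows from (\ref{eq: GKW decomposition}) by conditioning on $\G_t$) gives the key identity
\[
C^{\delta}_T-C^{\delta}_t=\int_t^T\big(\delta^A_u-\delta_u\big)^\top\ud\tilde S_u+\big(L^A_T-L^A_t\big).
\]
Since $L^A$ is strongly orthogonal to the stable subspace $\mathcal I(\tilde S,\P,\Gb)$, the cross term in $\condespg{(C^{\delta}_T-C^{\delta}_t)^2}$ vanishes, whence
\[
R^{\delta}_t=\condespg{\Big(\int_t^T(\delta^A_u-\delta_u)^\top\ud\tilde S_u\Big)^2}+\condespg{\big(L^A_T-L^A_t\big)^2}\geqslant\condespg{\big(L^A_T-L^A_t\big)^2}=R^{\bar\delta}_t,
\]
with equality iff $\delta=\delta^A$ on $(t,T]$ in $L^2(\tilde S,\P,\Gb)$; this is condition (2), so $\bar\delta=\delta^A$ is benchmarked risk-minimizing.

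For uniqueness, let $\bar\delta'$ be any benchmarked risk-minimizing strategy for $A$. By Lemma~\ref{lemma: benchmarked risk-minimizing value process} its benchmarked value process is $\tilde S^{\bar\delta'}_t=\condespg{A_T-A_t}$, so the identity above (whose derivation used only $\tilde S^{\bar\delta'}_T=0$ and $\tilde S^{\bar\delta'}_t=\condespg{A_T-A_t}$, not any agreement of strategies) applies with $\delta$ replaced by $\bar\delta'$, and at $t=0$ gives
\[
R^{\bar\delta'}_0=\esp{\Big(\int_0^T(\delta^A_u-\bar\delta'_u)^\top\ud\tilde S_u\Big)^2}+\esp{\big(L^A_T\big)^2}.
\]
On the other hand, minimality of $\bar\delta'$ tested at $t=0$ against the strategy $\delta^A$ (which shares the terminal value $0$) gives $R^{\bar\delta'}_0\leqslant R^{\delta^A}_0=\esp{(L^A_T)^2}$, hence $\esp{\big(\int_0^T(\delta^A_u-\bar\delta'_u)^\top\ud\tilde S_u\big)^2}=0$, i.e.\ $\norm{\delta^A-\bar\delta'}_{L^2(\tilde S,\P,\Gb)}=0$ and $\bar\delta'=\delta^A$ in $L^2(\tilde S,\P,\Gb)$. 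The stated formulas for $C^{\bar\delta}$ and $\tilde S^{\bar\delta}$ are then exactly those established in the first paragraph, consistently with Lemmas~\ref{lemma: benchmarked cost martingale} and \ref{lemma: benchmarked risk-minimizing value process}.

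I expect the main obstacle to lie not in the algebraic skeleton --- which is immediate from (\ref{eq: GKW decomposition}) --- but in two measure-theoretic points: (a) attaching a well-defined benchmarked value process to $\bar\delta$ and to the comparison strategies, and justifying $\tilde S^{\delta}_t=\tilde S^{\bar\delta}_t$ when two strategies agree on $[0,t]$, along with the requisite integrability ($\delta^A\in L^2(\tilde S,\P,\Gb)$, $L^A\in M_0^2(\P,\Gb)$, $A$ square integrable); and (b) upgrading ``$L^A$ strongly orthogonal to $\mathcal I(\tilde S,\P,\Gb)$ in $M_0^2(\P,\Gb)$'' to the vanishing of the $\G_t$-conditional cross term $\condespg{\big(\int_t^T(\delta^A_u-\delta_u)^\top\ud\tilde S_u\big)\big(L^A_T-L^A_t\big)}$, which rests on $\mathcal I(\tilde S,\P,\Gb)$ being a stable subspace, as recalled just before the definition of $L^2$-admissibility.
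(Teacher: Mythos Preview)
Your proposal is correct, and the algebra is exactly the standard one. The paper, however, does not carry out this verification in-line: its entire proof reads ``The theorem follows from Lemma~\ref{lemma: benchmarked cost martingale} and Theorem 2.1 of \cite{Mol-i}.'' In other words, the paper delegates the whole existence/minimality/uniqueness argument to M{\o}ller's risk-minimization theorem for payment streams, observing only that the martingale property of the cost process (Lemma~\ref{lemma: benchmarked cost martingale}) carries over to the benchmarked setting. What you have written is essentially a reconstruction of M{\o}ller's Theorem 2.1 in this benchmarked context: you build the candidate strategy from the GKW integrand, compute its cost, and use the conditional orthogonality of $L^A$ to the stable subspace $\mathcal I(\tilde S,\P,\Gb)$ to obtain the Pythagorean splitting of $R^\delta_t$. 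Your uniqueness argument via Lemma~\ref{lemma: benchmarked risk-minimizing value process} plus the $t=0$ comparison is also the standard one. So the two approaches differ only in packaging: the paper cites the black box, you open it. Your version is more self-contained and makes explicit where strong orthogonality and stability are used; the paper's version is one line. The measure-theoretic caveats you flag in your last paragraph --- in particular that agreement of strategies on $[0,t]$ forces agreement of value processes at $t$, and that strong orthogonality upgrades to vanishing of the conditional cross term --- are precisely the points absorbed into M{\o}ller's result, and are handled there in the same way you sketch.
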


\begin{proof}
	The theorem follows from Lemma \ref{lemma: benchmarked cost martingale} and Theorem 2.1 of \cite{Mol-i}.
\end{proof}

As the classic risk-minimizing strategy, the crucial point of the solution of the benchmarked risk-minimizing strategy is finding the Galtchouk-Kunita-Watanabe decomposition (\ref{eq: GKW decomposition}). 

We stress that, the orthogonal projection provided by the decomposition (\ref{eq: GKW decomposition}) shows that every benchmarked cumulative payment $A_T$ has a perfectly hedgeable part  $\int_0^{T} \left( \delta^A_u \right)^\top \ud \tilde S_u$ and a totally unhedgeable part $\left( \esp{A_T} + L^A_T \right)$ covered by the benchmarked cumulative cost process $C$. Furthermore, according to Lemma \ref{lemma: benchmarked risk-minimizing value process}, the benchmarked value process associated to the unique benchmarked risk-minimizing strategy $\bar \delta$ for a given benchmarked cumulative payment process $A$ coincides with the benchmarked value of the real world pricing formula given in (\ref{eq: pricing formula dividend}), i.e.
\[
	\tilde S_t^{\bar \delta} = \frac{V_t}{S^*_t}, \ \ \ \ t \in [0, T].
\]
The benchmarked hedging problem and its relation with the real world pricing formula have already been studied in \cite{Bia} and \cite{Bia-Cre} in the case of a $T$-claims $\bar D$, i.e. when the dividend process $D$ is given by
\[
	D_t = \Ind{t = T} \bar D, \ \ \ \ t \in [0,T],
\]
with $\bar D$ a square integrable $\G$-measurable random variable. The real world pricing formula is reduced to 
\[
	V_t = S_t^* \condespg{\frac{\bar D}{S^*_T}}, \ \ \ \  t \in [0,T[,
\]
which is the original definition of fair price given in e.g. \cite{Pla3} for a $T$-claim $\bar D$. In this case, if the $T$-claim admits a self-financing strategy, then thanks to the supermartingale property of the benchmark portfolio in Definition (\ref{def: benchmark}), $V$ corresponds to the least expensive self-financing portfolio which replicates $\bar D$.

\bibliographystyle{plain}
\bibliography{bibliog}

\end{document}